\documentclass[10pt]{article}

\usepackage{hyperref}
\usepackage{natbib}
\usepackage{fullpage}
\usepackage{graphicx}
\usepackage{graphics}
\usepackage{mdwlist}
\usepackage[top=2.5cm, bottom=2.5cm, left=2.5cm, right=2.5cm]{geometry}

\usepackage{amssymb,amsmath}
\usepackage{xspace}
\usepackage[linesnumbered, ruled]{algorithm2e} \DontPrintSemicolon
\usepackage{caption}
\usepackage[caption=false]{subfig}
\usepackage{graphicx}
\usepackage{mdwlist}
\usepackage{booktabs}
\usepackage{color}
\usepackage{xfrac}
\usepackage{enumitem}
\usepackage{times}

\setlength{\oddsidemargin}{0.0in}
\setlength{\evensidemargin}{0.0in}
\setlength{\textwidth}{6.5in}
\setlength{\headheight}{0.0in}
\setlength{\topmargin}{0.0in}
\setlength{\textheight}{9in}
\addtolength{\textheight}{-\topmargin}
\addtolength{\textheight}{-\headheight}
\addtolength{\textheight}{-\headsep}
\addtolength{\textheight}{-\footskip}

\begin{document}

\title{FURL: Fixed-memory and Uncertainty Reducing Local Triangle Counting for Graph Streams}

\author{ {Minsoo Jung, Sunmin Lee, Yongsub Lim, and U Kang} \\
	    Seoul National University\\
    }
	    	%
\date{}

\let\oldnl\nl
\newcommand{\nonl}{\renewcommand{\nl}{\let\nl\oldnl}}

\newcommand{\expect}[1]{\mathbb{E}\left[#1\right]}
\newcommand{\var}[1]{\mathrm{Var}\left[#1\right]}
\newcommand{\cov}[1]{\mathrm{Cov}\left[#1\right]}

\newcommand{\etal}{et al.\xspace}
\newcommand{\eg}{e.g.,\xspace}
\newcommand{\ie}{i.e.,\xspace}

\newcommand{\natset}{\mathbb{N}}
\newcommand{\realset}{\mathbb{R}}
\newcommand{\prob}[1]{\Pr\left[ #1 \right]}

\newcommand{\algref}[1]{Algorithm~\ref{#1}}
\newcommand{\figref}[1]{\figurename~\ref{#1}}
\newcommand{\tabref}[1]{\tablename~\ref{#1}}
\newcommand{\lemref}[1]{Lemma~\ref{#1}}
\newcommand{\eqnumref}[1]{(\ref{#1})}
\newcommand{\secref}[1]{Section~\ref{#1}}
\newcommand{\obsref}[1]{Observation~\ref{#1}}
\newcommand{\lineref}[1]{Line~\ref{#1}}
\newcommand{\linesref}[1]{Lines~\ref{#1}}

\newcommand{\chng}[1]{\textcolor{black}{#1}}
\newcommand{\mchng}[1]{\textcolor{black}{#1}}

\newcommand{\hide}[1]{}
\newcommand{\reminder}[1]{\textcolor{red}{\textsf{\small #1}}}

\newtheorem{lemma}{Lemma}
\newtheorem{proof}{Proof}
\newtheorem{observation}{Observation}
\newtheorem{corollary}{Corollary}

\newcommand{\doulion}{{\sc Doulion}\xspace}

\newcommand{\kutzcov}{{\sc KP}\xspace}

\newcommand{\mre}{{\sf MRE}\xspace}
\newcommand{\pcc}{{\sf PCC}\xspace}

\newcommand{\mascot}{{\sc Mascot}\xspace}
\newcommand{\mascotrm}{MASCOT\xspace}
\newcommand{\mascotc}{{\sc Mascot-C}\xspace}
\newcommand{\mascotcrm}{MASCOT-C\xspace}

\newcommand{\triest}{{\sc Triest}\xspace}
\newcommand{\triestrm}{TRIEST\xspace}

\newcommand{\method}{{\sc Furl}\xspace}
\newcommand{\methodrm}{FURL\xspace}

\newcommand{\methodc}{{\sc Furl-S}\xspace}
\newcommand{\methodcrm}{FURL-S\xspace}
\newcommand{\methodm}{{\sc Furl-M}\xspace}
\newcommand{\methodmrm}{FURL-M\xspace}
\newcommand{\methodb}{{\sc Furl-M$_\textsc{B}$}\xspace}
\newcommand{\methodbrb}{FURL-M$_\text{B}$\xspace}
\newcommand{\methodw}{{\sc Furl-M$_\textsc{W}$}\xspace}
\newcommand{\methodwrw}{FURL-M$_\text{W}$\xspace}

\newcommand{\methodcx}{{\sc Furl-SX}\xspace}
\newcommand{\methodcxrm}{FURL-SX\xspace}
\newcommand{\methodmx}{{\sc Furl-MX}\xspace}
\newcommand{\methodmxrm}{FURL-MX\xspace}
\newcommand{\methodbx}{{\sc Furl-MX$_\textsc{B}$}\xspace}
\newcommand{\methodbxrb}{FURL-MX$_\text{B}$\xspace}
\newcommand{\methodwx}{{\sc Furl-MX$_\textsc{W}$}\xspace}
\newcommand{\methodwxrw}{FURL-MX$_\text{W}$\xspace}

\newcommand{\youtube}{\textsf{Youtube}\xspace}
\newcommand{\hudong}{\textsf{Hudong}\xspace}
\newcommand{\pokec}{\textsf{Pokec}\xspace}
\newcommand{\skitter}{\textsf{Skitter}\xspace}
\newcommand{\yahoo}{\textsf{YahooMsg}\xspace}
\newcommand{\webgraph}{\textsf{WebGraph}\xspace}
\newcommand{\twitter}{\textsf{Twitter}\xspace}
\newcommand{\facebook}{\textsf{Facebook}\xspace}
\newcommand{\actor}{\textsf{Actor}\xspace}
\newcommand{\baidu}{\textsf{Baidu}\xspace}
\newcommand{\dblp}{\textsf{DBLP-M}\xspace}

\newcommand{\chngwww}[1]{\textcolor{black}{#1}} 

\newcommand{\qed}{\noindent $\square$} 

\maketitle
\begin{abstract}
    
How can we accurately estimate local triangles for all nodes in simple and multigraph streams?
Local triangle counting in a graph stream is one of the most fundamental tasks in graph mining with important applications including anomaly detection and social network analysis.
Although there have been several local triangle counting methods in a graph stream, their estimation has a large variance which results in low accuracy, and they
do not consider multigraph streams which have duplicate edges.
In this paper, we propose \method, an accurate local triangle counting method for simple and multigraph streams.
\method improves the accuracy by reducing a variance through biased estimation and handles duplicate edges for multigraph streams. 
Also, \method handles a stream of any size by using a fixed amount of memory.
Experimental results show that \method outperforms the state-of-the-art method in accuracy and performs well in multigraph streams.
In addition, we report interesting patterns discovered from real graphs by \method, which include unusual local structures in a user communication network and a core-periphery structure in the Web.
\end{abstract}

\section{Introduction}
How can we accurately estimate local triangles for all nodes in simple and multigraph streams?
The \emph{local triangle counting} problem is to count the number of triangles containing each node in a graph, and has been extensively studied because of its wide and important applications.
For instance, it has been used for 
social role identification of a user~\cite{WelserGFS07, chou2010discovering}, content quality evaluation~\cite{BecchettiBCG10}, 
data-driven anomaly detection~\cite{BecchettiBCG10, YangWWGZD11, LimK15}, 
community detection~\cite{BerryHLP11, suri2011counting},
motif detection~\cite{MilEtAl02}, 
clustering ego-networks~\cite{epasto2015ego}, and uncovering hidden thematic layers~\cite{EckmannM02}.

Although a number of local triangle counting methods have been successfully applied to graph streams, there are two challenging issues.
First, their large variance causes low accuracy.
In a graph stream model where edges continuously arrive, 
only one trial is allowed and a large variance causes large difference between estimated and true local triangle counts.
Thus previous local triangle counting methods do not produce stable results and show bad worst case performance.
Second, recent real world graph streams contain duplicate edges, i.e., they are multigraph streams.
Examples include a communication network in Internet, a phone call history, SNS messages like tweets, an email network, etc.
In such environments it is crucial to carefully handle duplicate edges in local triangle counting.

In this paper, we propose \method, 
an accurate local triangle counting method for simple and multigraph streams.
\method guarantees the fixed amount of memory usage, regardless of the size of a graph stream.
\method has two main versions: \methodcx and \methodmx for simple and multigraph streams, respectively. 
\methodcx improves the accuracy of the previous state-of-the-art by reducing a variance through biased estimation. 
\methodmx, the first algorithm for local triangle counting in multigraph streams, carefully handles multigraph streams by using reservoir sampling with random hash to sample distinct items uniformly at random, and reduces a variance of estimation. 

\begin{figure*}
	\begin{center}
		\subfloat[\youtube]
		{\includegraphics[width=0.235\textwidth,natwidth=610,natheight=642]{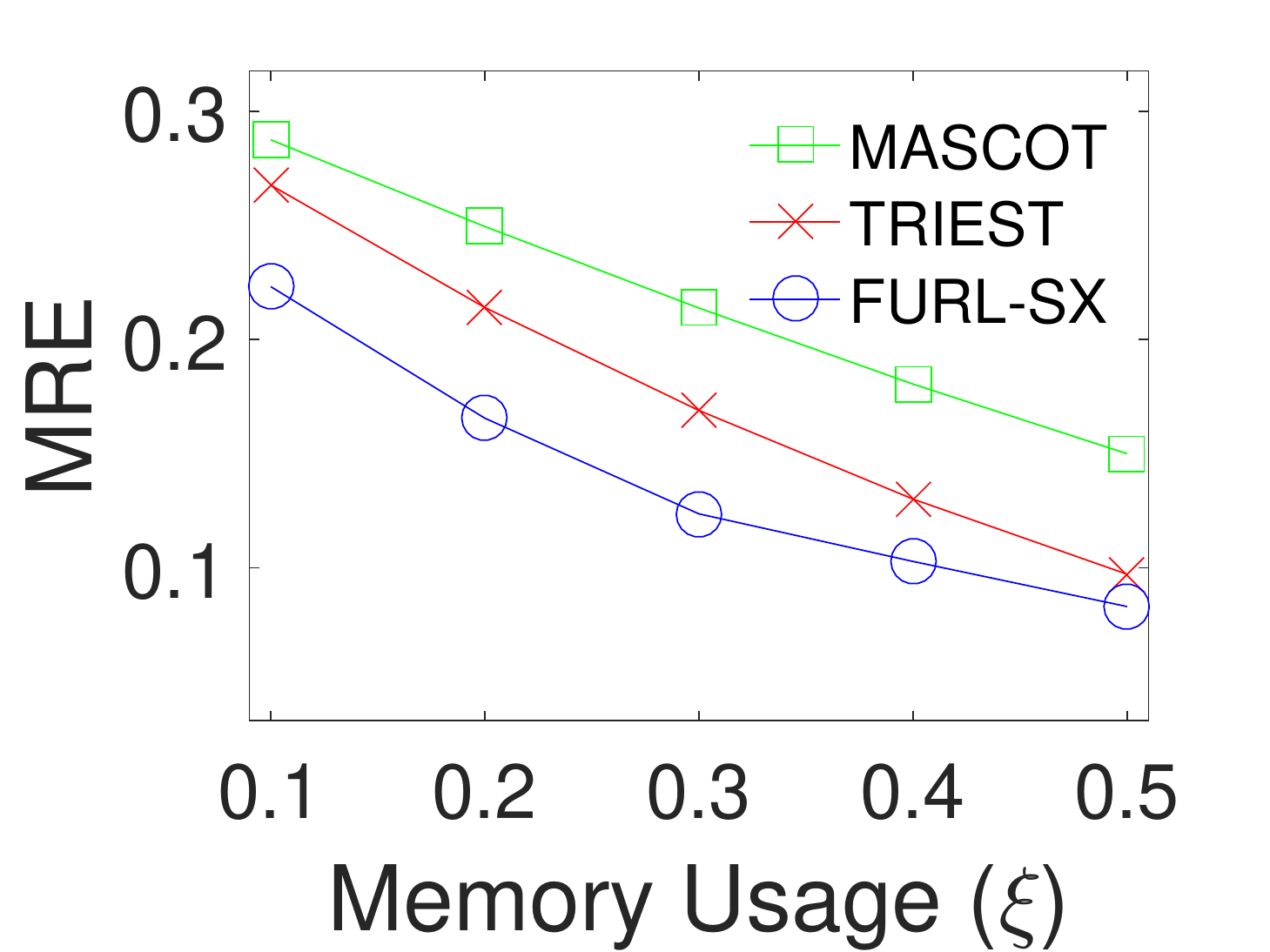} \label{fig:sim_err_youtube}} ~~~~~
		\subfloat[\pokec]
		{\includegraphics[width=0.235\textwidth,natwidth=610,natheight=642]{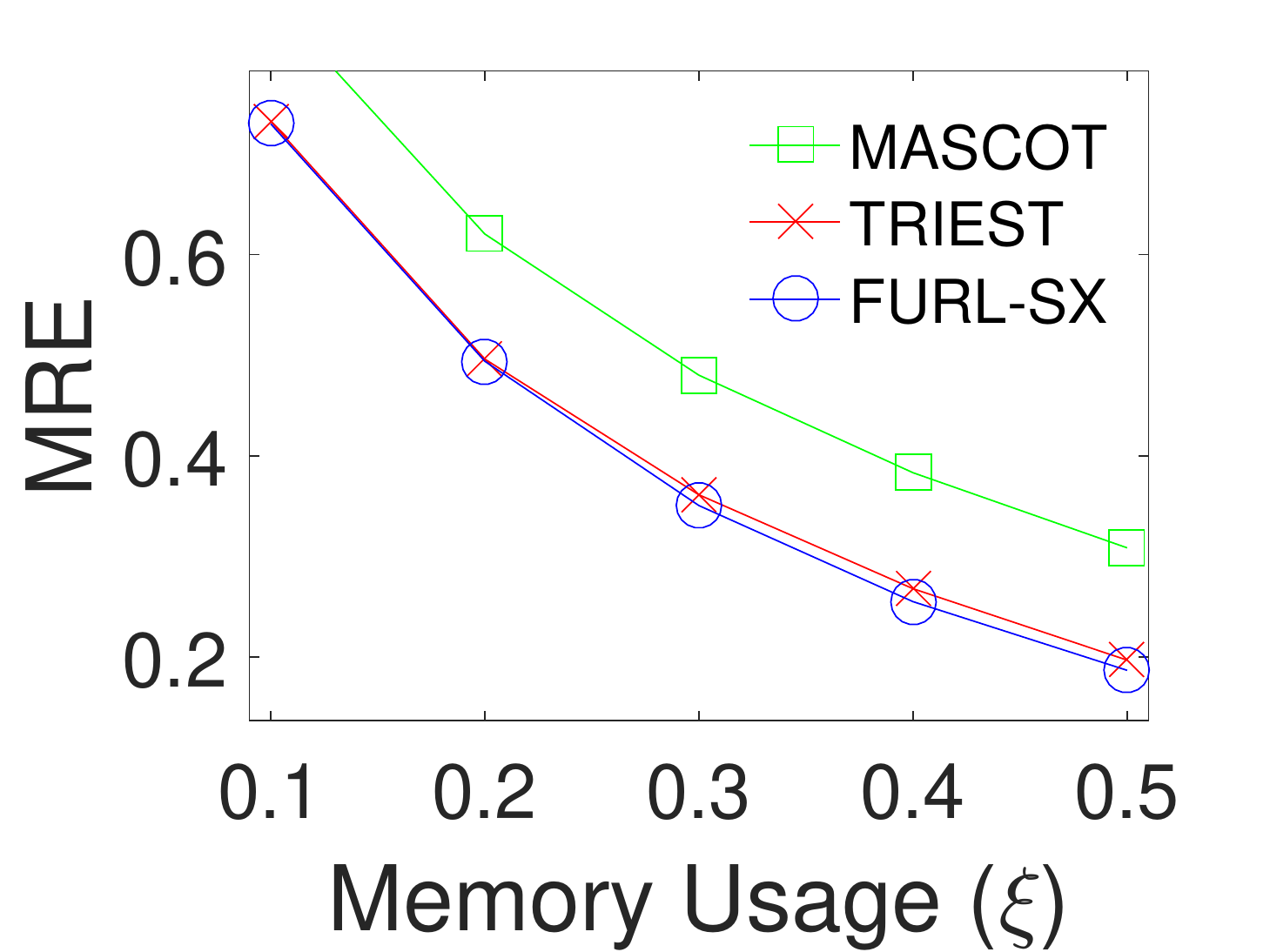} \label{fig:sim_err_pokec}} ~~~~~
		\subfloat[\skitter]
		{\includegraphics[width=0.235\textwidth,natwidth=610,natheight=642]{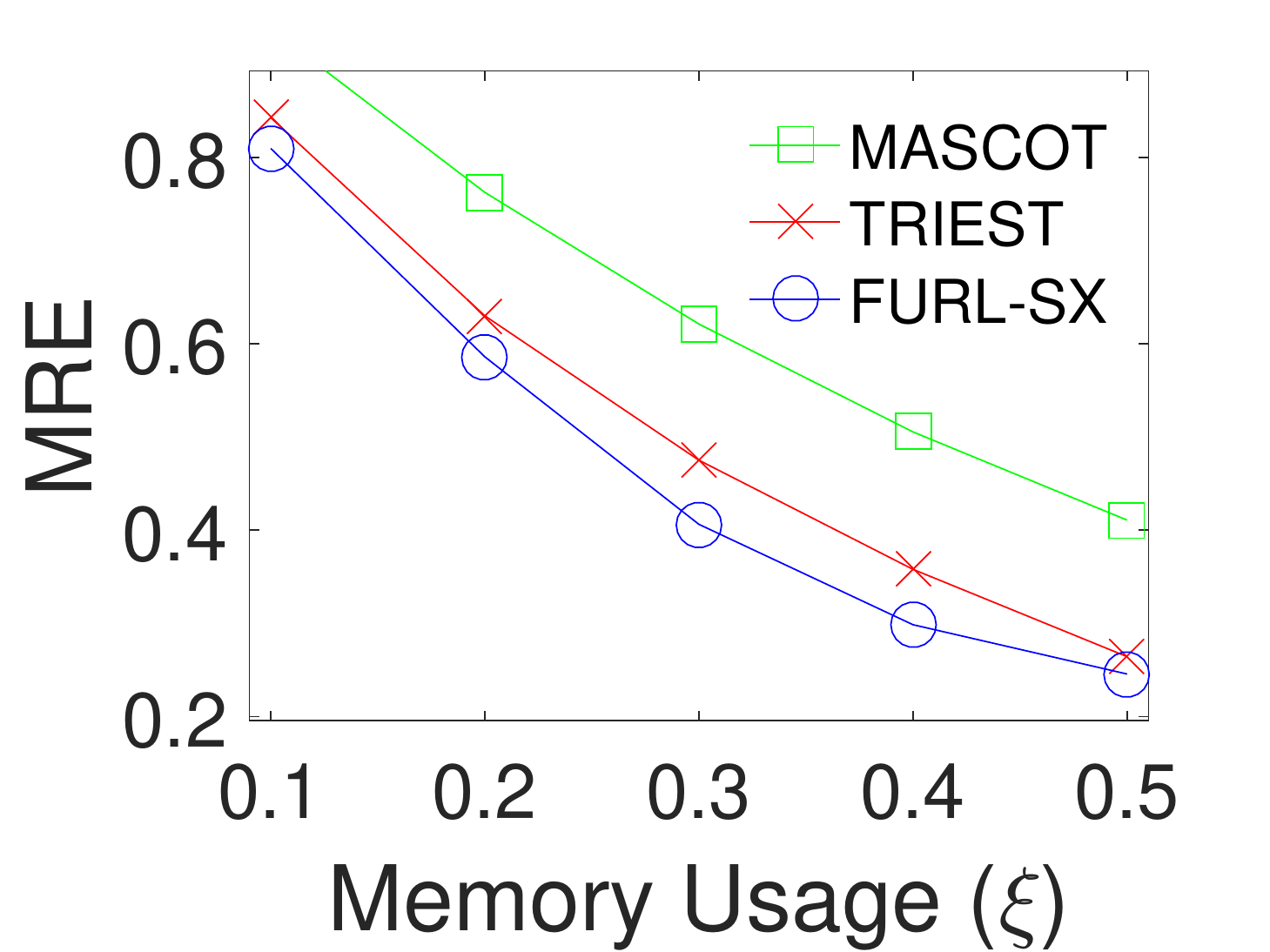} \label{fig:sim_cor_skitter}} ~~~~~
		\subfloat[\hudong]
		{\includegraphics[width=0.235\textwidth,natwidth=610,natheight=642]{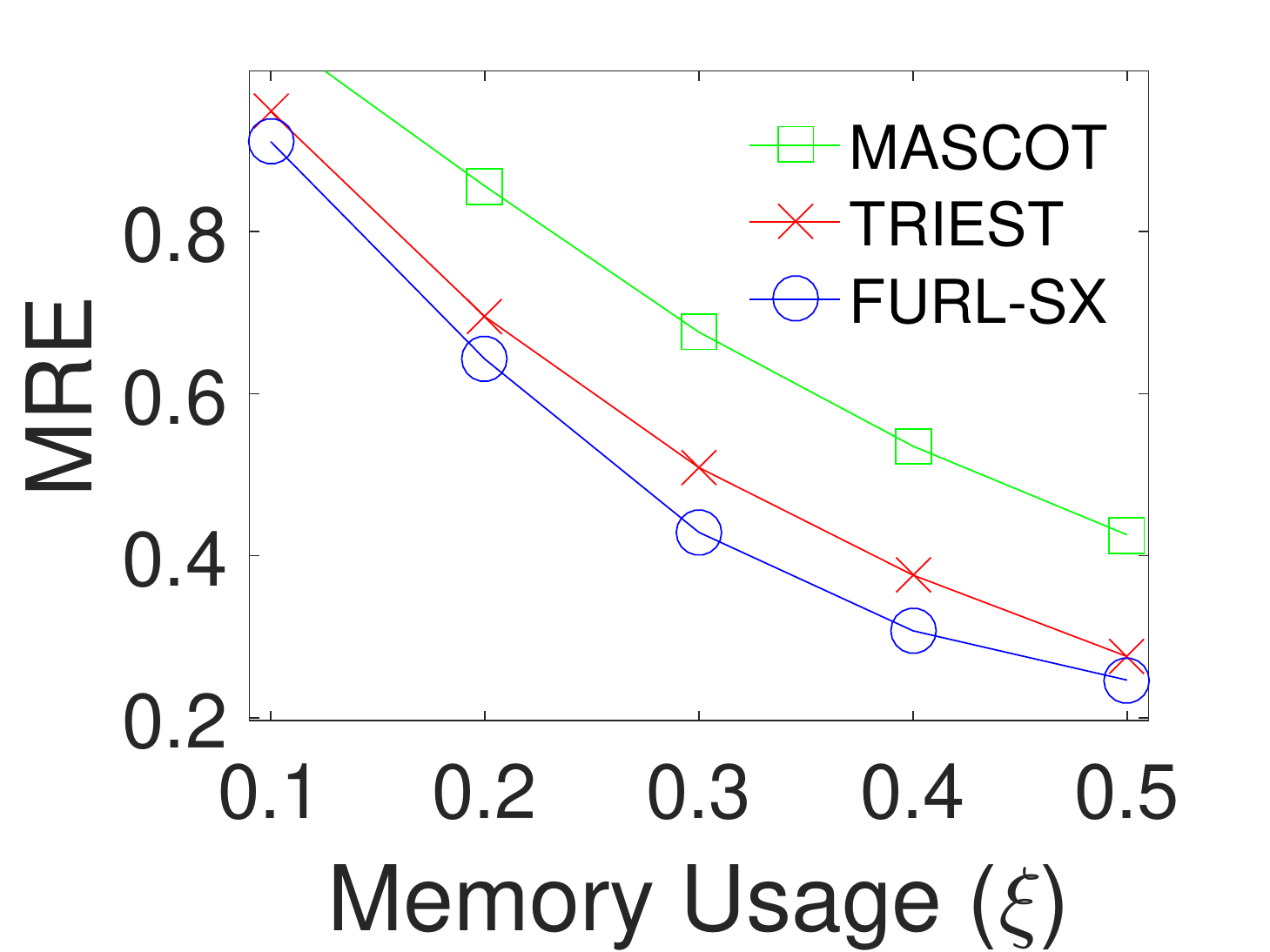} \label{fig:sim_cor_hudong}}
		 \vspace{-2mm}
		\caption{ Mean of relative error (MRE) vs.~memory usage of \methodcx and \chngwww{two competing methods \mascot and \triest} for local triangle counting in simple graph stream.
			We choose 0.7, 0.1, 0.4, and 0.4 as $\delta$ in \youtube, \pokec, \skitter, and \hudong, respectively.
			\methodcx gives the minimum error for a given memory usage. 
		}
		\label{fig:methodcx result}
	\end{center}
\vspace{-5mm}
\end{figure*}



Our contributions are summarized as follows.

\begin{itemize*}
	\item \textbf{Algorithm.}
	We propose \method, an accurate local triangle counting algorithm for simple and multigraph streams. \method uses a fixed amount of memory, and thus it can handle streams of any sizes.
	\method improves the accuracy of previous simple graph stream algorithms by decreasing variance; furthermore, \method handles multigraph streams by carefully updating triangle counts.
	We give theoretical results on the accuracy bound of \method.
	\item \textbf{Performance.}
	We demonstrate that \methodcx, a version of \method for simple graph stream, outperforms the state-of-the-art method as shown in Figure~\ref{fig:methodcx result}. We also show that \methodmx, a version of \method for multigraph stream, performs well in multigraph streams.
	%
	\item \textbf{Discovery.}
	We present interesting patterns discovered by applying \method to large real graph data, which include egonetworks forming a near clique/star in a user communication network, and a core-periphery structure in the Web.
\end{itemize*}

The code and datasets used in the paper are available in~\url{http://datalab.snu.ac.kr/furl}.
The rest of this paper is organized as follows.
\secref{sec:background} gives related works and preliminaries.
\secref{sec:method} describes our proposed method \method.
In~\secref{sec:experiment}, we evaluate the performance of \method and give discovery results.
We conclude in~\secref{sec:conclusion}.
\tabref{tab:symbol} shows the symbols used in this paper.

\begin{table}[t!]
	\caption{Table of symbols.}
	\vspace{-5mm}
	\label{tab:symbol}
	\begin{center}\scriptsize
		\begin{tabular}{ll} \toprule
			\textbf{Symbol} & \textbf{Description} \\ \midrule
			$S$ & Graph stream\\
			$V,E$ & Set of nodes and edges \\
			$u,v,w$ & Nodes \\
			$e$ & Edges \\
			$N_u$ & Set of neighbors of $u$ \\
			$N_{uv}$ & Set of common neighbors of $u$ and $v$ \\
			$n,m$ & Number of nodes and edges \\
			$O_e$ & Number of occurence of an edge $e$ in multigraph stream \\
			$u(T)$ & Number of unique edges in a stream at time $T$ \\
			$D$ & Buffer \\
			$|D|$ & Number of elements stored in buffer $D$ \\
			$M$ & Buffer size \\
			$h_{max}$ & Maximum hash value in a stream \\
			$D_{max}$ & Edge whose hash value is $h_{max}$ \\
			$B$ & Current bucket \\
			$b_\lambda$ & Bucket that $\lambda$ appears \\
			$J$ & Bucket size \\
			$\delta$ & Decaying factor for past estimations \\
			$\lambda$ & Triangle \\
			$T, t$ & Current time, and time  \\
			$T_\lambda$ & Time that a triangle $\lambda$ is counted \\
			$T_M$ & Last time when local triangle estimation equals the true triangle count\\
			$\varDelta_u$ & True local triangle count of node $u$ \\
			$c_u$ & Estimated local triangle count of node $u$ in \methodc and \methodm \\
			$\tau_u$ & Estimated local triangle count of node $u$ in \methodcx and \methodmx \\
			$q_\lambda$ & Triangle estimation weight of a triangle  $\lambda$ \\
			$X_\lambda$ & Estimated triangle counts by \methodc and \methodm \\
			$Y_\lambda$ & Estimated triangle counts by \methodcx and \methodmx \\
			
			\bottomrule
		\end{tabular}
	\end{center}
	\vspace{-0.9cm}
\end{table}

\section{Related Works}
\label{sec:background}

In this section, we present related works about triangle counting and reservoir sampling in a data stream.

\subsection{Global Triangle Counting}
There have been works on the counting of the number of triangles. Usually, existing algorithms for global triangle counting are in-memory algorithms, whose performance cannot scale up with massive graphs.

To scale up with massive graphs, various approaches have been proposed. Chu and Cheng~\cite{ChuC11} developed an I/O-efficient algorithm. Their algorithm avoids random disk access by partitioning the graph into components and processing them in parallel. Hu et al.~\cite{HuTC13} devised an I/O and CPU efficient algorithm whose I/O complexity is $O(m^2/(MB))$, where $m$, $M$ and $B$ are the number of edges, the size of internal memory, and the size of block, respectively.
Thereafter, Pagh and Silvestri~\cite{PaghS14} improved the minimum I/O complexity to $O(m^{3/2}/B\sqrt{M})$ in expectation.
Cohen~\cite{Cohen09} proposed the first MapReduce algorithm. Suri and Vassilvitskii~\cite{SuriV11} presented a graph partitioning based algorithm for the MapReduce framework.
Later, the algorithm is improved by Park and Chung~\cite{ParkC13}. Recently, there have been more parallel algorithms~\cite{ArifuzzamanKM13,KimHLPY14,ParkSKP14}.

There have been considerable studies on triangle counting in a graph stream, beginning with Bar-Yossef et al.~\cite{Bar-YossefKS02}. After that, Jowhari et al.~\cite{JowhariG05} improved the space usage. Buriol et al.~\cite{BuriolFLMS06} designed two algorithms which scale very well in a stream of edges. Each algorithm respectively takes constant and $O(\log(n(1+sn/m)))$ expected update time, where $s$ is the space requirement and $n$ is the number of nodes.
Tsourakakis et al.~\cite{TsourakakisKMF09} proposed a one pass algorithm based on edge sampling to count triangles of a graph. This sampling technique was generalized by~\cite{AhmedDNK14} for counting arbitrary subgraphs.
Jha et al.~\cite{JhaSP13} presented a space efficient algorithm to estimate the number of total triangles with a single pass to an input graph.
Kane et al.~\cite{KaneMSS12} provided a sketch based streaming algorithm for estimating the number of arbitrary subgraphs with a constant size in a large graph. Afterwards, Pavan et al.~\cite{PavanTTSW13} made the space and time complexity better.
Note that global triangle counting is different from local triangle counting, the focus of this paper, to compute the number of triangles of all individual nodes.

\subsection{Local Triangle Counting}
For local triangle counting, a simple and fast algorithm is to get $A^3$ for an adjacency matrix $A$~\cite{AlonYZ97}.
It takes only $O(m^{1.41})$ time, but its memory usage, $O(n^2)$, is quadratic.
Latapy~\cite{Latapy08} presented a fast and space-efficient algorithm, but in a non-stream setting.
To handle simple graph streams, Becchetti et al.~\cite{BecchettiBCG10} devised a semi-streaming algorithm based on min-wise independent permutations~\cite{BroderCFM98}.
Their algorithm requires $O(n)$ space in main memory and $O(\log(n))$ sequential scans over the edges of the graph.
However, it is inappropriate for handling a graph stream in real time due to its multiple scans to the graph.
Kutzkov and Pagh~\cite{KutzkovP13} proposed a randomized one pass algorithm based on node coloring~\cite{PaghT12}.
Despite the property of scanning once, it is limited in practice because it requires prior knowledge about the graph.
These problems are resolved by \mascot proposed in~\cite{LimK15} and it is based on a triangle counting method~\cite{TsourakakisKMF09} which uses an edge sampling.
\mascot takes only one parameter of edge sampling probability with one sequential scan to the graph stream but \mascot does not guarantee the memory usage; it means that eventually the out-of-memory error will happen.
De Stefani et al.~\cite{StefaniERU16} proposed \triest which computes the number of triangles in fully-dynamic streams with a fixed memory size.
Their large variance, however, causes low accuracy since accuracy heavily depends on a variance in a stream setting.
Our proposed algorithm \methodcx improves the accuracy of \triest by reducing the variance through biased estimation.
For multigraph streams, there are no existing works. 

\subsection{Reservoir Sampling}
Reservoir sampling is a technique to sample a given number of elements from a stream uniformly at random~\cite{Vitter85}.
Let $S$ be a stream of items, and $D$ be an array of size $M$.
For each item $e$ arriving at time $T\geq 1$, the sampling procedure is as follows.
If $T\leq M$, $e$ is unconditionally stored in $D_T$, the $T$th item of $D$.
\chngwww{If $T>M$}, first pick a random integer $i$ from $\{1,\ldots,T\}$.
If $i\leq M$, the existing item at $D_i$ is dropped and $e$ is stored at $D_i$; otherwise, $e$ is discarded.
As a result, the sampling probability for every observed item becomes $\min(M/T, 1)$ at time $T$.
This reservoir sampling is used in our algorithms to keep a constant number of edges from a graph stream regardless of its length, as presented in Section~\ref{sec:method}.
\vspace{-3.8mm}
\paragraph{Reservoir Sampling With Random Hash}
Reservoir sampling with random hash \chngwww{(random sort~\cite{sunter1977list})} is a modification of reservoir sampling to sample distinct items uniformly at random in stream environment.
The main idea is to assign each item a random hash value and to keep $M$ items with minimum hash values.
The arriving item $e$ is sampled if and only if $h(e)<h_{max}=\max_{f \in D}h(f)$.
Note that items in the buffer have minimal hash values among all items in a stream.
The sampling technique samples distinct items uniformly at random, because it is equivalent to picking $M$ distinct items with minimum values in the whole stream.
The reservoir sampling with random hash is used in \methodm to process multigraph stream\chngwww{s}.
\vspace{-1mm}

\hide{
	\subsubsection{Edge Sampling based Local Triangle Counting}\label{sec:prelim}
	Edge sampling has been adopted by various graph mining algorithms.
	Among them, \doulion provides a basic edge sampling technique for global triangle counting~\cite{TsourakakisKMF09}. Given a graph $G=(V,E)$ and $0<p\leq 1$, it samples every edge with probability $p$. Let $\triangle$ be the number of the original graph, and $\tau$ be that of the sampled graph; then unbiased estimation for $\triangle$ is calculated as $\tau p^{-3}$. This is because each triangle is sampled with probability $p^3$.
	
	Recently, this idea has been extended to local triangle counting in graph streams by Lim and Kang~\cite{LimK15}.
	Their primal algorithm \mascotc can be understood as an online \doulion.
	Whenever a new edge $e=(u,v)$ arrives, \mascotc samples $e$ with a predefined probability $p$. If the sampling succeeds, they update estimation as $\tau_w = \tau_w + p^{-3}|N_{uv}|$ for $w\in \{u,v\}$, and $\tau_w = \tau_w + p^{-3}$ for $w\in N_{uv}$ where $N_{uv} = N_u \cap N_v$ and $N_u$ is the set of neighbors of $u$. That is, for a sampled triangle created by edge sampling, it updates estimation as like \doulion.
	Their main algorithm \mascot raises accuracy of the estimation via decoupling sampling and counting triangles. Precisely, whenever a new edge $e=(u,v)$ is sampled, \mascot first updates estimation as if $e$ is sampled, and then determines whether keeping $e$ or not by sampling.
	Note that in \mascot the probability $p^3$ of sampling a triangle is different from the probability $p^2$ of counting it---the last edge of a triangle is unconditionally considered.
	This technique changes the update as follows: $\tau_w = \tau_w + p^{-2}|N_{uv}|$ for $w\in \{u,v\}$, and $\tau_w = \tau_w + p^{-2}$ for $w\in N_{uv}$. \mascot achieves higher accuracy with the same memory spaces compared with \mascotc.
	This idea enables to achieve higher accuracy than only using the reservoir sampling.
}

\section{Proposed Method}
\label{sec:method}

How can we accurately estimate local triangles in simple and multigraph streams with a fixed memory space?
In this section, we propose \method,
a suite of accurate and memory efficient local triangle counting methods for graph streams to answer the question.
We first present two basic algorithms \methodc (Section~\ref{sec:method:simple}) and \methodm (Section~\ref{sec:method:multi}) based on the reservoir sampling: \methodc
is an algorithm for simple graph streams, and
\methodm is an algorithm for multigraph streams.
Our main proposed algorithms \methodcx and \methodmx (Section~\ref{sec:method:x}) further improve the accuracy of \methodc and \methodm by additionally assembling intermediate estimation results.
Note that in all the proposed methods, we store edges only in a buffer $D$ which has a fixed capacity $M$.

\subsection{FURL for Simple Graph}
\label{sec:method:simple}

We present \methodc (Algorithm~\ref{alg:methodc}) which estimates local triangles in simple graph \chngwww{streams} without duplicate edges.
The algorithm computes triangle estimations with \chngwww{a} fixed memory using reservoir sampling, and decouples two events of sampling and counting a triangle for efficiency using the idea of MASCOT~\cite{LimK15}. A recent method called TRIEST~\cite{StefaniERU16}, which also uses the same idea of combining MASCOT and the reservoir sampling, has been proposed independently.
Below, we describe the procedure of the method.

Let $T$ be the current time, $e=(u,v)$ be a new edge arriving from a stream at time $T$, and $c_u$ be local triangles estimation of node $u$.
First, local triangle estimation is updated if the new edge $e$ forms a triangle with two edges in the sample graph (lines~\ref{line:updates_start}--\ref{line:updates_end}). If a triangle $\lambda$ is counted by \methodc it is when its last edge arrives.
Before the buffer $D$ overflows, \methodc has the exact triangle counts since the buffer stores all edges which \chngwww{have arrived} so far.
\chngwww{The triangle estimation is increased by factor $1$ when we have the exact triangle counts, and it is increased by factor $q_T=\frac{(T-1)(T-2)}{M(M-1)}$ otherwise.}
$ExactCnt$ in \algref{alg:methodc} indicates whether $c_u$ equals the true local triangle count of node $u$ or not, and is initialized to $true$.
Let $N_x$ and $N_{xy}$ be a set of neighbors of $x$, and a set of common neighbors of $x$ and $y$ in the sample graph, respectively.
Then, for each $e=(u,v)$ from the stream, $c_u$ and $c_v$ increase by $|N_{uv}| \cdot \theta$,  and for every $w\in N_{uv}$, $c_w$ increases by $\theta$ where $\theta$ is a triangle estimation increase factor.
The second task is to sample $e$ with probability $\min(M/T, 1)$ using reservoir sampling.
In the sampling procedure, $ExactCnt$ turns to $false$ when the buffer overflows (line~\ref{line:exactcnt_off_s}),
\chngwww{because $D$ does not store all arriving edges anymore.}

\begin{algorithm}[t!]
	\SetKwFunction{UpdateS}{$UpdateTriangles\text{-S}$}
	\SetKwFunction{Increase}{$IncreaseEstimation$}
	\SetKwFunction{SampleS}{$SampleNewEdge\text{-S}$}
	\SetKwFunction{ReplaceS}{$ReplaceEdge\text{-S}$}
	\SetKwFunction{DiscoverNode}{$DiscoverNode$}
	
	\SetKwProg{myproc}{Subroutine}{}{end}
	
	\KwIn{Graph stream $S$, and max. number $M$ of edges to store}
	\KwOut{Local triangle estimation $c$}

	\medskip
	
	$G \leftarrow (V,D)$ with $V=D=\emptyset$.\;
	$D$ is initialized by $M$ dummy edges.\;
	The estimated count $c$ for nodes is initialized to $\emptyset$.\;
	The boolean flag $ExactCnt$ is initialized to $true$.\;
	\ForEach(~~\tcp*[h]{at time $T$}){edge $e=(u,v)$ from $S$}{
		$\DiscoverNode(u)$.\;
		$\DiscoverNode(v)$.\;
		$\UpdateS(e)$.\;
		$\SampleS(e)$
}

\medskip

\myproc{$\DiscoverNode{u}$}{
	\lIf{$u\notin V$}{$V \leftarrow V\cup\{u\}$ and $c_u=0$.}
}

\medskip
\myproc{$\UpdateS{e}$}{ \label{line:updates_start}
	\lIf{$ExactCnt=true$}{ $\Increase(e,1)$.
	}\Else{
	$q_T \leftarrow \frac{(T-1)(T-2)}{M(M-1)}$.\;
	$\Increase(e,q_T)$.\;			
	}
} \label{line:updates_end}

\medskip
\myproc{\Increase{$e=(u,v)$, $\theta$}}{
	$N_{uv} \leftarrow N_u \cap N_v.$\;
	\lForEach{$w\in N_{uv}$}{
		$c_w \leftarrow c_w + \theta$.
	}
	$c_u \leftarrow c_u + \left(\theta\times |N_{uv}| \right)$. \;
	$c_v \leftarrow c_v + \left(\theta\times |N_{uv}| \right)$. \;
}

\medskip
\myproc{$\SampleS{e}$}{
	\lIf{$|D|<M$}{
		Append $e$ to D.
	}\Else{
	\lIf{$ExactCnt=true$}{
		$ExactCnt \leftarrow false$. \label{line:exactcnt_off_s}
	}
	$\ReplaceS(e)$.\;	
	}
}

\medskip
\myproc{$\ReplaceS{e}$}{
	$i \leftarrow uniform(1,T)$.\;
	\lIf{$i \leq M$}{
		$D_i \leftarrow e$
	}
}

\caption{\methodc for simple graph stream}
\label{alg:methodc}	
\end{algorithm}

\methodc provides unbiased estimation as shown in the following Lemma~\ref{lem:methodc_expect}.
Note that we do not consider when $ExactCnt$ is $true$ since \methodc provides the exact counting.
\begin{lemma}\label{lem:methodc_expect}
	Let $\varDelta_u$ be the true local triangle count for a node $u$, and $c_u$ be the estimation given by \methodc. For every node $u$,
	$
	\expect{c_u} = \varDelta_u.
	$
\end{lemma}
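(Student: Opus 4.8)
The plan is to decompose $c_u$ into per-triangle contributions and apply linearity of expectation. For a triangle $\lambda$ incident to $u$, write $X_\lambda$ for the amount that $\lambda$ adds to $c_u$. I would first argue that each triangle is accounted for at exactly one moment—when its chronologically last edge arrives, at time $T_\lambda$—since a triangle is only ever detected as (the arriving edge) $+$ (two edges already in the sample graph), and in a simple stream the last edge arrives once. Hence $c_u = \sum_{\lambda \ni u} X_\lambda$ and $\expect{c_u} = \sum_{\lambda \ni u} \expect{X_\lambda}$. Note that linearity spares us from reasoning about the (genuine) correlations among the triangles' detection events: only the marginal expectations $\expect{X_\lambda}$ matter, and it suffices to show each equals $1$.

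I would then split on whether $T_\lambda$ falls in the exact phase ($ExactCnt=true$) or the overflow phase. In the exact phase the buffer still holds every edge seen so far, so the two earlier edges of $\lambda$ are certainly present, the increase factor is $\theta=1$, and $X_\lambda=1$ deterministically. In the overflow phase the factor is $\theta=q_{T_\lambda}=\frac{(T_\lambda-1)(T_\lambda-2)}{M(M-1)}$, and $\lambda$ is counted iff both of its other two edges currently reside in the sample graph; that is, $X_\lambda=q_{T_\lambda}\cdot I_\lambda$, where $I_\lambda$ indicates that event. Thus $\expect{X_\lambda}=q_{T_\lambda}\,\prob{I_\lambda=1}$, and the entire argument reduces to evaluating this probability.

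The crux is the computation of $\prob{I_\lambda=1}$. Here I would invoke the reservoir-sampling guarantee~\cite{Vitter85} that, after processing the first $T_\lambda-1$ edges, the buffer $D$ is a uniformly random $M$-subset of those $T_\lambda-1$ distinct edges. This is exactly the buffer state that the update step reads, because in \methodc the update precedes the sampling of the arriving edge. Since the two earlier edges of $\lambda$ are among those $T_\lambda-1$ edges and are distinct (the graph is simple), a direct hypergeometric count gives $\prob{I_\lambda=1}=\binom{T_\lambda-3}{M-2}/\binom{T_\lambda-1}{M}=\frac{M(M-1)}{(T_\lambda-1)(T_\lambda-2)}=1/q_{T_\lambda}$. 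Therefore $\expect{X_\lambda}=q_{T_\lambda}\cdot(1/q_{T_\lambda})=1$ in the overflow phase as well, which is precisely the reason $q_T$ is defined this way.

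I expect the main obstacle to be the careful timing bookkeeping rather than any deep step. One must confirm that the update sees the post-$(T_\lambda-1)$ buffer and not the post-$T_\lambda$ one, that the two non-arriving edges of $\lambda$ genuinely lie among the first $T_\lambda-1$ arrivals and are distinct, and that the phase boundary is treated correctly, so that the single triangle whose last edge arrives just as the buffer fills is still counted exactly with factor $1$. Once $\expect{X_\lambda}=1$ is established in both phases, summing over the $\varDelta_u$ triangles incident to $u$ yields $\expect{c_u}=\varDelta_u$.
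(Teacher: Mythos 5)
Your proof is correct and follows essentially the same route as the paper's: a per-triangle decomposition via linearity of expectation, a split into the exact phase ($T_\lambda \leq M+1$) and the overflow phase, and the observation that in the overflow phase the counting probability $\frac{M(M-1)}{(T_\lambda-1)(T_\lambda-2)}$ exactly cancels the weight $q_{T_\lambda}$. The only difference is that you explicitly derive that probability from the uniform-$M$-subset property of reservoir sampling via a hypergeometric count, a step the paper asserts without justification.
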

\begin{proof}
	\chng{
		Let $T_\lambda$ be the time $\lambda$ is formed, $S_\lambda^{-}$ be a set of triangles that \mchng{are} formed when we have the exact triangle counts, i.e. $T_\lambda\leq M+1$ , and $S_\lambda^{+}$ be a set of triangles that  \mchng{are} formed when we do not have the exact triangle counts, i.e. $T_\lambda> M+1$. \mchng{Note that every triangle is included in} either $S_\lambda^{-}$ or $S_\lambda^{+}$.  We define $X_\lambda^{-}$ and $X_\lambda^{+}$ as follows.
		$$
		X_{\lambda}^{-} =
		\begin{cases}
		1 &  \lambda \ is \ counted\\
		0 & otherwise.
		\end{cases}
		$$	
		$$
		X_{\lambda}^{+} =
		\begin{cases}
		q_{T_\lambda}= \frac{(T_\lambda-1)(T_\lambda-2)}{M(M-1)} & \lambda \ is \ counted\\
		0 & otherwise.
		\end{cases}
		$$		
	}
		
		\chng{
		Then, $$
		\expect{c_u} = \sum\limits_{\lambda \in \Lambda_u \cap S_\lambda^{-}} \expect{X_{\lambda}^{-}} + \sum\limits_{\lambda \in \Lambda_u \cap S_\lambda^{+}} \expect{X_{\lambda}^{+}}
		$$		
		where $\Lambda_u$ is the set of triangles containing $u$.
		For a triangle $\lambda\in \Lambda_u \cap S_\lambda^{-}$,
		$$
			\expect{X_{\lambda}^{-}}=1 \times\prob{\lambda \ is \ counted} = 1.
		$$
		$\prob{\lambda \ is \ counted} = 1$ since all edges are unconditionally sampled in the buffer.
		For a triangle $\lambda \in \Lambda_u \cap S_\lambda^{+}$,
		$$
		\expect{X_{\lambda}^{+}}=q_{T_\lambda} \times\prob{\lambda \ is \ counted} = q_{T_\lambda} \times \frac{M(M-1)}{(T_\lambda-1)(T_\lambda-2)} = 1.
		$$
		%
		\mchng{Thus,}
		$$
		\expect{c_u} = \sum\limits_{\lambda \in \Lambda_u \cap S_\lambda^{-}} \expect{X_{\lambda}^{-}} + \sum\limits_{\lambda \in \Lambda_u \cap S_\lambda^{+}} \expect{X_{\lambda}^{+}}= \sum\limits_{\lambda \in \Lambda_u \cap S_\lambda^{-}}1 + \sum\limits_{\lambda \in \Lambda_u \cap S_\lambda^{+}}1= |\Lambda_u| = \varDelta_u.
		$$
		}
	\qed
\end{proof}

\subsection{FURL for Multigraph}
\label{sec:method:multi}

We present two fixed-memory local triangle counting algorithms in \chngwww{multigraph streams containing duplicate edges.}
\chngwww{There are two ways in handling the duplication: binary counting and weighted counting.}
\chngwww{The} binary counting considers only the existence of an edge, leaving out the number of occurrences of an edge.
In contrast, the weighted counting takes the number of occurrences of an edge into account. For instance, given a triangle $\lambda = (e_a,e_b,e_c)$ and a multigraph stream $(e_a,e_a,e_a,e_b,e_b,e_c)$, binary counting counts $\lambda$ as 1 while weighted counting counts $\lambda$ as $3 \times 2 \times 1 =6$. \chngwww{Below, we describe \methodb for binary counting, and \methodw for weighted counting in multigraph streams.}

\paragraph{Binary Counting (\methodb)}
For binary triangle counting, we sample distinct edges with the equal probability regardless of the degree of duplications: i.e., we make sure $M$ edges in buffer $D$ are chosen uniformly at random from the set of distinct edges observed so far.
Reservoir sampling with random hash is used to deal with duplicate edges.
\chngwww{In the reservoir sampling with random hash,} each arriving edge $e$ is assigned a hash \chngwww{value} $h(e)$, and it is sampled if $h(e)$ belongs to minimum $M$ hash values.
\chngwww{
Note that duplicate edges are treated as one edge since they have a same hash value. }
Without loss of generality, we assume the codomain of the hash function is (0,1).

\begin{algorithm}[t!]
	\SetKwProg{myproc}{Subroutine}{}{end}
	\SetKwFunction{UpdateMB}{$UpdateTriangles\text{-M}_B$}
	\SetKwFunction{SampleMB}{$SampleNewEdge\text{-M}_B$}
	\SetKwFunction{ReplaceM}{$ReplaceEdge\text{-M}$}
	\KwIn{Graph stream $S$, maximum number $M$ of edges stored, and a random hash function $h$ }
	\KwOut{Local triangle estimation $c$}
	\medskip
	\chngwww{Initialize variables as in lines 1-4 of \algref{alg:methodc}.\;}
	\ForEach(~~\tcp*[h]{at time $T$}){edge $e=(u,v)$ from $S$}{
		$\DiscoverNode(u)$.\;
		$\DiscoverNode(v)$.\;
		\If{$e \notin D$}{ \label{line:not_in_buffer}
			$\SampleMB(e)$.\;
			$\UpdateMB(e)$.\;
		}
	}
	
	\myproc{$\SampleMB{e}$}{ \label{line:sampleb_start}
		\lIf{$|D|<M$}{
			Append $e$ to D.
		}\Else{
		\lIf{$ExactCnt=true$}{
			$ExactCnt \leftarrow false$.
		}
		$\ReplaceM(e)$.\;	
	}
} \label{line:sampleb_end}

\medskip
\myproc{$\ReplaceM {e} $}{
	\If{$h(e) < h_{max}$}{
		$D_{max} \leftarrow e' ~such ~that ~ h(e')=h_{max}$.\;
		Remove $D_{max}$ from $D$.\;
		Append $e$ to $D$.\;	
	}
}

\medskip
\myproc{$\UpdateMB{e}$}{ \label{line:updateb_start}
	\lIf{$ExactCnt=true$}{ $\Increase(e,1)$.
	}\ElseIf{$e$ is sampled}{
	$q_T \leftarrow \frac{M-3}{M} \frac{1}{h_{max}^3} $.\;
	$\Increase(e,q_T)$.\;			
}		
} \label{line:updateb_end}

\caption{\methodbrb for multigraph stream (binary)}
\label{alg:methodb}

\end{algorithm}

Different from \methodc, \methodb (Algorithm~\ref{alg:methodb}) goes through sampling process first and then updates triangle estimation. \methodb cannot decouple the event of triangle counting and sampling the edge because it causes the same triangle to be counted multiple times by duplicate edges. Note that \methodb discards $e$ if $e$ is already sampled (line~\ref{line:not_in_buffer}).
The sampling process (lines~\ref{line:sampleb_start}--\ref{line:sampleb_end}) is similar to that of \methodc except that \methodb uses a hash value of an edge to replace another edge.
After the sampling process, triangle estimation is updated (lines~\ref{line:updateb_start}--\ref{line:updateb_end}). Before the buffer overflows, the triangle estimation is updated by a factor 1. After the buffer is full, triangle estimation is updated by a factor $q_T= \frac{M-3}{M} \cdot \frac{1}{(h_{max}^3)}$ only when the edge $e$ is sampled where $h_{max}=max_{f\in D} h(f)$ is the maximum hash value in the buffer. Note that if any triangle is counted by \methodb it is counted at the first time all three edges of a triangle arrive.

\chngwww{\methodb provides unbiased estimation as shown in the following lemma.
}
\begin{lemma}\label{lem:methodb_expect}
	Let $\varDelta_u$ be the true local triangle count for a node $u$, and $c_u$ be the estimation given by \methodb. For every node $u$,
	$
	\expect{c_u} = \varDelta_u.
	$
\end{lemma}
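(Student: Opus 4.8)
The plan is to follow the same skeleton as the proof of Lemma~\ref{lem:methodc_expect}. First I would split the triangles $\lambda$ containing $u$ according to whether they are completed while the buffer still holds every distinct edge ($ExactCnt=true$, call this set $S_\lambda^{-}$) or after the first overflow ($S_\lambda^{+}$), and write $\expect{c_u}=\sum_{\lambda\in\Lambda_u\cap S_\lambda^{-}}\expect{X_\lambda^{-}}+\sum_{\lambda\in\Lambda_u\cap S_\lambda^{+}}\expect{X_\lambda^{+}}$, where $X_\lambda^{-}$ equals $1$ when $\lambda$ is counted and $X_\lambda^{+}$ equals $q_{T_\lambda}=\frac{M-3}{M}\,h_{max}^{-3}$ when $\lambda$ is counted (and $0$ otherwise). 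Here $\varDelta_u$ counts distinct triangles, which matches binary counting, and duplicate edges share a hash value so they act as a single distinct edge. The $S_\lambda^{-}$ terms are immediate: before overflow every distinct edge sits in $D$, so each such $\lambda$ is counted with probability $1$ and contributes $\expect{X_\lambda^{-}}=1$. It then remains to show $\expect{X_\lambda^{+}}=1$ for each $\lambda\in\Lambda_u\cap S_\lambda^{+}$.

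The second step is to turn ``$\lambda$ is counted'' into a statement about hash order statistics. Using the invariant of reservoir sampling with random hash---after any prefix, $D$ holds exactly the $M$ distinct edges of smallest hash value---I would argue that, at the moment its completing edge is processed, $\lambda$ is counted if and only if all three of its edges survive in $D$; that is, their hashes are among the $M$ smallest of the $k$ distinct edges seen by time $T_\lambda$, where $k>M$ in this regime. Call this event $A$ (note $A$ in particular forces the completing edge to be sampled, so the $q_{T_\lambda}$ branch fires). Since $h_{max}$ is the $M$-th smallest of these $k$ i.i.d.\ $\mathrm{Unif}(0,1)$ hashes, $q_{T_\lambda}$ is itself random, and the goal reduces to $\expect{X_\lambda^{+}}=\frac{M-3}{M}\expect{h_{max}^{-3}\mathbf{1}_A}=1$, i.e.\ to showing $\expect{h_{max}^{-3}\mathbf{1}_A}=\frac{M}{M-3}$.

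The final and hardest step is this order-statistic computation, and the subtlety I would flag is the case split on which edge realizes $h_{max}$. Conditioning on $h_{max}=t$ (the $M$-th order statistic), I would write the density of $\{h_{max}=t\}\cap A$ as a sum of two contributions: (i) the edge attaining the maximum hash $t$ is one of $\lambda$'s three edges, with the other two lying below $t$; and (ii) it is one of the remaining $k-3$ edges, with all three of $\lambda$'s edges lying below $t$. Dropping case (i)---the case where a triangle edge is itself the current $h_{max}$---is exactly what produces the wrong constant, so it must be retained. Both contributions are proportional to $t^{M-1}(1-t)^{k-M}$, and their binomial prefactors combine via $3\binom{k-3}{M-3}+(k-3)\binom{k-4}{M-4}=\frac{M(k-3)!}{(M-3)!(k-M)!}$ into a single Beta-type density. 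Integrating $t^{-3}$ against it, with $\int_0^1 t^{M-4}(1-t)^{k-M}\,dt=B(M-3,k-M+1)$, collapses the factorials and gives $\expect{h_{max}^{-3}\mathbf{1}_A}=\frac{M}{M-3}$ (valid for $M\ge 4$). Hence the $\frac{M-3}{M}$ factor in $q_{T_\lambda}$ cancels exactly, so $\expect{X_\lambda^{+}}=1$, and summing $1$ over all $\lambda\in\Lambda_u$ delivers $\expect{c_u}=|\Lambda_u|=\varDelta_u$.
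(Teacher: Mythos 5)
Your proof is correct, and it follows the paper's skeleton: the same split of $\Lambda_u$ into triangles completed before and after the first buffer overflow, the same reduction to showing $\expect{X_\lambda^{+}}=1$, and the same modelling of $h_{max}$ as the $M$-th smallest of $k=u(T_\lambda)$ i.i.d.\ $\mathrm{Unif}(0,1)$ hashes followed by a Beta-type integral. The one place you genuinely diverge is the central step. Writing $A$ for the event that all three edges of $\lambda$ lie in the buffer at time $T_\lambda$, the paper factorizes $\expect{X_\lambda^{+}}=\prob{A}\cdot\int_0^1 q_{T_\lambda}\,f(x)\,dx$, which rests on the assertion---stated there but not proved---that $A$ is independent of $\{h_{max}=x\}$, with $\prob{A}=\frac{M(M-1)(M-2)}{k(k-1)(k-2)}$. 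You avoid invoking independence and instead compute the joint density of $\{h_{max}=t\}\cap A$ by conditioning on which edge realizes the $M$-th order statistic, obtaining the prefactor $3\binom{k-3}{M-3}+(k-3)\binom{k-4}{M-4}=\frac{M\,(k-3)!}{(M-3)!\,(k-M)!}$; dividing your joint density by the Beta density of $h_{max}$ recovers exactly the paper's $\prob{A}$, so your case analysis is in effect a proof of the independence claim the paper takes for granted (it also follows abstractly from the independence of the rank assignment and the order statistics of an i.i.d.\ sample). What each approach buys: the paper's route is shorter once independence is granted; yours is self-contained, pinpoints why the case in which a triangle edge itself realizes $h_{max}$ cannot be dropped (dropping it would yield $\expect{X_\lambda^{+}}=\frac{M-3}{M}$ instead of $1$), and makes explicit the requirement $M\geq 4$ for $\int_0^1 t^{M-4}(1-t)^{k-M}\,dt$ to converge, which the paper leaves implicit.
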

\begin{proof}
	
	\chng{
	Let $T_\lambda$ be the time $\lambda$ is formed and $T_M$ be the last time we have the exact triangle count, that is the last time all edges are unconditionally sampled. Let $S_\lambda^{-}$ be a set of triangles that are formed when we have the exact triangle counts, i.e. $T_\lambda\leq T_M$ , and $S_\lambda^{+}$ be a set of triangles that are formed when we do not have the exact triangle counts, i.e. $T_\lambda> T_M$. Note that every triangle is included in  either $S_\lambda^{-}$ or $S_\lambda^{+}$.
We define $X_\lambda^{-}$ and $X_\lambda^{+}$ as follows.
	$$
	X_{\lambda}^{-} =
	\begin{cases}
	1 &  \lambda \ is \ counted\\
	0 & otherwise.
	\end{cases}
	$$	
	$$
	X_{\lambda}^{+} =
	\begin{cases}
	q_{T_\lambda}= \frac{M-3}{M} \cdot \frac{1}{h_{max}^3} & \lambda \ is \ counted\\
	0 & otherwise.
	\end{cases}
	$$		
}
	
	\chng{
	Then,
	$$
	\expect{c_u} =  \sum\limits_{\lambda \in \Lambda_u \cap S_\lambda^{-}} \expect{X_{\lambda}^{-}} + \sum\limits_{\lambda \in \Lambda_u \cap S_\lambda^{+}} \expect{X_{\lambda}^{+}}
	$$
	where $\Lambda_u$ is the set of triangles containing node $u$.
	For a triangle $\lambda\in \Lambda_u \cap S_\lambda^{-}$, $$
	\expect{X_{\lambda}^{-}}=1 \times\prob{\lambda \ is \ counted} = 1. $$  $\prob{\lambda \ is \ counted}=1$ since all edges are unconditionally sampled at $T\leq T_M$. 
	}
	
	\chng{
	Now we show $\expect{X_{\lambda}^{+}}=1$ for each triangle $\lambda\in \Lambda_u \cap S_\lambda^{+}$.
	}Let $u(T_\lambda)$ be the number of unique edges at time $T_\lambda$. Leaving out all the repeated edges, we get a refined stream that can be viewed as a sequence of independent random variables $h_i(1 \leq i \leq u(T_\lambda))$ that has uniform distribution in range (0,1). Then $h_{max}$ can be viewed as the $M$th smallest value among $u(T_\lambda)$ number of random variables in range $(0,1)$. By order statistics, $h_{max} \sim Beta(M,u(T_\lambda)+1-M)$ and its probability density function is given as $f(x)=\frac{\gamma(u(T_\lambda)+1)}{\gamma(M) \cdot \gamma(u(T_\lambda)+1-M)} \cdot x^{M-1} \cdot (1-x)^{u(T_\lambda)-M}$.
	To get $\expect{X_{\lambda}^{+}}$ we compute the expectation of $q_{T_\lambda}$ over $h_{max}$, and multiply it with the probability of $\lambda$ being sampled.
	Note that the probability of an event that all three edges of a triangle $\lambda$ being sampled is $\frac{M(M-1)(M-2)}{u(T_\lambda)(u(T_\lambda)-1)(u(T_\lambda)-2)}$, and it is independent of an event $h_{max}=x$.

	\begin{align*}
	\chng{\expect{X_{\lambda}^{+} }}& = \prob{\lambda ~is ~counted}\cdot \int_{0}^{1} q_{T_\lambda} \cdot f(x) dx\\
	&=\frac{M(M-1)(M-2)}{u(T_\lambda)\cdot(u(T_\lambda)-1)\cdot(u(T_\lambda)-2)} \cdot \int_{0}^{1}\frac{M-3}{M} \frac{1}{x^3} \cdot f(x) dx\\
	&=\int_{0}^{1}  \frac{\gamma(u(T_\lambda)-2)}{\gamma(M-3) \cdot \gamma(u(T_\lambda)+1-M)} \cdot x^{M-4} \cdot (1-x)^{u(T_\lambda)-M}dx\\
	&=1
	\end{align*}
	
	\chng{
		Therefore, $$
		\expect{c_u} = \sum\limits_{\lambda \in \Lambda_u \cap S_\lambda^{-}} \expect{X_{\lambda}^{-}} + \sum\limits_{\lambda \in \Lambda_u \cap S_\lambda^{+}} \expect{X_{\lambda}^{+}}= \sum\limits_{\lambda \in \Lambda_u \cap S_\lambda^{-}}1 + \sum\limits_{\lambda \in \Lambda_u \cap S_\lambda^{+}}1= |\Lambda_u| = \varDelta_u.
		$$
	}
\qed
\end{proof}

\paragraph{Weighted Counting (\methodw)}
In weighted triangle counting, weights of edges are multiplied in computing the number of triangles.
We propose \methodw (Algorithm~\ref{alg:methodw}) for weighted triangle counting in multigraph streams.
\chngwww{ \methodw uses reservoir sampling with random hash to sample $M$ distinct edges uniformly at random. Along with an edge $e$, the occurrence number $O_e$ is kept in the buffer to reflect the edge's weight.
}

The algorithm first updates local triangle estimation and then goes through the sampling procedure (lines~\ref{line:updatew_start}--\ref{line:updatew_end}). If a triangle is counted by \methodw it is counted at the time last edge of a triangle arrives.
The triangle estimation is increased by factor 1 when we have the exact count of triangles, and by $q_T=\frac{M-2}{M} \cdot \frac{1}{(h_{max}^2)}$ otherwise.
The procedure of \chngwww{incrementing triangle estimation (lines~\ref{line:increase_start}--\ref{line:increase_end}) is a bit different from those of \methodc and \methodb because weighted triangle counting considers the edges' weights.} 
When an edge $e=(u,v)$ arrives, for each $w\in N_{uv}$, $(O_{(u,w)} \cdot O_{(v,w)})$ number of triangles are created. Therefore triangle estimations $c_u,c_v,$ and $c_w$ are incremented by multiplying a given factor with $O_{(u,w)}$ and $O_{(v,w)}$.

\begin{algorithm}[t!]
	\SetKwProg{myproc}{Subroutine}{}{end}
	\SetKwFunction{SampleMW}{$SampleNewEdge\text{-M}_W$}
	\SetKwFunction{UpdateMW}{$UpdateTriangles\text{-M}_W$}
	\SetKwFunction{IncreaseMW}{$IncreaseEstimation\text{-M}_W$}
	
	\KwIn{Graph stream $S$, maximum number $M$ of edges stored, and a random hash function $h$ }
	\KwOut{Local triangle estimation $c$}
	
	\medskip
	\chngwww{Initialize variables as in lines 1-4 of \algref{alg:methodc}.\;}
	
	\ForEach(~~\tcp*[h]{at time $T$}){edge $e=(u,v)$ from $S$}{
		$\DiscoverNode(u)$.\;
		$\DiscoverNode(v)$.\;
		$\UpdateMW(e)$.\;
		$\SampleMW(e)$.\;
	}
	
	\medskip
	\myproc{$\UpdateMW(e)$}{ \label{line:updatew_start}
		\lIf{$ExactCnt$}{
			$\IncreaseMW(e,1)$.
		}\Else{
		$q_T \leftarrow \frac{M-2}{M} \frac{1}{h_{max}^2} $.\;
		$\IncreaseMW(e,q_T)$.\;			
	}	
} \label{line:updatew_end}
\medskip
\myproc{\IncreaseMW{$e=(u,v)$, $\theta$ } }{ \label{line:increase_start}
	$N_{uv} \leftarrow N_u \cap N_v.$\;
	\ForEach{$w\in N_{uv}$}{
		$c_w \leftarrow c_w+\theta \cdot O_{(u,w)} \cdot O_{(v,w)}$.\;
		$c_u \leftarrow c_u+\theta \cdot O_{(u,w)} \cdot O_{(v,w)}$.\;
		$c_v \leftarrow c_v+\theta \cdot O_{(u,w)} \cdot O_{(v,w)}$.\;
	}
}\label{line:increase_end}

\medskip
\myproc{$\SampleMW{e}$}{ \label{line:samplew_start}
	\lIf{$e \in D$}{
		$O_e \leftarrow O_e+1$.
	}\lElse{
	The same as $\SampleMB$.
}	
} \label{line:samplew_end}

\caption{\methodwrw for multigraph stream (weighted)}
\label{alg:methodw} 	
\end{algorithm}

\chngwww{The sampling procedure is based on reservoir sampling with random hash (lines~\ref{line:samplew_start}--\ref{line:samplew_end}). It} is similar to \methodb except for how it handles duplicate edges. When the edge that is already in the buffer arrives again, \methodw unconditionally samples the edge. Sampling the edge here corresponds to increasing $O_e$ by 1.
The algorithm gives unbiased estimation as shown in Lemma~\ref{lemma:furlmw}. 

\begin{lemma}
\label{lemma:furlmw}
	Let $\varDelta_u$ be the true local triangle count for a node $u$, and $c_u$ be the estimation given by \methodw. For every node $u$,
	$
	\expect{c_u} = \varDelta_u.
	$
\end{lemma}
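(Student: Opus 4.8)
The plan is to follow the template of Lemmas~\ref{lem:methodc_expect} and~\ref{lem:methodb_expect}: partition the triangles containing $u$ according to whether they are completed while the buffer still holds every edge (exact phase, $T_\lambda\le T_M$) or after the buffer has overflowed (estimation phase, $T_\lambda>T_M$), define a per-triangle contribution $X_\lambda^{-}$ (resp.\ $X_\lambda^{+}$) to $c_u$, and show each has expectation equal to the weighted count of $\lambda$. Summing then gives $\expect{c_u}=\sum_{\lambda\in\Lambda_u}W_\lambda=\varDelta_u$, where $W_\lambda=O_{e_a}O_{e_b}O_{e_c}$ is the weighted count of the distinct triangle $\lambda=(e_a,e_b,e_c)$.

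First I would establish the combinatorial identity underlying weighted counting. The increment routine of \methodw is decoupled as in \methodc: it runs on every arriving edge, and when $e=(u,v)$ arrives it adds $\theta\cdot O_{(u,w)}O_{(v,w)}$ for each buffer common neighbor $w$. So I would view a distinct triangle $\lambda$ as a collection of $W_\lambda$ triples $(a_i,b_j,c_k)$, one occurrence chosen from each of its three edges. A given triple is charged exactly once---at the stream step when the last of its three occurrences arrives---since at that step the other two edges are present with their current occurrence counts folded into the $O$-products. This ``product rule'' (the sum over steps of the product of the other two running counts equals the product of the final counts) is what makes \methodw reproduce $W_\lambda$ exactly in the exact phase, giving $\expect{X_\lambda^{-}}=W_\lambda$ immediately because the factor is $1$ and every edge is unconditionally stored.

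The core of the argument is the estimation phase. Reducing to triples, each triple of a triangle $\lambda\in S_\lambda^{+}$ behaves like a single triangle in the simple-stream analysis: it is counted when its completing occurrence arrives, at which point the two other (distinct) edges must lie in the buffer, and the increment carries the factor $q_T=\frac{M-2}{M}\frac{1}{h_{max}^2}$. Here the exponent is $2$ rather than the $3$ of \methodb precisely because counting is decoupled from sampling the completing edge, so only the two non-completing edges must be sampled. Following the order-statistics calculation of Lemma~\ref{lem:methodb_expect}, writing $u=u(T_\lambda)$ for the number of distinct edges seen so far, I would use that the two required edges are both in the buffer with probability $\frac{M(M-1)}{u(u-1)}$, independently of the value of $h_{max}\sim Beta(M,u+1-M)$ with density $f$, and then evaluate
\begin{align*}
\expect{X_\lambda^{+}} &= W_\lambda \cdot \frac{M(M-1)}{u(u-1)} \cdot \int_0^1 \frac{M-2}{M}\cdot\frac{1}{x^2}\, f(x)\,dx\\
&= W_\lambda \cdot \frac{(M-1)(M-2)}{u(u-1)} \cdot \frac{\gamma(u+1)}{\gamma(M)\,\gamma(u+1-M)}\int_0^1 x^{M-3}(1-x)^{u-M}\,dx = W_\lambda,
\end{align*}
where the final Beta integral equals $\frac{\gamma(M-2)\,\gamma(u+1-M)}{\gamma(u-1)}$ and the gamma ratios telescope so that every factor cancels. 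Summing $\expect{X_\lambda^{-}}=W_\lambda$ over $\Lambda_u\cap S_\lambda^{-}$ and $\expect{X_\lambda^{+}}=W_\lambda$ over $\Lambda_u\cap S_\lambda^{+}$ then yields $\expect{c_u}=\sum_{\lambda\in\Lambda_u}W_\lambda=\varDelta_u$.

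The main obstacle I anticipate is making the weight bookkeeping rigorous rather than heuristic: I must justify that the running occurrence counts $O_{(u,w)},O_{(v,w)}$ read from the buffer are the \emph{true} multiplicities at counting time (they are, since the sampling routine increments $O_e$ on every arrival of an already-buffered edge), and that the per-triple decomposition legitimately factors the $W_\lambda$-fold weight out of a single-triangle sampling expectation that integrates to $1$. The other delicate point, inherited from Lemma~\ref{lem:methodb_expect}, is the claimed independence between the combinatorial event ``the two non-completing edges are among the $M$ smallest hashes'' and the magnitude $h_{max}=x$; I would justify it by the standard fact that, for i.i.d.\ uniform hashes, the ranking of the edges is independent of the order-statistic values, so the two edges are retained with probability $\frac{M(M-1)}{u(u-1)}$ regardless of $x$.
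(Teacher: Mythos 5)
Your proposal is correct and follows essentially the same route as the paper's own proof: the same partition into exact and estimation phases via $T_M$, the same per-triangle variables $X_\lambda^{-},X_\lambda^{+}$, the same order-statistics argument ($h_{max}$ as a $Beta$-distributed $M$th order statistic, with the counting event independent of $h_{max}$), and the same telescoping Beta integral. Two remarks. First, your explicit per-occurrence-triple decomposition (the ``product rule'') is actually \emph{more} careful than the paper, which implicitly treats $\Lambda_u$ as the set of occurrence triples so that $|\Lambda_u|=\varDelta_u$ under weighted counting; note also that your claim that the buffered counts $O_e$ are true multiplicities needs, beyond ``duplicates of buffered edges increment $O_e$,'' the monotonicity of hash-based reservoir sampling (an edge can enter $D$ only at its first arrival and, once evicted, can never return, so an edge currently in $D$ has been there continuously and missed no occurrences). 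Second, there is one slip to fix: because \methodw updates triangle estimates \emph{before} sampling the arriving edge, the buffer consulted at counting time reflects the stream one step earlier, so the relevant number of distinct edges is $u(T_\lambda-1)$, not $u(T_\lambda)$ --- the paper flags exactly this point. The counting probability should be $\frac{M(M-1)}{u(T_\lambda-1)(u(T_\lambda-1)-1)}$ and $h_{max}\sim Beta\left(M,u(T_\lambda-1)+1-M\right)$; as you wrote them, both intermediate claims are off by one whenever the completing occurrence is a new distinct edge (you inherited the index from Lemma~\ref{lem:methodb_expect}, where it is correct because \methodb samples before updating). Your final conclusion is unaffected, since the integral telescopes to $1$ for any consistently used index, but the distributional statements themselves are exact only with $u(T_\lambda-1)$.
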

\begin{proof}
\chng{
	Let $T_\lambda$ be the time $\lambda$ is formed and $T_M$ be the last time we have the exact triangle count, that is the last time all edges are unconditionally sampled. Let $S_\lambda^{-}$ be a set of triangles that are formed when we have the exact triangle counts, i.e. $T_\lambda\leq T_M$ , and $S_\lambda^{+}$ be a set of triangles that are formed when we do not have the exact triangle counts, i.e. $T_\lambda> T_M$. Note that every triangle is included in either $S_\lambda^{-}$ or $S_\lambda^{+}$. We define $X_\lambda^{-}$ and $X_\lambda^{+}$ as follows.
	$$
	X_{\lambda}^{-} =
	\begin{cases}
	1 &  \lambda \ is \ counted\\
	0 & otherwise.
	\end{cases}
	$$	
	$$
	X_{\lambda}^{+} =
	\begin{cases}
	q_{T_\lambda}= \frac{M-2}{M} \cdot \frac{1}{(h_{max}^2)} & \lambda \ is \ counted\\
	0 & otherwise.
	\end{cases}
	$$	
}

\chng{
Then,
$$\expect{c_u} =  \sum\limits_{\lambda \in \Lambda_u \cap S_\lambda^{-}} \expect{X_{\lambda}^{-}} + \sum\limits_{\lambda \in \Lambda_u \cap S_\lambda^{+}} \expect{X_{\lambda}^{+}}$$
where $\Lambda_u$ is the set of triangles containing node $u$.
For a triangle $\lambda\in \Lambda_u \cap S_\lambda^{-}$, $$
\expect{X_{\lambda}^{-}}=1 \times\prob{\lambda \ is \ counted}=1 $$ because $\prob{\lambda \ is \ counted}=1$ since all edges are unconditionally sampled at $T\leq T_M$.
}

\chng{
Now we show $\expect{X_{\lambda}^{+}}=1$ for each triangle $\lambda\in \Lambda_u \cap S_\lambda^{+}$.
} 
Let $u(T_\lambda)$ be the number of unique edges at time $T_\lambda$. Leaving out all the repeated edges, we get a refined stream that can be viewed as a sequence of independent random variables $h_i(1 \leq i \leq u(T_\lambda-1))$ that has uniform distribution in range (0,1).
Note that in \methodw triangle is updated before sampling the edge thus when updating a triangle at $T_\lambda$, the edges in the buffer is a result of one time stamp ago, i.e, $T_\lambda-1$.
Then $h_{max}$ can be viewed as the $M$th smallest value among $u(T_\lambda-1)$ number of random variables in range $(0,1)$.
By order statistics, $h_{max} \sim Beta(M,u(T_\lambda-1)+1-M)$ and its probability density function is given as $f(x)=\frac{\gamma(u(T_\lambda-1)+1)}{\gamma(M) \cdot \gamma(u(T_\lambda-1)+1-M)} \cdot x^{M-1} \cdot (1-x)^{u(T_\lambda-1)-M}$.
To get \chng{$\expect{X_{\lambda}^{+}}$}, we compute the expectation of $q_{T_\lambda-1}$ over $h_{max}$, and multiply it with the probability of $\lambda$ being counted.
\chng{
The probability of an event of a triangle $\lambda$ being counted is $\frac{M(M-1)}{u(T_\lambda-1)(u(T_\lambda-1)-1)}$ because \mchng{$\lambda$} is counted if \mchng{an arriving} edge forms $\lambda$ with the other two edges in the buffer. Note that $\prob{\lambda ~is ~counted}$ is independent of an event $h_{max}=x$.
}
	\begin{align*}
	\chng{\expect{X_{\lambda}^{+}}} & = \prob{\lambda ~is ~counted}\cdot \int_{0}^{1} q_{(T_\lambda-1)} \cdot f(x) dx \\
	&=\frac{M(M-1)}{u(T_\lambda-1)\cdot(u(T_\lambda-1)-1)} \cdot \int_{0}^{1}\frac{M-2}{M} \frac{1}{x^2} \cdot f(x) dx\\
	&=\int_{0}^{1}  \frac{\gamma(u(T_\lambda-1)-1)}{\gamma(M-2) \cdot \gamma(u(T_\lambda-1)+1-M)} \cdot x^{M-3} \cdot (1-x)^{u(T_\lambda-1)-M}dx\\
	&=1
	\end{align*}
	
	\chng{
		Therefore, $$
		\expect{c_u} = \sum\limits_{\lambda \in \Lambda_u \cap S_\lambda^{-}} \expect{X_{\lambda}^{-}} + \sum\limits_{\lambda \in \Lambda_u \cap S_\lambda^{+}} \expect{X_{\lambda}^{+}}= \sum\limits_{\lambda \in \Lambda_u \cap S_\lambda^{-}}1 + \sum\limits_{\lambda \in \Lambda_u \cap S_\lambda^{+}}1= |\Lambda_u| = \varDelta_u.
		$$
		}
\qed
\end{proof}

\subsection{Improving Accuracy of FURL}
\label{sec:method:x}

The unbiased estimation of \methodc and \methodm guarantees the accuracy if the estimation is obtained by averaging results from multiple independent trials.
In real situations of processing graph streams, however, it requires very high costs to keep multiple independent sample graphs or scan a graph multiple times.
It means that when only one trial is allowed, the accuracy of an estimation greatly depends on the variance as well as the unbiasedness.
In this section, we propose \methodcx and \methodmx which have a lower variance than \methodc and \methodm respectively with a slightly biased estimation.
As a result, the overall estimation error of \methodcx and \methodmx become smaller than those of \methodc and \methodm.

\subsubsection{\methodcxrm}
The main idea of \methodcx (\algref{alg:methodcx}) is to combine past estimations with the current one.
The procedure of \methodcx is very similar to that of \methodc, but its estimation is calculated by a weighted average of estimations obtained at certain times.

Let $c^{(t)}$ and $\tau^{(t)}$ be the estimations of \methodc and \methodcx at time $t$, respectively. Let $T_M$ be the last time when $c^{(t)}$ equals the true triangle count.
$T_M$ is the time when $|D|=M$ and thus $ExactCnt$ becomes $false$, because we cannot store all the edges after this moment (line 13).
We consider $[1,T_M]$ to be the bucket $0$ and divide the interval $[T_M+1, T]$ into buckets of size $J>0$.
Then, $\tau^{(t)} = c^{(t)}$ during the bucket $0$ and $\tau^{(t)} = \delta \tau^{(T_M+(i-1)J)} + (1-\delta)c^{(t)}$ during the bucket $i > 0$.
Note that \methodcx updates the estimation by $\tau^{(T_M+iJ)} = \delta \tau^{(T_M+(i-1)J)} + (1-\delta)c^{(T)}$ at the boundary of each bucket $i>0$, \ie $T=T_M+iJ$.

\SetKwFunction{Query}{$Query$}
\SetKwFunction{UpdateEstimation}{$UpdateEst$}

In \algref{alg:methodcx},
\methodcx internally uses \methodc and accepts two more parameters: the bucket size $J$ and a decaying factor $0\leq \delta< 1$ for the past estimation.
Note that \methodcx with $\delta=0$ is the same as \methodc.

\begin{algorithm}[t!]	
	\SetKwFunction{WeightAvg}{$WeightedAverage$}
	\SetKwFunction{SampleSX}{$SampleNewEdge\text{-SX}$}
	\SetKwProg{myproc}{Subroutine}{}{end}
	
	\KwIn{Graph stream $S$, maximum number $M$ of edges stored, interval $J$ for updating estimation, and decaying factor $\delta$ for past estimation}
	\KwOut{Local triangle estimation $\tau= \Query(\delta)$}
	
	\medskip
	
	Initialize variables as in lines 1-4 of \algref{alg:methodc}.\;
	\ForEach(~~\tcp*[h]{at time $T$}){edge $e=(u,v)$ from $S$}{
		$\DiscoverNode(u)$.\;
		$\DiscoverNode(v)$.\;
		$\UpdateS(e)$. 	//	$c$ is computed here.\;
		$\SampleSX(e)$.\;
		$\WeightAvg(\delta)$.\;
	}
	
		\medskip	
		\myproc{\SampleSX{$e$}}{
			\lIf{$|D|<M$}{
				Append $e$ to D.
			}\Else{
			\If{$ExactCnt=true$}{
				$T_M \leftarrow T$.\;
				$ExactCnt \leftarrow false$. \label{line:exactcnt_off}
			}
			$\ReplaceS(e)$.\;	
		}
	}
	
	\smallskip
	\myproc{$\WeightAvg{$\delta$}$}{
		\If{$T = T_M$}{
			$\hat{\tau} \leftarrow c$.
		}\ElseIf{$T > T_M$ and $(T-T_M)\bmod{J} = 0$}{
		$\hat{\tau} \leftarrow \delta\hat{\tau} + (1-\delta)c$. \label{line:our_est_update}
		}
	}

	\smallskip
	\myproc{$\Query{$\delta$}$}{
		\lIf{$T\leq T_M$}{
			\Return $c$.
		}\lElseIf{$(T-T_M)\bmod{J} = 0$}{
		\Return $\hat{\tau}$.
	}\lElse {
	\Return $\delta\hat{\tau} + (1-\delta)c$.
		}
	}

\caption{\methodcx for simple graph stream}
\label{alg:methodcx}
\end{algorithm}

Below,
we \chngwww{show} the expectation and the variance of each triangle appearing in the stream \chngwww{(Lemmas~\ref{lem:our_expect} and~\ref{lem:our_var})}.
Note that for triangles in the $0$th bucket, \ie~for the time $T\leq M+1$, \methodcx provides unbiased estimation with variance $0$ which means the exact counting. Thus, we consider a triangle $\lambda$ with $b_\lambda>0$ below. 
\begin{lemma}\label{lem:our_expect}
	Let \mchng{$b_\lambda$ be the bucket where $\lambda$ is formed} and $Y_\lambda$ be the estimated count of a triangle $\lambda$ \mchng{with $b_\lambda>0$} by \methodcx.
	For every triangle $\lambda$,
	$$
	\expect{Y_\lambda} = 1 - \phi(b_\lambda),
	$$
	where $\phi(i) = \delta^{B-i+1}$ and $B$ is the current bucket.
\end{lemma}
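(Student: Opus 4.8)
The plan is to isolate the contribution of a single triangle $\lambda$ to the \methodcx estimate and to exploit the fact that the weighted-average recursion is \emph{linear} in the underlying \methodc estimate $c$, with coefficients that are completely deterministic given the stream. Since $c^{(t)} = \sum_{\mu\,:\,T_\mu \le t} X_\mu$ is a cumulative sum of the per-triangle contributions $X_\mu$ produced by \methodc, linearity lets me express $\tau^{(T)}$ as a sum over triangles of (deterministic weight) $\times\, X_\lambda$. I then only need to compute the weight attached to $\lambda$ and take expectations using $\expect{X_\lambda}=1$ from \lemref{lem:methodc_expect}.

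Concretely, I would track how $\lambda$ propagates through the bucket-boundary updates. Writing $Y_\lambda^{(i)}$ for the portion of the stored estimate $\hat{\tau}$ at the end of bucket $i$ (time $T_M+iJ$) that is due to $\lambda$, note that $Y_\lambda^{(i)}=0$ for every bucket $i<b_\lambda$, since $\lambda$ has not yet been formed and contributes nothing to $c$. For $i\ge b_\lambda$ the update $\hat{\tau}\leftarrow\delta\hat{\tau}+(1-\delta)c$ (line~\ref{line:our_est_update}) yields the linear recurrence
\begin{equation*}
Y_\lambda^{(i)} = \delta\, Y_\lambda^{(i-1)} + (1-\delta)\,X_\lambda,
\end{equation*}
because from bucket $b_\lambda$ onward $\lambda$ contributes exactly $X_\lambda$ to $c$ at each boundary. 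Unrolling the recurrence from its first nonzero term at $i=b_\lambda$ gives a geometric sum,
\begin{equation*}
Y_\lambda^{(i)} = (1-\delta)\,X_\lambda \sum_{k=0}^{i-b_\lambda}\delta^{k} = X_\lambda\left(1-\delta^{\,i-b_\lambda+1}\right).
\end{equation*}

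Setting $i=B$ (the current bucket) and taking expectations, the deterministic factor pulls out and \lemref{lem:methodc_expect} supplies $\expect{X_\lambda}=1$, so
\begin{equation*}
\expect{Y_\lambda} = \left(1-\delta^{\,B-b_\lambda+1}\right)\expect{X_\lambda} = 1-\delta^{\,B-b_\lambda+1} = 1-\phi(b_\lambda),
\end{equation*}
which is the claim at a bucket boundary. For a query at an interior time $T$ of bucket $B$, the returned value is $\delta\hat{\tau}+(1-\delta)c$ with $\hat{\tau}=\hat{\tau}^{(B-1)}$, so the same computation with $Y_\lambda = \delta\,Y_\lambda^{(B-1)}+(1-\delta)X_\lambda$ reproduces the identical weight $1-\delta^{\,B-b_\lambda+1}$; I would record this as a one-line check so the statement holds for all times $T$ in bucket $B$, not only at boundaries.

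The geometric-sum algebra is routine; the step needing the most care is the clean separation of randomness from determinism. I would make explicit that $T_\lambda$, hence the bucket $b_\lambda$, and $T_M$ are all fixed by the arrival order of the stream and carry no randomness, so the only random object here is $X_\lambda$. This is precisely what justifies factoring $X_\lambda$ through the linear recursion and commuting it with the expectation. A secondary point to confirm is that each triangle is added to $c$ exactly once and never removed (reservoir sampling drops edges from $D$ but never decrements $c$), which is what makes $c^{(t)}$ the clean cumulative sum the argument relies on.
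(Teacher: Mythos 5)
Your proof is correct and follows essentially the same route as the paper's: it isolates the per-triangle contribution, unrolls the bucket-boundary recursion into the geometric weight $\sum_{j=b_\lambda}^{B}(1-\delta)\delta^{B-j}=1-\delta^{B-b_\lambda+1}$ (exactly the paper's $W(j)$ weights, which the paper asserts ``by definition''), and multiplies by the unit expectation of the \methodc contribution $X_\lambda$, which is the same as the paper's use of $\prob{\lambda \text{ counted}}\cdot q_{T_\lambda}=1$. Your extra check that the same weight arises at interior (non-boundary) query times is a small refinement the paper leaves implicit.
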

\begin{proof}
	Let $W(i)=(1-\delta)\delta^{B-i}$ for $i\geq 1$ and $q_{T_\lambda} =  \frac{(T-1)(T-2)}{M(M-1)}$. By definition, for $\lambda$ appearing in bucket $b_\lambda$, $Y_\lambda$ becomes $q_{T_\lambda} W(b_\lambda) + q_{T_\lambda} W(b_\lambda+1) + \cdots + q_{T_\lambda} W(B)$ if $\lambda$ is counted, otherwise $Y_\lambda=0$.
	Thus, we obtain
	$$
		\expect{Y_\lambda} = \prob{\lambda \ is \ counted} \cdot \left(q_{T_\lambda}\sum_{b_\lambda\leq j\leq B} W(j)\right) = 1 - \phi(b_\lambda).
	$$
	which proves the lemma.
	\qed
\end{proof}

\begin{lemma}\label{lem:our_var}
	Let \mchng{$b_\lambda$ be the bucket where $\lambda$ is formed} and $Y_\lambda$ be the estimated count of a triangle $\lambda$ \mchng{with $b_\lambda>0$} by \methodcx.
	For every triangle $\lambda$,
	$$
	\var{Y_\lambda} = \left(1-\phi(b_\lambda)\right) ^2
	\left( q_{T_\lambda} - 1 \right),
	$$
	where $T_\lambda$ is the time that the last edge of $\lambda$ arrives,
	$q_{T_\lambda} = \frac{(T_\lambda-1)(T_\lambda-2)}{M(M-1)}$,
	$\phi(i) = \delta^{B-i+1}$, and $B$ is the current bucket.
\end{lemma}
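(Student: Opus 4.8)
The plan is to exploit the fact that, for a fixed triangle $\lambda$ with $b_\lambda > 0$ and a fixed current bucket $B$, the quantity $Y_\lambda$ is a \emph{two-valued} random variable whose only source of randomness is the single event ``$\lambda$ is counted''. Reusing the notation from the proof of Lemma~\ref{lem:our_expect}, I would write $W(j) = (1-\delta)\delta^{B-j}$ and recall that $Y_\lambda = q_{T_\lambda}\sum_{b_\lambda \leq j \leq B} W(j)$ when $\lambda$ is counted and $Y_\lambda = 0$ otherwise. Since $q_{T_\lambda}$, the weights $W(j)$, and $B$ are all deterministic once $\lambda$ and its arrival time $T_\lambda$ are fixed, $Y_\lambda$ equals a fixed constant $a := q_{T_\lambda}\sum_{b_\lambda \leq j \leq B} W(j)$ times the indicator of the counting event.

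First I would identify the counting probability. Exactly as in the proof of Lemma~\ref{lem:methodc_expect}, a triangle with $T_\lambda > M+1$ is counted at $T_\lambda$ iff its other two edges both reside in the buffer at that moment, which under reservoir sampling happens with probability $\prob{\lambda \ is \ counted} = \frac{M(M-1)}{(T_\lambda-1)(T_\lambda-2)} = \frac{1}{q_{T_\lambda}}$; I abbreviate this as $p$. Because $Y_\lambda = a$ with probability $p$ and $Y_\lambda = 0$ otherwise, I would compute $\expect{Y_\lambda^2} = a^2 p$ and hence $\var{Y_\lambda} = \expect{Y_\lambda^2} - (\expect{Y_\lambda})^2 = a^2 p - (a p)^2 = a^2 p (1-p)$, the standard scaled-Bernoulli variance.

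The final step ties everything back to the previous lemma by a substitution. Rewriting $a^2 p(1-p) = (ap)^2 \cdot \frac{1-p}{p}$ and using $ap = \expect{Y_\lambda} = 1 - \phi(b_\lambda)$ from Lemma~\ref{lem:our_expect} together with $p = 1/q_{T_\lambda}$ gives $\frac{1-p}{p} = q_{T_\lambda} - 1$, so that $\var{Y_\lambda} = (1-\phi(b_\lambda))^2 (q_{T_\lambda} - 1)$, as claimed. This route has the advantage of reusing $\expect{Y_\lambda}$ rather than re-evaluating the geometric sum $\sum_{j} W(j)$ explicitly.

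I do not anticipate a genuine obstacle. The one point that requires care is recognizing that the bucket weights and the decay factor $\delta$ enter $Y_\lambda$ only through the deterministic constant $a$, so that no additional randomness (for instance, from which later buckets the triangle's contribution is carried into) needs to be averaged over; the entire stochasticity is the single indicator whether $\lambda$ was counted at $T_\lambda$. Once that observation is made, the variance collapses to $a^2 p(1-p)$ and the substitution $p = 1/q_{T_\lambda}$ finishes the argument.
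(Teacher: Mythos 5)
Your proposal is correct and follows essentially the same route as the paper: the paper's proof simply applies $\var{Y_\lambda} = \expect{Y_\lambda^2} - \expect{Y_\lambda}^2$ together with the two-valued representation of $Y_\lambda$ established in Lemma~\ref{lem:our_expect}, which is exactly the scaled-Bernoulli computation $a^2p(1-p)$ you carry out. Your write-up merely makes explicit the substitution $ap = 1-\phi(b_\lambda)$ and $p = 1/q_{T_\lambda}$ that the paper leaves implicit.
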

\begin{proof}
	Following the definition $\var{Y_\lambda} = \expect{Y_\lambda^2} - \expect{Y_\lambda}^2$ with \lemref{lem:our_expect}, the proof is done.
	\qed
\end{proof}

Note that the expectation and the variance of \methodcx are smaller than those of \methodc that has $\expect{X_\lambda}=1$ and $\var{X_\lambda}=q_{T_\lambda}-1$, respectively, where $X_\lambda$ is the estimated count of $\lambda$ by \methodc.

Now, we provide Lemmas~\ref{lem:our_accuracy} and \ref{lem:our_concent_prob}, which are the main results of this section; they state that the estimation of \methodcx is much concentrated around the \chngwww{true value}, \chngwww{\ie more close to the ground truth,} compared with \methodc.

\begin{lemma} \label{lem:our_accuracy}
	Consider any triangle $\lambda$ that is counted at time $T_\lambda > M + 1$.
	If $T_\lambda \geq \sqrt{2}M + 1$,
	the interval $\expect{Y_\lambda} \pm \var{Y_\lambda}$ is strictly included in the interval $\expect{X_\lambda} \pm \var{X_\lambda}$.
\end{lemma}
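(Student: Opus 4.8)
The plan is to reduce the claimed interval containment to two scalar endpoint comparisons and then verify each. Write $\phi := \phi(b_\lambda) = \delta^{B-b_\lambda+1}$ and $Q := q_{T_\lambda}-1$. Since we are in the nontrivial regime $0<\delta<1$ and $b_\lambda \le B$, we have $\phi \in (0,1)$, hence $1-\phi \in (0,1)$; and since $T_\lambda > M+1$ we have $q_{T_\lambda} > 1$, so $Q > 0$. By \lemref{lem:our_expect} and \lemref{lem:our_var} the \methodcx interval is $[(1-\phi)-(1-\phi)^2 Q,\,(1-\phi)+(1-\phi)^2 Q]$, while the \methodc interval is $[1-Q,\,1+Q]$ using $\expect{X_\lambda}=1$ and $\var{X_\lambda}=Q$. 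Strict inclusion is equivalent to the \methodcx left endpoint lying strictly above $1-Q$ and its right endpoint lying strictly below $1+Q$.

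The right-hand inequality is the easy one and needs no hypothesis on $T_\lambda$: after cancellation, $(1-\phi)+(1-\phi)^2 Q < 1+Q$ is equivalent to $-Q(2-\phi)<1$, whose left side is negative. Intuitively, \methodcx has both a smaller mean ($1-\phi<1$) and a smaller half-width ($(1-\phi)^2 Q \le Q$), so its interval cannot protrude on the right. The left-hand inequality is where the work lies: $1-Q < (1-\phi)-(1-\phi)^2 Q$ rearranges to $\phi < Q\bigl(1-(1-\phi)^2\bigr)=Q\,\phi(2-\phi)$, and dividing by $\phi>0$ this is exactly $Q(2-\phi)>1$. Since $\phi<1$ gives $2-\phi>1$, it suffices to show $Q\ge 1$, for then $Q(2-\phi)\ge 2-\phi>1$.

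Thus everything comes down to showing that the hypothesis $T_\lambda \ge \sqrt 2\,M+1$ forces $Q=q_{T_\lambda}-1\ge 1$, \ie $q_{T_\lambda}\ge 2$, \ie $(T_\lambda-1)(T_\lambda-2)\ge 2M(M-1)$. This is the step I expect to be the crux, since it is the only place the constant $\sqrt 2$ enters and one must check that the stated threshold is genuinely sufficient. I would substitute $T_\lambda-1\ge \sqrt 2\,M$ and $T_\lambda-2\ge \sqrt 2\,M-1$ (both factors positive) to bound the product below by $\sqrt 2\,M(\sqrt 2\,M-1)=2M^2-\sqrt 2\,M$, and then note $2M^2-\sqrt 2\,M \ge 2M^2-2M = 2M(M-1)$ because $\sqrt 2 \le 2$. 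This yields $q_{T_\lambda}\ge 2$, hence $Q\ge 1$, which closes the left-endpoint inequality and completes the argument. The one subtlety worth stating explicitly is the standing assumption $\delta>0$: if $\delta=0$ then $\phi=0$, \methodcx coincides with \methodc, and the inclusion degenerates to an equality rather than a strict one.
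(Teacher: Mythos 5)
Your proof is correct and follows essentially the same route as the paper's: both reduce strict inclusion to the two endpoint comparisons, dismiss the right endpoint via underestimation ($1-\phi<1$), and reduce the left endpoint to the same scalar condition — your $Q(2-\phi)>1$ is algebraically identical to the paper's $\psi_\lambda > (2-q_{T_\lambda})/(q_{T_\lambda}-1)$. The only differences are in the finishing estimate (the paper plugs in $\psi_\lambda\geq 1-\delta$ and $q_{T_\lambda}>\alpha^2$ to get the sufficient condition $\alpha^2\geq(3-\delta)/(2-\delta)$, whereas you use the coarser bound $2-\phi>1$ and check $q_{T_\lambda}\geq 2$ directly), and your explicit $\delta>0$ caveat is a legitimate refinement the paper leaves implicit.
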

\begin{proof}
	We first show that $\expect{Y_\lambda} - \var{Y_\lambda} > \expect{X_\lambda} - \var{X_\lambda}$.
	
	Let $\psi_\lambda=1-\delta^{B-b_\lambda+1}$; we will find the condition satisfying
	$
	\psi_\lambda - \psi_\lambda^2\left(q_{T_\lambda}-1\right) - 2 + q_{T_\lambda} > 0,
	$
	which is developed as follows.
	\begin{align}
		  \psi_\lambda &> \frac{\left(2-q_{T_\lambda}\right)}{\left(q_{T_\lambda}-1\right)}  \label{eq:our_acc_main_ineq}
	\end{align}
	
	Below, we will show the condition under which \eqref{eq:our_acc_main_ineq} holds.
	
	Let $T_\lambda - 1 = \alpha M$.
	By definition,
	\begin{align*}
		q_{T_\lambda} &= \frac{(T_\lambda-1)(T_\lambda-2)}{M(M-1)}
		> \frac{(T_\lambda-1)^2}{M^2} = \alpha^2.
	\end{align*}
	
	Then,
	\begin{align} \label{eq:our_acc_q}
		\frac{2-q_{T_\lambda}}{q_{T_\lambda}-1} &<
			\frac{2-\alpha^2}{\alpha^2-1} = \frac{1}{\alpha^2-1} - 1.
	\end{align}
	
	Now we examine the left term of \eqref{eq:our_acc_main_ineq}. Since $1\leq b_\lambda\leq B$, the lower bound of $\psi_\lambda$ becomes
	\begin{align} \label{eq:our_acc_p_lb}
		\psi_\lambda \geq 1 - \delta.
	\end{align}
	
	Putting \eqref{eq:our_acc_q} and~\eqnumref{eq:our_acc_p_lb} into \eqref{eq:our_acc_main_ineq}, we obtain a sufficient condition for \eqnumref{eq:our_acc_main_ineq} as follows:
	\begin{align}
		\alpha^2 &\geq \frac{3-\delta}{2-\delta} \label{eq:our_acc_final_ineq}
	\end{align}
	
	For $\alpha \geq \sqrt{2}$,
	\eqref{eq:our_acc_final_ineq} always holds since the upper bound of the right term becomes $2$ due to $0\leq \delta<1$.
	Thus, we finally obtain the condition under which \eqref{eq:our_acc_main_ineq} holds as follows:
	$$
	T_\lambda \geq \sqrt{2}M + 1.
	$$
	
	Due to the underestimation of \methodcx, $\expect{Y_\lambda} + \var{Y_\lambda} < \expect{X_\lambda} + \var{X_\lambda}$ always holds, which completes the proof.
\qed
\end{proof}


Note that \lemref{lem:our_accuracy} holds for all triangles except for the first few ones.
Fortunately, for those exceptive triangles, the estimation error is insignificant by construction, compared with \methodc.

Below, we state that \methodcx guarantees a larger lower bound of the probability that its estimation is around the true value than \methodc does.

\begin{lemma}\label{lem:our_concent_prob}
	Given a triangle $\lambda$ with $T_\lambda>M+1$,
	let $\mu_Y = \expect{Y_\lambda}, \sigma_Y^2 = \var{Y_\lambda}$, and $\epsilon = \mu_X - \mu_Y+\sigma_Y^2$ where $\mu_X = \expect{X_\lambda} = 1$ is the unbiased estimate of the triangle count for $\lambda$.
	Then,
	$\prob{|Y_\lambda - \mu_X| < \epsilon}$ has a larger lower bound based on Chebyshev inequality than $\prob{|X_\lambda - \mu_X| < \epsilon}$ does
	if $(2-\delta)/(1-\delta)<q_{T_\lambda}$.
\end{lemma}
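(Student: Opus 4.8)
The plan is to derive a Chebyshev lower bound for each of $\prob{|Y_\lambda-\mu_X|<\epsilon}$ and $\prob{|X_\lambda-\mu_X|<\epsilon}$ and then compare them. I would begin by recording the first two moments supplied by \lemref{lem:our_expect} and \lemref{lem:our_var}: writing $\psi=1-\phi(b_\lambda)$, we have $\mu_Y=\expect{Y_\lambda}=\psi$ and $\sigma_Y^2=\var{Y_\lambda}=\psi^2(q_{T_\lambda}-1)$, while for \methodc the corresponding quantities are $\mu_X=1$ and $\sigma_X^2=\var{X_\lambda}=q_{T_\lambda}-1$. For $X_\lambda$, whose mean coincides with the centering point $\mu_X$, Chebyshev applies directly and gives $\prob{|X_\lambda-\mu_X|<\epsilon}\geq 1-\sigma_X^2/\epsilon^2$.

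The delicate case is $Y_\lambda$, because it is centered at $\mu_X$ rather than at its own mean $\mu_Y$. Here I would invoke the reverse triangle inequality: the event $|Y_\lambda-\mu_X|\geq\epsilon$ forces $|Y_\lambda-\mu_Y|\geq\epsilon-|\mu_X-\mu_Y|$. The purpose of the definition $\epsilon=\mu_X-\mu_Y+\sigma_Y^2$ is precisely that $\epsilon-|\mu_X-\mu_Y|=\sigma_Y^2$ (here $\mu_X\geq\mu_Y$, so $|\mu_X-\mu_Y|=\mu_X-\mu_Y$). Applying Chebyshev to $Y_\lambda$ about its true mean at radius $\sigma_Y^2$ then yields $\prob{|Y_\lambda-\mu_Y|\geq\sigma_Y^2}\leq\sigma_Y^2/(\sigma_Y^2)^2=1/\sigma_Y^2$, whence $\prob{|Y_\lambda-\mu_X|<\epsilon}\geq 1-1/\sigma_Y^2$.

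It remains to show this bound beats the one for $X_\lambda$, that is $1-1/\sigma_Y^2>1-\sigma_X^2/\epsilon^2$, which is equivalent to $\epsilon^2<\sigma_X^2\sigma_Y^2=\psi^2(q_{T_\lambda}-1)^2$, i.e. $\epsilon<\psi(q_{T_\lambda}-1)$. Substituting $\epsilon=(1-\psi)+\psi^2(q_{T_\lambda}-1)$ and cancelling the common factor $(1-\psi)>0$ collapses this to the clean inequality $\psi(q_{T_\lambda}-1)>1$. Since $b_\lambda\geq 1$ forces $\psi\geq 1-\delta$, it suffices that $(1-\delta)(q_{T_\lambda}-1)>1$, which rearranges exactly to the hypothesis $(2-\delta)/(1-\delta)<q_{T_\lambda}$. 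I expect the main obstacle to be the $Y_\lambda$ step rather than the algebra: one must correctly handle the gap between the centering point and the mean, and notice that $\epsilon$ was engineered so the reverse-triangle shift lands at radius exactly $\sigma_Y^2$; any looser choice of radius would degrade the bound and break the final comparison.
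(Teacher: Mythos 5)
Your proof is correct and takes essentially the same approach as the paper's: the same two Chebyshev bounds (applying Chebyshev to $Y_\lambda$ about its own mean with the shift $\mu_X-\mu_Y$ engineered so the radius lands exactly at $\sigma_Y^2$), the same comparison of the resulting lower bounds, and the same sufficient condition obtained from $\psi \geq 1-\delta$. Your reduction of the comparison to $\psi\left(q_{T_\lambda}-1\right)>1$ is simply a more compact version of the algebraic chain in the paper's appendix.
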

\begin{proof}
	Using the Chebyshev inequality, we obtain
	$
	\prob{|Y_\lambda-\mu_Y| < \sigma_Y^2} \geq 1-{1}/{\sigma_Y^2}
	$, and thus $\prob{|Y_\lambda-\mu_X| < \epsilon} \geq 1-{1}/{\sigma_Y^2}$ since $\mu_X=1$ and $\mu_Y\leq \mu_X$. Similarly, the following also holds:
	$
	\prob{|X_\lambda-\mu_X| < \epsilon} \geq 1-{\sigma_X^2}/{\epsilon^2}.
	$
	
	Applying the desired lower bound condition $1 - {1}/{\sigma_Y^2} > 1 - {\sigma_X^2}/{\epsilon^2}$, we get 	 \begin{align}\label{eq:concent_prob1}
		1-\phi(b_\lambda) > {1}/{\left(q_{T_\lambda}-1\right)},
	\end{align}
	where $\phi(i) = \delta^{B-i+1}$.
	Since $1-\phi(b_\lambda) \geq 1-\delta$ regardless of $b_\lambda$,  \eqref{eq:concent_prob1} has a sufficient condition of $1-\delta > 1/(q_{T_\lambda}-1)$. Rearranging terms, we obtain $(2-\delta)/(1-\delta)<q_{T_\lambda}$.
	\qed
\end{proof}

\chngwww{Consequently, \methodcx provides lower error compared to \methodc, which is also shown in the experimental results at \secref{sec:experiment}.
}

\subsubsection{\methodmxrm}
\methodmx improves \methodm by combining past estimations with the current one.
%
%
%
We present two versions of \methodmx for binary and weighted triangle counting since \methodm considers both of them.

\paragraph{Binary Counting (\methodbx)}
We propose \methodbx which \chngwww{improves} the binary local triangle counting algorithm \methodb.
\chngwww{
\methodbx first goes through updating triangle estimation and sampling process as in \methodb, and then goes through $WeightedAverage$ as shown in Algorithm~\ref{alg:methodbx}.
$T_M$, the last time when $c^{(t)}$ equals the true triangle count,
is the time when $|D|=M$ and thus $ExactCnt$ becomes $false$.
The right-hand side term of line 12 is $(T-1)$ because \methodbx samples edge first and then updates triangle estimation.
}

%

\begin{algorithm}[t!]
	\SetKwProg{myproc}{Subroutine}{}{end}
	\SetKwFunction{SampleMXB}{$SampleNewEdge\text{-MX}_B$}
	\KwIn{Graph stream $S$, maximum number $M$ of edges stored, interval $J$ for updating estimation, and decaying factor $\delta$ for past estimation}
	\KwOut{Local triangle estimation $\tau = \Query(\delta)$}
	
	\medskip
	
	Initialize variables as in lines 1-4 of \algref{alg:methodc}.\;
	\ForEach(~~\tcp*[h]{at time $T$}){edge $e=(u,v)$ from $S$}{
		$\DiscoverNode(u)$.\;
		$\DiscoverNode(v)$.\;
		\If{$e \notin D$}{ \label{line:not_in_buffer2}
		$\SampleMXB(e)$.\;
		$\UpdateMB(e)$. 	//	$c$ is computed here.\;
		}
		$\WeightAvg(\delta)$.\;
	}
	
%
%
%

\myproc{$\SampleMXB{e}$}{
	\lIf{$|D|<M$}{
		Append $e$ to D.
	}\Else{
		\If{$ExactCnt=true$}{
			$T_M \leftarrow T-1$.\; \label{line:exactcnt_off_binary}
			$ExactCnt \leftarrow false$.
		}
	$\ReplaceM(e)$.\;	
	}
}

\caption{\methodbx for multigraph stream (binary)}
\label{alg:methodbx}
\end{algorithm}

We analyze the expectation and the variance of each triangle appearing in the stream and show that \methodbx gives more accurate results closer to the ground truth than \methodb does.  
Note that for triangles in the $0$th bucket, \ie~for the time $u(T) \leq M$, \methodbx provides the exact counting. Thus, we consider a triangle $\lambda$ with $b_\lambda>0$ below.
Let $Y_\lambda$ and $X_\lambda$ be the estimated counts of a triangle $\lambda$ by \methodbx and \methodb, respectively.
\begin{lemma} \label{lem:our_expect2}
	Let \mchng{$b_\lambda$ be the bucket where $\lambda$ is formed} and $Y_\lambda$ be the estimated count of a triangle $\lambda$ \mchng{with $b_\lambda>0$} by \methodbx.
	For every triangle $\lambda$,
	$$
	\expect{Y_\lambda} = 1 - \phi(b_\lambda),
	$$
	where $\phi(i) = \delta^{B-i+1}$ and $B$ is the current bucket.
\end{lemma}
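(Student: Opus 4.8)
The plan is to mirror the proof of \lemref{lem:our_expect} for \methodcx, replacing the deterministic per-triangle increment used there by the random \methodb increment $X_\lambda$ and invoking the unbiasedness already established in \lemref{lem:methodb_expect}. The essential structural fact is the same for both algorithms: once a triangle $\lambda$ is counted in bucket $b_\lambda$, its contribution $X_\lambda$ enters the base estimate $c$, and since $c$ only accumulates and is never decremented, it is carried through every subsequent $(1-\delta)c$ term of the weighted-average recurrence $\hat{\tau} \leftarrow \delta\hat{\tau} + (1-\delta)c$.

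First I would unroll this recurrence to pin down the total multiplier applied to $X_\lambda$ by the time the current bucket $B$ is reached. Writing $W(j) = (1-\delta)\delta^{B-j}$ as in \lemref{lem:our_expect}, the contribution entering at bucket $b_\lambda$ is weighted by $(1-\delta)$, then by $\delta(1-\delta)$ one bucket later, and so on, so that $Y_\lambda = X_\lambda \sum_{j=b_\lambda}^{B} W(j)$ whenever $\lambda$ is counted and $Y_\lambda = 0$ otherwise. A short geometric-sum computation gives $\sum_{j=b_\lambda}^{B} W(j) = (1-\delta)\sum_{k=0}^{B-b_\lambda}\delta^k = 1 - \delta^{B-b_\lambda+1} = 1 - \phi(b_\lambda)$, so $Y_\lambda = (1-\phi(b_\lambda))X_\lambda$.

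The point that makes the expectation separate cleanly is that the bucket index $b_\lambda$, hence the factor $1-\phi(b_\lambda)$, is determined solely by the fixed arrival time $T_\lambda$ of the last edge of $\lambda$ and by $T_M$, both deterministic for a given stream; all of the randomness of $Y_\lambda$ lives in $X_\lambda$ (through the sampling event and through $h_{max}$ at time $T_\lambda$). I would therefore pull the deterministic factor out of the expectation, $\expect{Y_\lambda} = (1-\phi(b_\lambda))\expect{X_\lambda}$, and then invoke \lemref{lem:methodb_expect}, which shows $\expect{X_\lambda} = 1$ for every triangle counted after the buffer fills (i.e.\ with $b_\lambda > 0$), yielding $\expect{Y_\lambda} = 1 - \phi(b_\lambda)$.

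I expect the only real subtlety to be the bookkeeping in the first step: one must verify that $c$ at each bucket boundary still contains $X_\lambda$ (the monotone accumulation of the binary counter) and that no double counting occurs, since in \methodb a triangle is counted exactly once, at the first arrival of its three edges. Once that is settled, the rest is the same geometric collapse as in \lemref{lem:our_expect} combined with the already-proven unbiasedness of \methodb, so no fresh integration over the Beta-distributed $h_{max}$ is required here.
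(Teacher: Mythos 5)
Your proof is correct and follows essentially the same route as the paper: the paper proves this lemma simply by citing the argument of Lemma~\ref{lem:our_expect}, i.e.\ unrolling the recurrence $\hat{\tau} \leftarrow \delta\hat{\tau} + (1-\delta)c$ into the geometric weight sum $\sum_{b_\lambda\leq j\leq B} W(j) = 1-\phi(b_\lambda)$ and multiplying by the unbiased per-triangle estimate. Your write-up is in fact slightly more careful than the paper's one-line reference, since you explicitly note that $q_{T_\lambda}$ is now random through $h_{max}$ and handle this by invoking the per-triangle expectation $\expect{X_\lambda}=1$ already established inside the proof of Lemma~\ref{lem:methodb_expect}, rather than redoing the Beta-distribution integral.
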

\begin{proof}
	The lemma is proved in the same way as in Lemma~\ref{lem:our_expect}.
	\qed
\end{proof}

\begin{lemma} \label{lem:our_var2}
	Let \mchng{$b_\lambda$ be the bucket where $\lambda$ is formed} and $Y_\lambda$ be the estimated count of a triangle $\lambda$ \mchng{with $b_\lambda>0$} by \methodbx.
	For every triangle $\lambda$,
	$$
	\var{Y_\lambda} = \left(1-\phi(b_\lambda)\right) ^2
	\left( k_{T_\lambda} - 1 \right),
	$$
	where $T_\lambda$ is the first time all three edges of $\lambda$ arrive,
	$k_{T_\lambda} = \frac{(M-3)(u(T_\lambda)-3)(u(T_\lambda)-4)(u(T_\lambda)-5)}{M(M-4)(M-5)(M-6)}$,
	$\phi(i) = \delta^{B-i+1}$, and $B$ is the current bucket.
\end{lemma}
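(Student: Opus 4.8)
The plan is to reuse the two-step recipe behind \lemref{lem:our_var}: first write the per-triangle estimate $Y_\lambda$ as a single scaled Bernoulli-type variable, then apply $\var{Y_\lambda}=\expect{Y_\lambda^2}-\expect{Y_\lambda}^2$ with the mean already supplied by \lemref{lem:our_expect2}. Since \methodbx feeds its counts through exactly the same $WeightedAverage$ subroutine as \methodcx, the contribution of a triangle $\lambda$ formed in bucket $b_\lambda$ is scaled by $\sum_{b_\lambda\le j\le B}W(j)=1-\phi(b_\lambda)$ with $W(i)=(1-\delta)\delta^{B-i}$. Hence $Y_\lambda=q_{T_\lambda}(1-\phi(b_\lambda))$ when $\lambda$ is counted and $Y_\lambda=0$ otherwise, where now the increase factor $q_{T_\lambda}=\frac{M-3}{M}\cdot\frac{1}{h_{max}^3}$ of \methodb replaces the deterministic factor of the simple-graph case.

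The one genuinely new feature relative to \lemref{lem:our_var} is that $q_{T_\lambda}$ is itself random through $h_{max}$, so I would obtain the second moment by the same order-statistics argument used in the proof of \lemref{lem:methodb_expect}. Writing $u=u(T_\lambda)$ for the number of unique edges, I would use that all three edges of $\lambda$ are sampled with probability $\frac{M(M-1)(M-2)}{u(u-1)(u-2)}$ independently of the value $h_{max}=x$, and that $h_{max}\sim Beta(M,u+1-M)$ with the stated density $f(x)$. This gives
$$\expect{Y_\lambda^2}=(1-\phi(b_\lambda))^2\,\prob{\lambda \ is \ counted}\int_0^1 q_{T_\lambda}^2 f(x)\,dx.$$
Squaring $q_{T_\lambda}$ contributes a factor $1/x^6$, so the relevant integral becomes $\int_0^1 x^{M-7}(1-x)^{u-M}dx$, shifting the first argument of the Beta normalization from $M-3$ (as in the first-moment calculation) to $M-6$.

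Carrying out the Gamma-function cancellations, the product $\prob{\lambda \ is \ counted}\int_0^1 q_{T_\lambda}^2 f(x)\,dx$ should collapse exactly to $k_{T_\lambda}=\frac{(M-3)(u-3)(u-4)(u-5)}{M(M-4)(M-5)(M-6)}$, giving $\expect{Y_\lambda^2}=(1-\phi(b_\lambda))^2 k_{T_\lambda}$. Subtracting $\expect{Y_\lambda}^2=(1-\phi(b_\lambda))^2$ from \lemref{lem:our_expect2} then yields the claimed $\var{Y_\lambda}=(1-\phi(b_\lambda))^2(k_{T_\lambda}-1)$. I expect the main obstacle to be the bookkeeping in this Gamma-ratio simplification: tracking how the six descending factors in $u$ (from $\gamma(u+1)/\gamma(u-5)$) and the six factors in $M$ (from $\gamma(M-6)/\gamma(M)$) cancel against the three-over-three ratio in the sampling probability and the $(M-3)^2/M^2$ coefficient, so that everything reduces to the stated four-over-four ratio. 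A small final check is that $M\ge 7$ and $u\ge 6$, which keeps the integral convergent and every Gamma argument positive.
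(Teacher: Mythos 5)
Your proposal is correct and follows essentially the same route as the paper: the paper's proof is just a pointer ("proved in the same way as Lemma~\ref{lem:our_var}", which in turn is "apply $\var{Y_\lambda}=\expect{Y_\lambda^2}-\expect{Y_\lambda}^2$ with the expectation lemma"), and your argument is precisely that recipe with the implicit work made explicit. Your handling of the one genuinely new ingredient---that $q_{T_\lambda}$ is random through $h_{max}$, so the second moment needs the Beta/order-statistics integral from Lemma~\ref{lem:methodb_expect} with exponent $x^{M-7}$---is exactly what the paper's "same way" requires, and your Gamma-ratio cancellation to $k_{T_\lambda}$ checks out.
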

\begin{proof}
	The lemma is proved in the same way as in Lemma~\ref{lem:our_var}.
	\qed
\end{proof}

%

\begin{lemma} \label{lem:our_accuracy2}
	Consider any triangle $\lambda$ that is counted at time $T_\lambda > T_M$.
	If $u(T_\lambda) \geq \sqrt[3]{2}M + 3$,
	the interval by $\expect{Y_\lambda} \pm \var{Y_\lambda}$ is strictly included in that by $\expect{X_\lambda} \pm \var{X_\lambda}$.
\end{lemma}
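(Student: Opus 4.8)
The plan is to replay the skeleton of \lemref{lem:our_accuracy} essentially verbatim, substituting the multigraph variance factor $k_{T_\lambda}$ for the simple-graph factor $q_{T_\lambda}$ and a cubic surrogate $\beta^3$ (with $\beta=(u(T_\lambda)-3)/M$) for the quadratic surrogate $\alpha^2$. The needed moments are already in hand: \lemref{lem:our_expect2} and \lemref{lem:our_var2} give $\expect{Y_\lambda}=1-\phi(b_\lambda)$ and $\var{Y_\lambda}=(1-\phi(b_\lambda))^2(k_{T_\lambda}-1)$, while \methodb is unbiased (\lemref{lem:methodb_expect}) so $\expect{X_\lambda}=1$, and $\var{X_\lambda}=k_{T_\lambda}-1$ is the $\delta=0$ specialization of \lemref{lem:our_var2}. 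As in the simple case, the upper endpoints need no work: writing $\psi_\lambda:=1-\phi(b_\lambda)<1$, both $\expect{Y_\lambda}<1=\expect{X_\lambda}$ and $\var{Y_\lambda}=\psi_\lambda^2(k_{T_\lambda}-1)<k_{T_\lambda}-1=\var{X_\lambda}$, so $\expect{Y_\lambda}+\var{Y_\lambda}<\expect{X_\lambda}+\var{X_\lambda}$ automatically. Hence the entire content lies in the lower endpoint $\expect{Y_\lambda}-\var{Y_\lambda}>\expect{X_\lambda}-\var{X_\lambda}$.

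First I would reduce the lower-endpoint inequality to the same quadratic-in-$\psi_\lambda$ shape used before, namely $\psi_\lambda-\psi_\lambda^2(k_{T_\lambda}-1)-2+k_{T_\lambda}>0$. Since $\psi_\lambda=1$ is a root of the left-hand side and $\psi_\lambda<1$, $k_{T_\lambda}>1$, this factors and collapses exactly as in the derivation of \eqref{eq:our_acc_main_ineq} to the single requirement $\psi_\lambda>(2-k_{T_\lambda})/(k_{T_\lambda}-1)$. Applying the bucket-independent bound $\psi_\lambda\geq 1-\delta$, it then suffices to certify $1-\delta\geq (2-k_{T_\lambda})/(k_{T_\lambda}-1)$, which rearranges to $k_{T_\lambda}\geq (3-\delta)/(2-\delta)$. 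Because $(3-\delta)/(2-\delta)$ is increasing on $[0,1)$ with supremum $2$, the uniform bound $k_{T_\lambda}\geq 2$ suffices for every admissible $\delta$.

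The crux, and the only place that genuinely departs from \lemref{lem:our_accuracy}, is converting $k_{T_\lambda}\geq 2$ into the stated hypothesis on $u(T_\lambda)$. Setting $\beta=(u(T_\lambda)-3)/M$, I would regroup
\[
k_{T_\lambda} = \frac{u(T_\lambda)-3}{M}\cdot\frac{u(T_\lambda)-4}{M-4}\cdot\frac{u(T_\lambda)-5}{M-5}\cdot\frac{M-3}{M-6},
\]
and then verify $k_{T_\lambda}>\beta^3$ factor by factor: the first factor equals $\beta$; each of $\frac{u(T_\lambda)-4}{M-4}$ and $\frac{u(T_\lambda)-5}{M-5}$ is at least $\frac{u(T_\lambda)-3}{M-3}\geq\frac{u(T_\lambda)-3}{M}=\beta$ once $u(T_\lambda)\geq M$ (a shifted ratio $\frac{u(T_\lambda)-j}{M-j}$ with $j\geq 3$ dominates $\frac{u(T_\lambda)-3}{M-3}$ when the stream has overflowed the buffer); and the last factor $\frac{M-3}{M-6}>1$ for $M\geq 7$, which supplies strictness. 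This is the cubic analogue of the bound $q_{T_\lambda}>\alpha^2$ of the simple case. Finally $\beta\geq\sqrt[3]{2}$, i.e.\ $u(T_\lambda)\geq\sqrt[3]{2}M+3$, yields $k_{T_\lambda}>\beta^3\geq 2$, closing the argument. I expect the main obstacle to be precisely this factor-by-factor estimate: unlike the two-factor product $q_{T_\lambda}$, here the prefactor $(M-3)/M$ is below $1$ and must be ``borrowed'' against the $(M-6)$ in the denominator, so the regrouping above is essential and its validity rests on $u(T_\lambda)>M$ (guaranteed because $\lambda$ lies in a bucket past $T_M$). I would also keep the mild standing assumption $M\geq 7$ so that all denominators stay positive.
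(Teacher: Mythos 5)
Your proposal is correct and takes essentially the same route as the paper's proof: the same reduction of the lower endpoint to $\psi_\lambda > (2-k_{T_\lambda})/(k_{T_\lambda}-1)$, the same bucket-independent bound $\psi_\lambda \geq 1-\delta$, the same cubic surrogate $k_{T_\lambda} > \left((u(T_\lambda)-3)/M\right)^3$ leading to the threshold $u(T_\lambda) \geq \sqrt[3]{2}M+3$, and the same underestimation argument for the upper endpoints. The only cosmetic difference is that you prove $k_{T_\lambda} > \beta^3$ by regrouping into four factors, whereas the paper uses a short chain of inequalities; both rest on the same facts ($u(T_\lambda) > M-1$ and $M$ large enough that all denominators are positive).
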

\begin{proof}
	We first show that $\expect{Y_\lambda} - \var{Y_\lambda} > \expect{X_\lambda} - \var{X_\lambda}$.
	
	Let $\psi_\lambda=1-\delta^{B-b_\lambda+1}$; we will find the condition satisfying
	$
	\psi_\lambda - \psi_\lambda^2\left(k_{T_\lambda}-1\right) - 2 + k_{T_\lambda} > 0,
	$
	which is developed as follows.
	\begin{align}
	\psi_\lambda &> \frac{\left(2-k_{T_\lambda}\right)}{\left(k_{T_\lambda}-1\right)}  \label{eq:our_acc_main_ineq2}
	\end{align}
	
	Below, we will show the condition under which \eqref{eq:our_acc_main_ineq2} holds.
	
	Let $u(T_\lambda) - 3 = \alpha M$.
	By definition,
	\begin{align*}
	k_{T_\lambda} &= \frac{(M-3)(u(T_\lambda)-3)(u(T_\lambda)-4)(u(T_\lambda)-5)}{M(M-4)(M-5)(M-6)} \\
	&> \frac{(M-3)(u(T_\lambda)-3)^3}{M(M-4)^3} > \frac{(u(T_\lambda)-3)^3}{M(M-4)^2} \\
	&> \frac{(u(T_\lambda)-3)^3}{M^3} = \alpha^3.
	\end{align*}
	
	Then,
	\begin{align} \label{eq:our_acc_q2}
	\frac{2-k_{T_\lambda}}{k_{T_\lambda}-1} &<
	\frac{2-\alpha^3}{\alpha^3-1} = \frac{1}{\alpha^3-1} - 1.
	\end{align}
	
	Now we examine the left term of \eqref{eq:our_acc_main_ineq2}. Since $1\leq b_\lambda\leq B$, the lower bound of $\psi_\lambda$ becomes
	\begin{align} \label{eq:our_acc_p_lb2}
	\psi_\lambda \geq 1 - \delta.
	\end{align}
	
	Putting \eqref{eq:our_acc_q2} and~\eqnumref{eq:our_acc_p_lb2} into \eqref{eq:our_acc_main_ineq2}, we obtain a sufficient condition for \eqnumref{eq:our_acc_main_ineq2} as follows:
	\begin{align}
	\alpha^3 &\geq \frac{3-\delta}{2-\delta} \label{eq:our_acc_final_ineq2}
	\end{align}
	
	For $\alpha \geq \sqrt[3]{2}$,
	\eqref{eq:our_acc_final_ineq2} always holds since the upper bound of the right term becomes $2$ due to $0\leq \delta<1$.
	Thus, we finally obtain the condition under which \eqref{eq:our_acc_main_ineq2} holds as follows:
	$$
	u(T_\lambda) \geq \sqrt[3]{2}M + 3.
	$$
	
	Due to the underestimation of \methodbx, $\expect{Y_\lambda} + \var{Y_\lambda} < \expect{X_\lambda} + \var{X_\lambda}$ always holds, which completes the proof.
	\qed
\end{proof}

\begin{lemma}\label{lem:our_concent_prob2}
	Given a triangle $\lambda$ with $T_\lambda>T_M$,
	let $\mu_Y = \expect{Y_\lambda}, \sigma_Y^2 = \var{Y_\lambda}$, and $\epsilon = \mu_X - \mu_Y+\sigma_Y^2$ where $\mu_X = \expect{X_\lambda} = 1$ is the unbiased estimate of the triangle count for $\lambda$.
	Then,
	$\prob{|Y_\lambda - \mu_X| < \epsilon}$ has a larger lower bound based on Chebyshev inequality than $\prob{|X_\lambda - \mu_X| < \epsilon}$ does
	if $(2-\delta)/(1-\delta)<k_{T_\lambda}$.
\end{lemma}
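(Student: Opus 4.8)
The plan is to follow the proof of \lemref{lem:our_concent_prob} essentially verbatim, with the simple-graph variance scale $q_{T_\lambda}$ replaced by the binary-multigraph scale $k_{T_\lambda}$. The inputs I would invoke are $\mu_Y = \expect{Y_\lambda} = 1-\phi(b_\lambda)$ from \lemref{lem:our_expect2}, $\sigma_Y^2 = \var{Y_\lambda} = (1-\phi(b_\lambda))^2(k_{T_\lambda}-1)$ from \lemref{lem:our_var2}, together with $\mu_X = \expect{X_\lambda} = 1$ and $\sigma_X^2 = \var{X_\lambda} = k_{T_\lambda}-1$. The last of these is the $\phi(b_\lambda)=0$ specialization of \lemref{lem:our_var2}: the unbiased \methodb estimator carries the same second-moment factor $k_{T_\lambda}$ but with no decay, so its variance is $k_{T_\lambda}-1$.

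First I would apply the Chebyshev inequality to $Y_\lambda$ at radius $\sigma_Y^2$ to obtain $\prob{|Y_\lambda-\mu_Y|<\sigma_Y^2}\geq 1-1/\sigma_Y^2$. Since $\mu_Y\leq\mu_X$, the interval $(\mu_Y-\sigma_Y^2,\,\mu_Y+\sigma_Y^2)$ is contained in $(\mu_X-\epsilon,\,\mu_X+\epsilon)$ for $\epsilon=\mu_X-\mu_Y+\sigma_Y^2$, so the lower bound transfers to $\prob{|Y_\lambda-\mu_X|<\epsilon}\geq 1-1/\sigma_Y^2$. In parallel, the standard Chebyshev inequality centered at the true mean $\mu_X$ gives $\prob{|X_\lambda-\mu_X|<\epsilon}\geq 1-\sigma_X^2/\epsilon^2$.

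Next I would impose the desired ordering of the two lower bounds, $1-1/\sigma_Y^2 > 1-\sigma_X^2/\epsilon^2$, which is equivalent to $\epsilon^2 < \sigma_X^2\sigma_Y^2$. Substituting $\epsilon=\phi(b_\lambda)+\sigma_Y^2$ and the factored forms $\sigma_X^2=k_{T_\lambda}-1$, $\sigma_Y^2=(1-\phi(b_\lambda))^2(k_{T_\lambda}-1)$, the $\phi(b_\lambda)$ terms cancel after one division and the inequality collapses to $1-\phi(b_\lambda) > 1/(k_{T_\lambda}-1)$, the exact binary analog of the condition derived in \lemref{lem:our_concent_prob}. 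Finally, using $1-\phi(b_\lambda)\geq 1-\delta$ (which holds for every bucket since $0\leq\delta<1$), a sufficient condition is $1-\delta > 1/(k_{T_\lambda}-1)$, and rearranging terms yields $(2-\delta)/(1-\delta) < k_{T_\lambda}$, as claimed.

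I expect the only delicate step to be the algebraic collapse in the third paragraph: one must check that $\epsilon^2 < \sigma_X^2\sigma_Y^2$ really reduces to the clean form $1-\phi(b_\lambda) > 1/(k_{T_\lambda}-1)$ and not to something messier. Taking positive square roots turns it into $(1-\phi)(k-1) > \phi + (1-\phi)^2(k-1)$; moving the second right-hand term across and factoring gives $(1-\phi)(k-1)\phi > \phi$, and cancelling $\phi>0$ leaves $(1-\phi)(k-1)>1$. The only points to watch are that $\phi(b_\lambda)>0$ (guaranteed by $b_\lambda>0$, the assumed regime) so the division by $\phi$ is legitimate, and that $k_{T_\lambda}>1$ so that dividing by $k_{T_\lambda}-1$ preserves the direction. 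No genuinely new estimate beyond \lemref{lem:our_expect2} and \lemref{lem:our_var2} is needed, so the argument is structurally identical to \lemref{lem:our_concent_prob}.
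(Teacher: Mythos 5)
Your proposal is correct and takes essentially the same approach as the paper: the paper proves this lemma simply by reference to \lemref{lem:our_concent_prob}, and your argument is exactly that proof (including the appendix-level algebra reducing $\epsilon^2<\sigma_X^2\sigma_Y^2$ to $1-\phi(b_\lambda)>1/(k_{T_\lambda}-1)$) with $q_{T_\lambda}$ replaced by $k_{T_\lambda}$. One small correction: the cancellation of $\phi(b_\lambda)$ is legitimate because $\delta>0$, not because $b_\lambda>0$ (for $\delta=0$ one has $\phi(b_\lambda)=0$, the two Chebyshev bounds coincide, and the strict inequality fails) --- a boundary caveat the paper's own derivation shares.
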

\begin{proof}
	The lemma is proved in the same way as in Lemma~\ref{lem:our_concent_prob}.
	\qed
\end{proof}

\paragraph{Weighted Counting (\methodwx)}
We propose \methodwx which \chngwww{improves} the weighted local triangle counting algorithm \methodw. 
%
%
\chngwww{
\methodwx(\algref{alg:methodwx}) first goes through updating triangle estimation and sampling process as in \methodw, and then combines past estimations with the current one.
$T_M$, the last time when $c^{(t)}$ equals the true triangle count,
is the time when $|D|=M$ and $ExactCnt=true$, as in \methodcx.
}

\methodwx gives more accurate results closer to the ground truth than \methodw does. The detailed analysis of the expectation and the variance of \methodwx is very similar to that of \methodbx.
Note that for triangles in the $0$th bucket, \ie~for the time $u(T) \leq M+1$,\methodwx provides the exact counting. Thus, we consider a triangle $\lambda$ with $b_\lambda>0$ below.
Let $Y_\lambda$ and $X_\lambda$ be the estimated counts of a triangle $\lambda$ by \methodwx and \methodw, respectively.
\begin{lemma} \label{lem:our_expect3}
	Let \mchng{$b_\lambda$ be the bucket where $\lambda$ is formed} and $Y_\lambda$ be the estimated count of a triangle $\lambda$ \mchng{with $b_\lambda>0$} by \methodwx.
	For every triangle $\lambda$,
	$$
	\expect{Y_\lambda} = 1 - \phi(b_\lambda),
	$$
	where $\phi(i) = \delta^{B-i+1}$ and $B$ is the current bucket.
\end{lemma}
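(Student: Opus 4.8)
The plan is to mirror the argument of \lemref{lem:our_expect} exactly, replacing the deterministic simple-counting factor by the random weighted one and substituting the per-triangle unbiasedness of \methodw (\lemref{lemma:furlmw}) for the elementary computation used in the simple case. First I would recall that \methodwx runs \methodw to produce the running estimate $c$ and then, at each bucket boundary, performs the update $\hat\tau \leftarrow \delta\hat\tau + (1-\delta)c$. Since the increment contributed by a counted triangle $\lambda$ is added once to $c$ (at time $T_\lambda$, in bucket $b_\lambda$) and thereafter remains in $c$ permanently—\methodw only ever increases $c$—I would track the coefficient this increment accumulates in the final weighted average. Setting $W(j) = (1-\delta)\delta^{B-j}$, the increment is re-incorporated at every boundary $j = b_\lambda, \ldots, B$ and survives to the current bucket $B$ with weight $W(j)$, so its total coefficient is $\sum_{j=b_\lambda}^{B} W(j)$. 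Hence $Y_\lambda = q_{T_\lambda}\sum_{j=b_\lambda}^{B} W(j)$ when $\lambda$ is counted and $Y_\lambda = 0$ otherwise, where $q_{T_\lambda} = \frac{M-2}{M}\cdot\frac{1}{h_{max}^2}$ is exactly the \methodw factor.

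The second step is to take expectations. Because the weights $W(j)$ are deterministic—they depend only on $\delta$ and the bucket indices—I would pull the sum out of the expectation to obtain $\expect{Y_\lambda} = \left(\sum_{j=b_\lambda}^{B} W(j)\right)\expect{q_{T_\lambda}\,\mathbf{1}[\lambda\text{ is counted}]}$. The remaining factor is precisely $\expect{X_\lambda^{+}}$ in the notation of \lemref{lemma:furlmw}: it is the expected per-count contribution of $\lambda$ under plain \methodw after the buffer has filled. That lemma already establishes $\expect{X_\lambda^{+}} = 1$ by modeling $h_{max}$ as the $M$th order statistic of $u(T_\lambda-1)$ i.i.d. uniform variables (so $h_{max}\sim Beta(M, u(T_\lambda-1)+1-M)$) and integrating $q_{T_\lambda}$ against its density, the counting probability $\frac{M(M-1)}{u(T_\lambda-1)(u(T_\lambda-1)-1)}$ being independent of the event $h_{max}=x$. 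I would invoke this directly rather than re-deriving it.

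Finally I would evaluate the geometric sum. Reindexing by $k = B - j$ gives $\sum_{j=b_\lambda}^{B} W(j) = (1-\delta)\sum_{k=0}^{B-b_\lambda}\delta^{k} = 1 - \delta^{B-b_\lambda+1} = 1 - \phi(b_\lambda)$, so that $\expect{Y_\lambda} = 1 - \phi(b_\lambda)$, as claimed.

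The main obstacle is the legitimacy of factoring the expectation in the second step, not any heavy calculation. One must be careful that $q_{T_\lambda}$ is genuinely random—it depends on the Beta-distributed $h_{max}$—and is correlated with the event that $\lambda$ is counted, whereas the averaging weights $W(j)$ carry no randomness; only then does the product split as a deterministic weight-sum times the single random quantity $\expect{q_{T_\lambda}\mathbf{1}[\lambda\text{ counted}]}$. This is the one genuine departure from the simple case, where $q_{T_\lambda}$ is deterministic and the factor collapses to $\prob{\lambda\text{ counted}}\,q_{T_\lambda}$. The remaining weighted-counting peculiarities—that \methodw updates the estimate before sampling (hence $u(T_\lambda-1)$ rather than $u(T_\lambda)$) and that occurrence multiplicities enter the increment—are already absorbed into \lemref{lemma:furlmw}, so no separate order-statistics or integration argument is needed here. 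This is exactly why the lemma can be asserted to be ``proved in the same way'' as \lemref{lem:our_expect}.
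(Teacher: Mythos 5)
Your proof is correct and takes essentially the same route as the paper, whose own proof is just the remark that the lemma ``is proved in the same way as in Lemma~\ref{lem:our_expect}'': you mirror that lemma's decomposition of $Y_\lambda$ into the deterministic geometric weight sum $\sum_{j=b_\lambda}^{B}W(j)$ times the \methodw contribution, and you discharge the latter by invoking the order-statistics computation of Lemma~\ref{lemma:furlmw} to get $\expect{q_{T_\lambda}\mathbf{1}[\lambda \text{ counted}]}=1$. Your explicit attention to the fact that $q_{T_\lambda}$ is now random (a function of $h_{max}$) and only the weights $W(j)$ can be pulled out of the expectation is precisely the adaptation the paper leaves implicit in ``the same way.''
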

\begin{proof}
	The lemma is proved in the same way as in Lemma~\ref{lem:our_expect}.
	\qed
\end{proof}

\begin{lemma} \label{lem:our_var3}
	Let \mchng{$b_\lambda$ be the bucket where $\lambda$ is formed} and $Y_\lambda$ be the estimated count of a triangle $\lambda$ \mchng{with $b_\lambda>0$} by \methodwx.
	For every triangle $\lambda$,
	$$
	\var{Y_\lambda} = \left(1-\phi(b_\lambda)\right) ^2
	\left( l_{T_\lambda} - 1 \right),
	$$
	where $T_\lambda$ is the first time all three edges of $\lambda$ arrive,
	$l_{T_\lambda} = \frac{(M-2)(u(T_\lambda-1)-2)(u(T_\lambda-1)-3)}{M(M-3)(M-4)}$,
	$\phi(i) = \delta^{B-i+1}$, and
	$B$ is the current bucket.
\end{lemma}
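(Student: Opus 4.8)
The plan is to reuse the template of \lemref{lem:our_var}: write $\var{Y_\lambda} = \expect{Y_\lambda^2} - \expect{Y_\lambda}^2$ and read off the second term from \lemref{lem:our_expect3}, which already gives $\expect{Y_\lambda} = 1-\phi(b_\lambda)$ and hence $\expect{Y_\lambda}^2 = (1-\phi(b_\lambda))^2$. All the work therefore lies in the second moment $\expect{Y_\lambda^2}$. Exactly as in the proof of \lemref{lem:our_expect}, the weighted-average construction makes $Y_\lambda$ two-valued: when $\lambda$ is counted it accumulates the total weight $q_{(T_\lambda-1)}\sum_{b_\lambda\leq j\leq B}W(j) = q_{(T_\lambda-1)}(1-\phi(b_\lambda))$, and it equals $0$ otherwise, where the increment factor $q_{(T_\lambda-1)} = \frac{M-2}{M}\frac{1}{h_{max}^2}$ is frozen at the counting time but is itself random through $h_{max}$, just as in \lemref{lemma:furlmw}.

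Because $Y_\lambda$ takes only these two values, $\expect{Y_\lambda^2} = (1-\phi(b_\lambda))^2\,\expect{q_{(T_\lambda-1)}^2\,\mathbf{1}\{\lambda\ \text{counted}\}}$. First I would invoke the same independence used in \lemref{lemma:furlmw}, namely that the event ``$\lambda$ is counted'' (which edges sit in the buffer) is independent of the value of $h_{max}$; this factors the expectation as $\prob{\lambda\ \text{counted}}\cdot\int_0^1 q_{(T_\lambda-1)}^2 f(x)\,dx$, where $\prob{\lambda\ \text{counted}} = \frac{M(M-1)}{u(T_\lambda-1)(u(T_\lambda-1)-1)}$ and $f$ is the $Beta(M, u(T_\lambda-1)+1-M)$ density of $h_{max}$ supplied by order statistics. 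The problem is thus reduced to a single Beta integral. Substituting $q_{(T_\lambda-1)}^2 = \left(\frac{M-2}{M}\right)^2 x^{-4}$ turns the integrand into $x^{M-5}(1-x)^{u(T_\lambda-1)-M}$ up to the normalizing Gamma constant, which I recognize as an unnormalized $Beta(M-4, u(T_\lambda-1)+1-M)$ kernel; collapsing the resulting ratio of Gamma functions with $\gamma(z+1)=z\,\gamma(z)$, exactly as in \lemref{lemma:furlmw}, yields a closed form. Multiplying by $\prob{\lambda\ \text{counted}}$ cancels the factor $u(T_\lambda-1)(u(T_\lambda-1)-1)$ against $M(M-1)$ and leaves $\expect{Y_\lambda^2} = (1-\phi(b_\lambda))^2\,l_{T_\lambda}$ with the stated $l_{T_\lambda}$; subtracting $(1-\phi(b_\lambda))^2$ then gives $\var{Y_\lambda} = (1-\phi(b_\lambda))^2(l_{T_\lambda}-1)$.

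The main obstacle---and the reason the claim is not literally a copy of \lemref{lem:our_var}, whose $q$ is deterministic---is the inverse fourth moment $\expect{h_{max}^{-4}}$: since the integrand carries the factor $x^{M-5}$, the integral converges and the Gamma identities are valid only when $M-4>0$, \ie $M\geq 5$, which is precisely what produces the factors $(M-3)(M-4)$ in the denominator of $l_{T_\lambda}$. I would therefore state the mild buffer-size assumption $M\geq 5$ explicitly. Everything else---the telescoping $\sum_{b_\lambda\leq j\leq B}W(j)=1-\phi(b_\lambda)$ and the order-statistics density of $h_{max}$---is inherited verbatim from \lemref{lem:our_expect} and \lemref{lemma:furlmw}, so this single inverse-moment computation is the only genuinely new ingredient.
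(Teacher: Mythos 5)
Your proposal is correct and follows the same route the paper intends: the paper's own proof is just a pointer to the template of Lemma~\ref{lem:our_var} (variance as $\expect{Y_\lambda^2}-\expect{Y_\lambda}^2$ combined with the expectation lemma), and your computation of the inverse fourth moment $\expect{h_{max}^{-4}}$ via the $Beta(M,u(T_\lambda-1)+1-M)$ density correctly supplies the one ingredient the paper leaves implicit, landing exactly on the stated $l_{T_\lambda}=\frac{(M-2)(u(T_\lambda-1)-2)(u(T_\lambda-1)-3)}{M(M-3)(M-4)}$. Your observation that the integral requires $M\geq 5$ is also a legitimate (and unstated in the paper) hypothesis for the Gamma-function manipulations to be valid.
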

\begin{proof}
	The lemma is proved in the same way as in Lemma~\ref{lem:our_var}.
	\qed
\end{proof}

%

\begin{lemma} \label{lem:our_accuracy3}
	Consider any triangle $\lambda$ that is counted at time $T_\lambda > T_M$.
	If $u(T_\lambda-1) \geq \sqrt{2}M + 2$,
	the interval by $\expect{Y_\lambda} \pm \var{Y_\lambda}$ is strictly included in that by $\expect{X_\lambda} \pm \var{X_\lambda}$.
\end{lemma}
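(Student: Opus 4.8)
My plan is to mirror the two templates already established in \lemref{lem:our_accuracy} (for \methodcx) and \lemref{lem:our_accuracy2} (for \methodbx), since the ingredients are exactly parallel. From \lemref{lem:our_expect3} and \lemref{lem:our_var3} I have $\expect{Y_\lambda} = 1 - \phi(b_\lambda)$ and $\var{Y_\lambda} = (1-\phi(b_\lambda))^2(l_{T_\lambda}-1)$, while \methodw itself gives the unbiased $\expect{X_\lambda} = 1$ and $\var{X_\lambda} = l_{T_\lambda} - 1$. The claimed strict inclusion of the interval $\expect{Y_\lambda} \pm \var{Y_\lambda}$ inside $\expect{X_\lambda} \pm \var{X_\lambda}$ splits into a lower-endpoint inequality $\expect{Y_\lambda} - \var{Y_\lambda} > \expect{X_\lambda} - \var{X_\lambda}$ and an upper-endpoint inequality $\expect{Y_\lambda} + \var{Y_\lambda} < \expect{X_\lambda} + \var{X_\lambda}$, so I would dispatch these two separately.

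The upper endpoint is the easy half. Writing $\phi(b_\lambda) = \delta^{B-b_\lambda+1}$ with $1 \le b_\lambda \le B$ and $0 \le \delta < 1$, I get $0 < \phi(b_\lambda) \le \delta < 1$, hence $1 - \phi(b_\lambda) < 1$ and $(1-\phi(b_\lambda))^2 < 1$. These immediately give $\expect{Y_\lambda} < \expect{X_\lambda}$ and $\var{Y_\lambda} < \var{X_\lambda}$, so the upper-endpoint inequality holds with no condition on $u(T_\lambda-1)$ whatsoever; this is the underestimation argument used verbatim in the earlier lemmas.

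For the lower endpoint I would set $\psi_\lambda = 1 - \phi(b_\lambda)$ and reduce the target to $\psi_\lambda - \psi_\lambda^2(l_{T_\lambda}-1) - 2 + l_{T_\lambda} > 0$, i.e.\ $\psi_\lambda > (2 - l_{T_\lambda})/(l_{T_\lambda}-1)$, exactly as in \lemref{lem:our_accuracy}. I would then substitute $u(T_\lambda-1) - 2 = \alpha M$ and establish the crude lower bound $l_{T_\lambda} > \alpha^2$ directly from $l_{T_\lambda} = \frac{(M-2)(u(T_\lambda-1)-2)(u(T_\lambda-1)-3)}{M(M-3)(M-4)}$. Since $g(l)=(2-l)/(l-1)$ is decreasing for $l>1$, this turns the right-hand side into $(2-\alpha^2)/(\alpha^2-1) = 1/(\alpha^2-1) - 1$; combining with the uniform bound $\psi_\lambda \ge 1 - \delta$ yields the sufficient condition $\alpha^2 \ge (3-\delta)/(2-\delta)$. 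Because $0 \le \delta < 1$ keeps the right side strictly below $2$, any $\alpha \ge \sqrt 2$ works, which is precisely $u(T_\lambda-1) \ge \sqrt 2 M + 2$. It is worth noting that the threshold is a square root, matching \methodcx rather than the cube root of \methodbx: weighted counting completes a triangle from the last arriving edge together with two buffered edges, so the counting probability scales like $M(M-1)/\bigl(u(T_\lambda-1)(u(T_\lambda-1)-1)\bigr)$, a degree-two quantity, which is what forces the $\alpha^2$ behaviour.

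The only genuinely non-routine step is the bound $l_{T_\lambda} > \alpha^2$. Unlike the $q_{T_\lambda}$ case, where one simply drops a factor, here the numerator and denominator carry the shifted factors $(M-2),(M-3),(M-4)$ against $(u(T_\lambda-1)-2),(u(T_\lambda-1)-3)$, so I would need a short chain of monotone replacements and should verify that the leftover prefactor $\frac{M-2}{(M-3)(M-4)}$ does not spoil the inequality for small buffer sizes; confirming it over the regime $M \ge 7$ (where $l_{T_\lambda}$ is well defined) is the crux. After that, every remaining manipulation is the same algebra already carried out in \lemref{lem:our_accuracy} and \lemref{lem:our_accuracy2}.
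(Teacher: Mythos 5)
Your proposal is correct and takes essentially the same route as the paper's proof: the same split into lower- and upper-endpoint inequalities, the same reduction to $\psi_\lambda > (2-l_{T_\lambda})/(l_{T_\lambda}-1)$, the same substitution $u(T_\lambda-1)-2=\alpha M$ with a chain of monotone replacements yielding $l_{T_\lambda} > \alpha^2$, the sufficient condition $\alpha^2 \geq (3-\delta)/(2-\delta)$, and the unconditional underestimation argument for the upper endpoint. The crux you identify is handled in the paper exactly as you sketch (via $l_{T_\lambda} > \frac{(M-2)(u(T_\lambda-1)-2)^2}{M(M-3)^2} > \frac{(u(T_\lambda-1)-2)^2}{M(M-3)} > \alpha^2$), and note that $l_{T_\lambda}$ only requires $M \geq 5$, not $M \geq 7$.
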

\begin{proof}
	We first show that $\expect{Y_\lambda} - \var{Y_\lambda} > \expect{X_\lambda} - \var{X_\lambda}$.
	
	Let $\psi_\lambda=1-\delta^{B-b_\lambda+1}$; we will find the condition satisfying
	$
	\psi_\lambda - \psi_\lambda^2\left(l_{T_\lambda}-1\right) - 2 + l_{T_\lambda} > 0,
	$
	which is developed as follows.
	\begin{align}
	\psi_\lambda &> \frac{\left(2-l_{T_\lambda}\right)}{\left(l_{T_\lambda}-1\right)}  \label{eq:our_acc_main_ineq3}
	\end{align}
	
	Below, we will show the condition under which \eqref{eq:our_acc_main_ineq3} holds.
	
	Let $u(T_\lambda-1) - 2 = \alpha M$.
	By definition,
	\begin{align*}
	l_{T_\lambda} &= \frac{(M-2)(u(T_\lambda-1)-2)(u(T_\lambda-1)-3)}{M(M-3)(M-4)} \\
	&> \frac{(M-2)(u(T_\lambda-1)-2)^2}{M(M-3)^2} > \frac{(u(T_\lambda-1)-2)^2}{M(M-3)} \\
	&> \frac{(u(T_\lambda-1)-2)^2}{M^2} = \alpha^2.
	\end{align*}
	
	Then,
	\begin{align} \label{eq:our_acc_q3}
	\frac{2-l_{T_\lambda}}{l_{T_\lambda}-1} &<
	\frac{2-\alpha^2}{\alpha^2-1} = \frac{1}{\alpha^2-1} - 1.
	\end{align}
	
	Now we examine the left term of \eqref{eq:our_acc_main_ineq3}. Since $1\leq b_\lambda\leq B$, the lower bound of $\psi_\lambda$ becomes
	\begin{align} \label{eq:our_acc_p_lb3}
	\psi_\lambda \geq 1 - \delta.
	\end{align}
	
	Putting \eqref{eq:our_acc_q3} and~\eqnumref{eq:our_acc_p_lb3} into \eqref{eq:our_acc_main_ineq3}, we obtain a sufficient condition for \eqnumref{eq:our_acc_main_ineq3} as follows:
	\begin{align}
	\alpha^2 &\geq \frac{3-\delta}{2-\delta} \label{eq:our_acc_final_ineq3}
	\end{align}
	
	For $\alpha \geq \sqrt{2}$,
	\eqref{eq:our_acc_final_ineq3} always holds since the upper bound of the right term becomes $2$ due to $0\leq \delta<1$.
	Thus, we finally obtain the condition under which \eqref{eq:our_acc_main_ineq3} holds as follows:
	$$
	u(T_\lambda-1) \geq \sqrt{2}M + 2.
	$$
	
	Due to the underestimation of \methodwx, $\expect{Y_\lambda} + \var{Y_\lambda} < \expect{X_\lambda} + \var{X_\lambda}$ always holds, which completes the proof.
	\qed
\end{proof}

\begin{lemma}\label{lem:our_concent_prob3}
	Given a triangle $\lambda$ with $T_\lambda>T_M$,
	let $\mu_Y = \expect{Y_\lambda}, \sigma_Y^2 = \var{Y_\lambda}$, and $\epsilon = \mu_X - \mu_Y+\sigma_Y^2$ where $\mu_X = \expect{X_\lambda} = 1$ is the unbiased estimate of the triangle count for $\lambda$.
	Then,
	$\prob{|Y_\lambda - \mu_X| < \epsilon}$ has a larger lower bound based on Chebyshev inequality than $\prob{|X_\lambda - \mu_X| < \epsilon}$ does
	if $(2-\delta)/(1-\delta)<l_{T_\lambda}$.
\end{lemma}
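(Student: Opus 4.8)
The plan is to follow the argument of Lemma~\ref{lem:our_concent_prob} essentially verbatim, substituting the variance factor $l_{T_\lambda}$ of \methodw in place of $q_{T_\lambda}$. The only way the base method enters that argument is through $\sigma_X^2 = \var{X_\lambda}$, and setting $\delta = 0$ in Lemma~\ref{lem:our_var3} (so that \methodwx collapses to \methodw) yields $\var{X_\lambda} = l_{T_\lambda}-1$, exactly mirroring $\var{X_\lambda} = q_{T_\lambda}-1$ for \methodc. From Lemmas~\ref{lem:our_expect3} and~\ref{lem:our_var3} I would also record $\mu_Y = 1 - \phi(b_\lambda)$ and $\sigma_Y^2 = (1-\phi(b_\lambda))^2(l_{T_\lambda}-1)$, and in particular $\mu_Y \leq \mu_X = 1$.

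First I would bound the two probabilities by Chebyshev's inequality. For \methodwx, the interval $|Y_\lambda - \mu_Y| < \sigma_Y^2$ lies inside $|Y_\lambda - \mu_X| < \epsilon$ because $\epsilon = (\mu_X - \mu_Y) + \sigma_Y^2$ with $\mu_Y \leq \mu_X$, so $\prob{|Y_\lambda - \mu_X| < \epsilon} \geq 1 - 1/\sigma_Y^2$. For the base method \methodw, applying Chebyshev directly about its true mean $\mu_X$ gives $\prob{|X_\lambda - \mu_X| < \epsilon} \geq 1 - \sigma_X^2/\epsilon^2$.

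Next I would impose the target inequality $1 - 1/\sigma_Y^2 > 1 - \sigma_X^2/\epsilon^2$, equivalently $\sigma_X^2\sigma_Y^2 > \epsilon^2$, and substitute the explicit variances. Writing $\psi_\lambda = 1 - \phi(b_\lambda)$ and using $\epsilon = (1-\psi_\lambda) + \psi_\lambda^2(l_{T_\lambda}-1)$, this becomes $\psi_\lambda^2(l_{T_\lambda}-1)^2 > \big((1-\psi_\lambda) + \psi_\lambda^2(l_{T_\lambda}-1)\big)^2$; taking positive square roots and cancelling the common factor $(1-\psi_\lambda)$ reduces it to $\psi_\lambda(l_{T_\lambda}-1) > 1$, that is
$$
1 - \phi(b_\lambda) > \frac{1}{l_{T_\lambda}-1}.
$$
Finally, since $1 - \phi(b_\lambda) = 1 - \delta^{B-b_\lambda+1} \geq 1 - \delta$ for every bucket $b_\lambda$, a sufficient condition is $1 - \delta > 1/(l_{T_\lambda}-1)$, which rearranges to $(2-\delta)/(1-\delta) < l_{T_\lambda}$, as claimed.

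I expect no genuine obstacle, since the argument is structurally identical to Lemma~\ref{lem:our_concent_prob}; the whole content of this lemma is the reuse of the earlier template with the \methodw variance. The one step needing care is the square-root-and-cancel reduction, which is valid precisely because $\psi_\lambda \in (0,1)$ and $l_{T_\lambda} > 1$ in the regime $T_\lambda > T_M$ (where the estimate is strictly biased and the variance is positive); this parallels the implicit assumption in the proof of Lemma~\ref{lem:our_concent_prob} and requires no separate work.
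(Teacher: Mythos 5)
Your proposal is correct and takes essentially the same approach as the paper: the paper's own proof of this lemma is simply the statement that it follows the argument of Lemma~\ref{lem:our_concent_prob}, with $l_{T_\lambda}$ playing the role of $q_{T_\lambda}$, which is exactly the substitution you carry out, arriving at the same intermediate condition $1-\phi(b_\lambda) > 1/\left(l_{T_\lambda}-1\right)$ and the same sufficient condition via $1-\phi(b_\lambda)\geq 1-\delta$. Your closing remark about needing $\psi_\lambda\in(0,1)$ and $l_{T_\lambda}>1$ correctly identifies the same implicit positivity assumptions present in the paper's derivation, so nothing further is required.
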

\begin{proof}
	The lemma is proved in the same way as in Lemma~\ref{lem:our_concent_prob}.
	\qed
\end{proof}




\begin{algorithm}[t!]	
	\SetKwProg{myproc}{Subroutine}{}{end}
	\SetKwFunction{SampleMXW}{$SampleNewEdge\text{-MX}_W$}
	\KwIn{Graph stream $S$, maximum number $M$ of edges stored, interval $J$ for updating estimation, and decaying factor $\delta$ for past estimation}
	\KwOut{Local triangle estimation $\tau = \Query(\delta)$}
	
	\medskip
	
	Initialize variables as in lines 1-4 of \algref{alg:methodc}.\;
	\ForEach(~~\tcp*[h]{at time $T$}){edge $e=(u,v)$ from $S$}{
		$\DiscoverNode(u)$.\;
		$\DiscoverNode(v)$.\;
		$\UpdateMW(e)$. 	//	$c$ is computed here.\;
		$\SampleMXW(e)$.\;
		$\WeightAvg(\delta)$.\;
	}
	
%
%
	
	\myproc{$\SampleMXW{e}$}{
		\lIf{$e \in D$}{
			$O_e \leftarrow O_e+1$.
		}\lElse{	
			\lIf{$|D|<M$}{
				Append $e$ to D.
			}\Else{
			\If{$ExactCnt=true$}{
				$T_M \leftarrow T$.\; 
				$ExactCnt \leftarrow false$.
			}
			$\ReplaceM(e)$.\;	
			}
		}
	}
\caption{\methodwx for multigraph stream (weighted)}
\label{alg:methodwx}
\end{algorithm}

\hide{
	
	\begin{corollary}
		For every triangle $\lambda$,
		$$
		\expect{Y_\lambda} + \var{Y_\lambda} < \expect{X_\lambda} + \var{X_\lambda}.
		$$
	\end{corollary}
	\begin{proof}
		\methodcx provides underestimation, \ie~$\expect{Y_\lambda}\leq \expect{X_\lambda}$, which completes the proof with \lemref{lem:our_accuracy}.
	\end{proof}
	
	\begin{align*}
		\frac{\expect{X_\lambda} - \expect{Y_\lambda}}{\var{X_\lambda}} &= \frac{1-p}{q_\lambda^{-2}-1} \\
		\frac{\var{Y_\lambda}}{\var{X_\lambda}} &= p^2.
	\end{align*}
	Below, we verify the condition under which the following inequality holds:
	\begin{align*}
		\frac{\expect{X_\lambda} - \expect{Y_\lambda} + \var{Y_\lambda}}{\var{X_\lambda}} < 1.
	\end{align*}
}

\hide{

	\begin{lemma}
		For every triangle $\lambda$,
		$$
		\expect{Y_\lambda} - \var{Y_\lambda} > \expect{X_\lambda} - \var{X_\lambda}.
		$$
	\end{lemma}
	\begin{proof}
		We first remind the expectation and the variance of $X_\lambda$ as follows.
		\begin{align*}
			\expect{X_\lambda} &= 1. \\
			\var{X_\lambda} &= q_\lambda^{-2} - 1.
		\end{align*}
		Let $p=(1-\phi(i)/2^B)$; then we obtain the following equations:
		\begin{align*}
			\expect{X_\lambda} - \expect{Y_\lambda} &= q_\lambda^{-2} -  q_\lambda^{-2}p = \frac{q_\lambda^{-2}-q_\lambda^{-2}p}{q_\lambda^{-2}-1} \var{X_\lambda} \\
			\var{Y_\lambda} &= p^2 \var{X_\lambda}.
		\end{align*}
		Below, we verify the condition under which the following inequality holds:
		\begin{align*}
			\frac{\expect{X_\lambda} - \expect{Y_\lambda} + \var{Y_\lambda}}{\var{X_\lambda}} < 1.
		\end{align*}
		
		We obtain the series of the following inequalities.
		\begin{align*}
			\frac{q_\lambda^{-2}- q_\lambda^{-2}p}{q_\lambda^{-2}-1} - p^2 &< 1  \\
			q_\lambda^{-2} - q_\lambda^{-2}p + q_\lambda^{-2}p^2 - p^2 &< q_\lambda^{-2}-1\\
			q_\lambda^{-2}p - q_\lambda^{-2}p^2 + p^2 &> 1 \\
			p\left( 1-q_\lambda^{-2}\right) + q_\lambda^{-2} &> p^{-1} \\
			p\left( 1-q_\lambda^{-2}\right) &> p^{-1} - q_\lambda^{-2}\\
			p &< \frac{p^{-1} - q_\lambda^{-2}}{1-q_\lambda^{-2}}.
		\end{align*}
		As shown in the last inequality, the condition is $p^{-1}-q_\lambda^{-2}<0$. That is,
		$$
		p > q_\lambda^2.
		$$
		We set $J = \alpha M$; then
		\begin{align*}
			1-\frac{\phi(i)}{2^B} > \frac{M^2}{(i\alpha M - 1)^2} > \frac{M^2}{(i\alpha M)^2} = \frac{1}{i^2\alpha^2}
		\end{align*}
		For $i\geq 1$, the following holds:
		\begin{align*}
			\frac{2^{B-i+1}}{i^2(2^{B-i+1} - 1)} = \frac{2^B}{i^2(2^B - 2^{i-1})} < \alpha^2.
		\end{align*}
		Setting $\alpha=1$, the inequality always holds for $i\geq 2$. As a result, $M\leq 2|E|$ satisfies the lemma with $\alpha = 1$.
	\end{proof}

}

\hide{
	
	\subsection{\methodcxrm}
	Considering triangles certainly sampled in the past, \methodc reduces variance compared with \methodc.
	But, \methodc still misses some information, \ie~estimation itself in the past.
	In this section, we propose the main algorithm \methodcx by considering edge samples at a certain time in the past to raise accuracy.
	
	Before describing \methodcx in detail, we first consider the following case.
	Let $D_1$ and $D_2$ be two arrays each of which has of size $M$.
	Running \methodc with $D_1$ and $D_2$ independently and simultaneously for the same stream, we obtain two estimation $\tau_1$ and $\tau_2$. Then, $\tau=(\tau_1 + \tau_2)/2$ is also unbiased estimation, and moreover $\var{\tau} < \var{\tau_1} = \var{\tau_2}$.
	Through this independent sampling, the variance gets lower as more arrays are used. Precisely, the variance becomes $1/k$ if $k$ arrays of $D_1,\ldots,D_k$ are used.
	
	Maintaining $k$ independent samples, however, results in increasing memory requirements $k$ times. Instead, our \methodcx focuses on dependent samples obtained by the following observation.
	\begin{observation}\label{obs:main_obs}
		Consider that we apply the reservoir sampling to a random sequence of edges $S=(e_1,\ldots,e_T)$ with an array $D$ of size $M$. Let $T$ be the current time.
		Every edge in $D$ at time $t\leq T$ is a sample from $S$ with probability $t/T \times \min(M/t,1)= t/T$ as well as a sample from $\{e_1,\ldots,e_t\}$ with probability $\min(M/t,1)$.
	\end{observation}
	Let $\tau_u^{(t)}$ be the estimation of \methodc at time $t$.
	\obsref{obs:main_obs} states that the following is also an unbiased estimation of $u$ at time $T$:
	$$
	\tau_u^{(T)} = \alpha\times c_{u}^{(t)} (t/T)^{-3} + (1-\alpha)\times \tau_u^{(T)},
	$$
	where $0\leq \alpha\leq 1$. This is because every edge until $t$ is a sample from all edges until $T$ with probability $t/T$. We will discuss the problem  the dependency of samples at the end of the section.
	
	\begin{algorithm}	
		\SetKwProg{myproc}{Subroutine}{}{end}
		\SetKwFunction{UpdateEstimation}{$UpdateEst$}
		\SetKwFunction{Query}{$Query$}
		
		\KwIn{Graph stream $S$, maximum number $M$ of edges stored, and interval $J$ for updating estimation}
		\KwOut{Local triangle estimation $\tau = \Query(\alpha)$}
		
		\medskip
		
		\nonl The same as \methodc except for the following.  \;
		
		\myproc{$\Process{e, i}$}{
			$p \leftarrow \min(M/(T-1),1)$.\;
			Run $\Update(e,+,p^{-2})$ of \methodc.\;
			\uIf{$T = M$}{
				$\hat{\tau} \leftarrow c$.\;
			}\ElseIf{$T > M$ and $T\bmod{J} = 0$}{
			$\hat{\tau} \leftarrow \UpdateEstimation{$\alpha,\min(J,T-M)$}$.\;
		}
	}
	
	\smallskip
	\myproc{$\UpdateEstimation{$\alpha$, L}$}{
		$q \leftarrow \max(T - L, M) / T$. \;
		\Return $\alpha q^{-3} \hat{\tau}  + (1-\alpha)c$.\;
	}
	\smallskip
	\myproc{$\Query{$\alpha$}$}{
		\lIf{$T\leq M$}{
			\Return $c$.
		}\lElseIf{$T\bmod{J} = 0$}{
		\Return $\hat{\tau}$.
	}\lElse {
	\Return $\UpdateEstimation{$\alpha, T\bmod{J}$}$.
}
}

\caption{\methodcx for simple graph stream}
\label{alg:methodv}
\end{algorithm}

Based on this idea, \algref{alg:methodcx} describes \methodcx---except for $ProcessNewEdge$, the procedure is the same as \methodc.
\methodcx does not need to additional estimation for $T\leq M$ since $c$ is the exact count for $G(T)$ where $G(T)$ is the graph consisting of edges appearing until time $T$. The idea using dependent samples is applied per $J$ iterations only if $T>M$. The following lemma states that \methodcx gives unbiased estimation.
\begin{lemma}
	Let $\triangle_u$ be the true local triangle count for a node $u$, and $\tau_u$ be estimation given by \methodcx for a randomly ordered edge sequence. At any time $T$, for every node $u$,
	$$
	\expect{\tau_u} = \triangle_u.
	$$
\end{lemma}
\begin{proof}
	If $T\leq M$, $\tau = c$ is the exact count, and by \algref{alg:methodcx}, no additional estimation is applied.
	
	When \methodcx updates $\hat{\tau}$, let $\hat{\tau}'$ is $\hat{\tau}$ after the update.
	At time $T = M$, $\hat{\tau}' = c$. By  \lemref{lem:methodc_exp}, the following holds:
	$$
	\expect{\hat{\tau}'_u} = \sum_{\lambda\in \Theta_u^{(M)}} \expect{X_\lambda^{(M)}},
	$$
	where $\Theta_u^{(M)}$ is the set of triangles of $u$ formed by $E_M=\{e_1,\ldots,e_M\}$, and $X_\lambda^{(M)}$ indicates whether $\lambda$ is counted or not until $M$.
	
	Consider $\UpdateEstimation$ at
	$T>M$ with $t=\max(T-L,M)$. Let us assume that $\expect{\hat{\tau}_u} = \sum_{\lambda\in \Theta_u^{(t)}} \expect{X_\lambda^{(t)}}$.
	We first rewrite the expectation of $\tau_u$ as follows.
	$$
	\expect{\hat{\tau}_u} = \sum_{\lambda\in \Theta_u^{(T)}} \expect{X_\lambda^{(t)}} Y_\lambda^{(T)},
	$$
	where $Y_\lambda^{(t)}$ indicates whether all edges of $\lambda$ are in $E_t$. Since the edge sequence is randomly ordered, we know that every edge in $E_t$ is a sample from $E_T$ with probability $t/T$: in other words $\expect{Y_\lambda^{(T)}} = (t/T)^3$. Moreover since $X_\lambda^{(t)}$ and $Y_\lambda^{(T)}$ are independent, we obtain
	\begin{align*}
		\expect{q^{-3}\hat{\tau}_u} &=
		\left(\frac{T}{t}\right)^3 \sum_{\lambda\in \Theta_u^{(T)}} \expect{Y_\lambda^{(t)}}\expect{X_\lambda^{(T)}} \\
		&=
		\left(\frac{T}{t}\right)^3 \sum_{\lambda\in \Theta_u^{(T)}} \left(\frac{t}{T}\right)^3 \expect{X_\lambda^{(T)}} \\
		&= \sum_{\lambda\in \Theta_u^{(T)}} \expect{X_\lambda^{(T)}} = \varDelta_u^{(T)},
	\end{align*}
	Hence, for $0\leq \alpha\leq 1$, the following holds.
	$$
	\expect{\hat{\tau}'_u} = \expect{\alpha q^{-3}\tau_u + (1-\alpha)c} = \sum_{\lambda\in \Theta_u^{(T)}} \expect{X_\lambda^{(T)}},
	$$
	which completes the proof.
\end{proof}

The problem is that $c^{(t)}$ and $c^{(T)}$ are not obtained from independent samples. In other words, some selection of $\alpha$ result in increasing variance, which means that \methodcx becomes worse than \methodc.

The following lemma states the condition under which \methodcx has a variance lower than \methodc.

\textbf{Variance analysis.} Let $Y$ be an estimation at time $t_1=t_2-J$ and $Z$ be an estimation at time $t_2$. Let $0\leq \alpha\leq 1$.
\begin{itemize*}
	\item Fact 1: $\varSigma = \var{Y} = \var{Z}$.
	\item Fact 2: $0\leq \alpha\leq 1$.
\end{itemize*}
Let $\beta = 1-\alpha$.
\begin{align*}
	\var{\alpha Y + \beta Z} &= \alpha^2 \var{Y} + \beta^2\var{Z} + 2\alpha\beta\cov{Y,Z}\\
	&= \alpha^2 \varSigma + (1-2\alpha+\alpha^2)\varSigma + 2\alpha\beta\cov{Y,Z}\\
	&= 2\alpha^2\varSigma - 2\alpha\varSigma + \varSigma + 2\alpha\beta\cov{Y,Z}
\end{align*}
The covariance is calculated as follows.
\begin{align*}
	\cov{Y,Z} &= \expect{YZ} - \expect{Y}\expect{Z} \\
	Y &= \sum_{\lambda=(x,y,z)\in \Theta_u}  \frac{X_{\lambda}}{q_\lambda^2}  \\
	Z &= \sum_{\lambda=(x,y,z)\in \Theta_u}  \frac{X_{\lambda}}{q_\lambda^2}  \\
	\expect{YZ} &= \sum_{\substack{
			\lambda=(x_1,y_1,z_1)\in \Theta_u\\
			\gamma=(x_2,y_2,z_2)\in \Theta_u}}
	\frac{\expect{X_\lambda X_\gamma}}{q_\lambda^2 q_\gamma^2}  \\
	\expect{YZ}-\expect{Y}\expect{Z} &= \sum_{\substack{
			\lambda=(x_1,y_1,z_1)\in \Theta_u\\
			\gamma=(x_2,y_2,z_2)\in \Theta_u}}
	\left(\frac{\expect{X_\lambda X_\gamma}}{q_\lambda^2 q_\gamma^2} - 1\right) \\
	& \because \expect{Y}\expect{Z} = \varDelta^2 \text{ and } |\Theta_u| = \varDelta.
\end{align*}

\textbf{\boldmath If $\lambda \cap \gamma = \emptyset$:}
$$
\expect{X_\lambda X_\gamma} = q_\lambda^2 q_\gamma^2.
$$

\textbf{\boldmath If $|\lambda \cap \gamma| = 1$:}
Let $q_\lambda \geq q_\gamma$.
$$
\expect{X_\lambda X_\gamma} \leq q_\lambda q_\gamma^2.
$$

\textbf{\boldmath If $\lambda = \gamma$:}
$$
\expect{X_\lambda X_\gamma} \leq q_\lambda q_\gamma.
$$

\begin{gather*}
	\expect{YZ} = \sum_{\substack{
			\lambda=(x_1,y_1,z_1)\in \Theta_u\\
			\gamma=(x_2,y_2,z_2)\in \Theta_u}} \frac{1}{q_\lambda q_\gamma} \\
	\cov{Y,Z} = \varDelta^2 \left(\frac{1}{q_\pi^2} - 1\right) \\
	\var{\alpha Y + \beta Z} = 2\alpha^2 \varSigma - 2\alpha\varSigma + \varSigma + 2\alpha(1-\alpha)\left(\frac{1}{q_\pi^2}-1\right)\varDelta^2,
\end{gather*}
where $q_\pi$ is the latest triangle of $u$.

\textbf{Error.}
Consider a edge sequence randomly shuffled. Let $\lambda$ be a triangle at time $t$, $r_\lambda^{(i)}$ be a weight of $\lambda$ when it appears at the $i$th bucket. Then, we obtain
$$
r_{(i)} = \frac{1}{p(i)^2}\left(1 - \frac{2^i-1}{2^b-1}\right)
$$
Let $\tau$ is the weight of a triangle given by the algorithm; then its expectation becomes
\begin{align*}
	\expect{\tau} &= \frac{1}{b}\sum_{i=1}^b \frac{1}{p(i)^2}\left(1 - \frac{2^i-1}{2^b-1}\right) \\
	&=  \frac{1}{b} \left(b - \sum_{i=1}^b \frac{1}{p(i)^2}\frac{2^i-1}{2^b-1}\right).
\end{align*}
Note that $q_{(i-1)*J}\leq p(i) \leq q_{iJ}$. Consider upper bound first.
\begin{align*}
	\expect{\tau} &\leq \frac{1}{b} \left(b - \sum_{i=1}^b \frac{1}{q_{iJ}^2}\frac{1}{2^{b-i}}\right). \\
	&\leq \left(1 - \frac{J^2}{bM^2}\sum_{i=1}^b \frac{i^2}{2^{b-i}}\right). \\
	&= \left(1 - \frac{J^2}{bM^22^b}\sum_{i=1}^b i^22^i\right). \\
\end{align*}
Also the lower bound is
\begin{align*}
	\expect{\tau} &\geq \frac{1}{b} \left(b - \sum_{i=1}^b \frac{1}{q_{(i-1)J}^2}\frac{1}{2^{b-i}}\right). \\
	&\geq \left(1 - \frac{J^2}{bM^2}\sum_{i=1}^b \frac{(i-1)^2}{2^{b-i}}\right). \\
	&= \left(1 - \frac{J^2}{bM^22^b}\sum_{i=1}^b (i-1)^22^i\right). \\
	&= \left(1 - \frac{2 J^2}{bM^22^{b}}\sum_{i=1}^b (i-1)^22^{i-1}\right). \\
\end{align*}

First the last summation is solved by the following equation.
\begin{align*}
	\sum_{i=1}^n &= k^2 z^k \\
	&= z\frac{1+z  - (n+1)^2 z^n + (2n^2+2n-1)z^{n+1} - n^2z^{n+2}}{(1-z)^3}
\end{align*}
With $z=2$, we obtain the following.
\begin{align*}
	\sum_{i=1}^b &= k^2 2^k \\
	&= -2\left(3  - (b+1)^2 2^n + (2b^2+2b-1)2^{b+1} - b^2 2^{b+2}\right) \\
	&= 2\left(b^2 2^{b+2} - (2b^2+2b-1)2^{b+1} + (b+1)^2 2^b - 3\right) \\
	&= 2^{b+1}\left(2^{2} b^2 - 2(2b^2+2b-1) + (b+1)^2 \right) - 6\\
	&= 2^{b+1}\left(2^{2} b^2 - 2(2b^2+2b) + (b+1)^2 \right) + 2^{b+2}-6\\
	&= 2^{b+1} (b-1)^2 + 2^{b+2}-6.
\end{align*}

The following holds:
\begin{align*}
	\frac{2^{b+1}(b-1)^2 + 2^{b+2} - 6}{b2^b} &= \frac{2(b-1)^2 + 4}{b} - \frac{6}{2b^2} \\
	&= \frac{2b^2 - 4b + 6}{b} - \frac{3}{b^2} \\
	&= 2b - 4 + \frac{6}{b} - \frac{3}{b^2} \\
	&\geq 2b - 4 + \frac{6b - 3}{b^2} \\
	&\geq 2b - 1 + \frac{-3b^2 + 6b - 3}{b^2} \\
	&\geq 2b - 1 - \frac{3(b-1)^2}{b^2} \\
	&\geq 2b - 4 ~~ \left(\because \frac{(b-1)^2}{b^2}< 1\right) \\
	&\geq b-2
\end{align*}
Also
\begin{align*}
	\frac{2^{b+1}(b-1)^2 + 2^{b+2} - 6}{b2^b} &= \frac{2(b-1)^2 + 4}{b} - \frac{6}{2b^2} \\
	&= \frac{2b^2 - 4b + 6}{b} - \frac{3}{b^2} \\
	&= 2b - 4 + \frac{6}{b} - \frac{3}{b^2} \\
	&= 2b - \frac{4b^2 - 6b + 3}{b^2}\\
	&= 2b - \frac{b^2 + 3b^2 - 6b + 3}{b^2} \\
	&= 2b - \frac{b^2 + 3(b-1)^2}{b^2} \\
	&\leq 2b \\
\end{align*}

Then, the following holds:
\begin{align*}
	\expect{\tau} &\leq
\end{align*}
}

\hide{
	
	We construct the following three sets:
	\begin{align*}
		\varLambda^{(i)} &= \{(\lambda,\gamma)\in \Theta_u^2 : |\lambda \cap \gamma| = i \}.
	\end{align*}
	Note that $0\leq i\leq 2$.
	We rewrite $\expect{YZ}$ as follows.
	\begin{align*}
		\expect{YZ} &=
		\sum_{(\lambda,\gamma)\in \varLambda^{(0)}}
		\frac{(T_{z_1}-1)^2(T_{z_2}-1)^2}{M^4} \expect{X_\lambda X_\gamma} \\
		&+	\sum_{(\lambda,\gamma)\in \varLambda^{(1)}}
		\frac{(T_{z_1}-1)^2(T_{z_2}-1)^2}{M^4} \expect{X_\lambda X_\gamma} \\
		&+	\sum_{(\lambda,\gamma)\in \varLambda^{(2)}}
		\frac{(T_{z_1}-1)^2(T_{z_2}-1)^2}{M^4} \expect{X_\lambda X_\gamma} \\
	\end{align*}

	Since $X_\lambda,X_\gamma\in \{0,1\}$, $\expect{X_\lambda X_\gamma} = \Pr[X_\lambda]\Pr[X_\gamma|X_\lambda]$.
	
	\textbf{\boldmath Case $(\lambda,\gamma)\in \varLambda^{(0)}$}: In this case, both $X_\lambda$ and $X_\gamma$ are independent so that $\expect{X_\lambda X_\gamma} = \expect{X_\lambda}\expect{X_\gamma}$.
	
	\textbf{\boldmath Case $(\lambda,\gamma)\in \varLambda^{(1)}$}:
	\begin{align*}
		\Pr[X_\lambda=1] &= \frac{M^2}{(T_{z_1}-1)^2} \\
		\Pr[X_\gamma=1~|~X_\lambda=1] &= \frac{M}{(T_{z_2}-1)} \frac{T_{z_1}-1}{T_{z_2}-1}
	\end{align*}
	Finally, we obtain
	$$
	\expect{X_\lambda X_\gamma} = \frac{M^3}{(T_{z_1}-1)(T_{z_2}-1)^2} = \sqrt{\expect{X_\lambda}}\expect{X_\gamma}.
	$$
	
	\textbf{\boldmath Case $(\lambda,\gamma)\in \varLambda^{(2)}$}:
	\begin{align*}
		\Pr[X_\lambda=1] &= \frac{M^2}{(T_{z_1}-1)^2} \\
		\Pr[X_\gamma=1~|~X_\lambda=1] &= 1
	\end{align*}
	Finally we obtain
	$$
	\expect{X_\lambda X_\gamma} = \frac{M^2}{(T_{z_1}-1)^2} = \expect{X_\lambda}.
	$$
	
	Now we compute $\expect{YZ}$ as follows.
	\begin{align*}
		\expect{YZ} &=
		\sum_{(\lambda,\gamma)\in \varLambda^{(0)}}
		\expect{X_\lambda} \expect{X_\gamma} \\
		&+	\sum_{(\lambda,\gamma)\in \varLambda^{(1)}}
		\sqrt{\expect{X_\lambda}} \expect{X_\gamma} \\
		&+	\sum_{(\lambda,\gamma)\in \varLambda^{(2)}}
		\expect{X_\lambda} \\
	\end{align*}
	Let $k_u$ be the number of pairs of triangles of $u$ sharing one edge.
	Then,
	\begin{align*}
		\cov{Y,Z} &=
		\sum_{(\lambda,\gamma)\in \varLambda^{(0)}}
		\left( \expect{X_\lambda} \expect{X_\gamma} - \expect{X_\lambda} \expect{X_\gamma} \right)\\
		&+	\sum_{(\lambda,\gamma)\in \varLambda^{(1)}}
		\left( \sqrt{\expect{X_\lambda}} \expect{X_\gamma} - \expect{X_\lambda} \expect{X_\gamma} \right) \\
		&+	\sum_{(\lambda,\gamma)\in \varLambda^{(2)}}
		\left( \expect{X_\lambda} - \expect{X_\lambda} \expect{X_\gamma} \right) \\
	\end{align*}
}

\section{Experiments}
\label{sec:experiment}

In this section, we present experimental results focusing on the following questions:
\begin{itemize*}
	\item[Q1] How accurate is \methodcx for local triangle counting in simple graph stream? (\secref{sec:exp_sim})
	\item[Q2] How accurate are \methodm and \methodmx for local triangle counting in multigraph stream? (\secref{sec:exp_mul})
	\item[Q3] What are the patterns discovered from real world graphs by \method? (\secref{sec:exp_anom})
\end{itemize*}

\subsection{Experimental Settings}
\label{sec:exp-setting}

\textbf{Dataset.}
The real world graph datasets used in our experiments are listed in \tabref{tab:datasets}.
For simple graphs,
we remove self-loops, edge direction, and duplicate edges.
For multigraphs, we remove self-loops and edge direction.
We use a random order of edges for graphs without timestamps.

\noindent \textbf{Parameters.}
We used $J=M$ and $\delta \in \{0.1,0.4,0.7\}$ in \methodcx and \methodmx. In Sections~\ref{sec:exp_sim} and \ref{sec:exp_mul}, we present results with $\delta$ that shows the best performance in each data.

\begin{figure*} [t!]
	\begin{center}
		\subfloat[\facebook]
		 {\includegraphics[width=0.235\textwidth,natwidth=610,natheight=642]{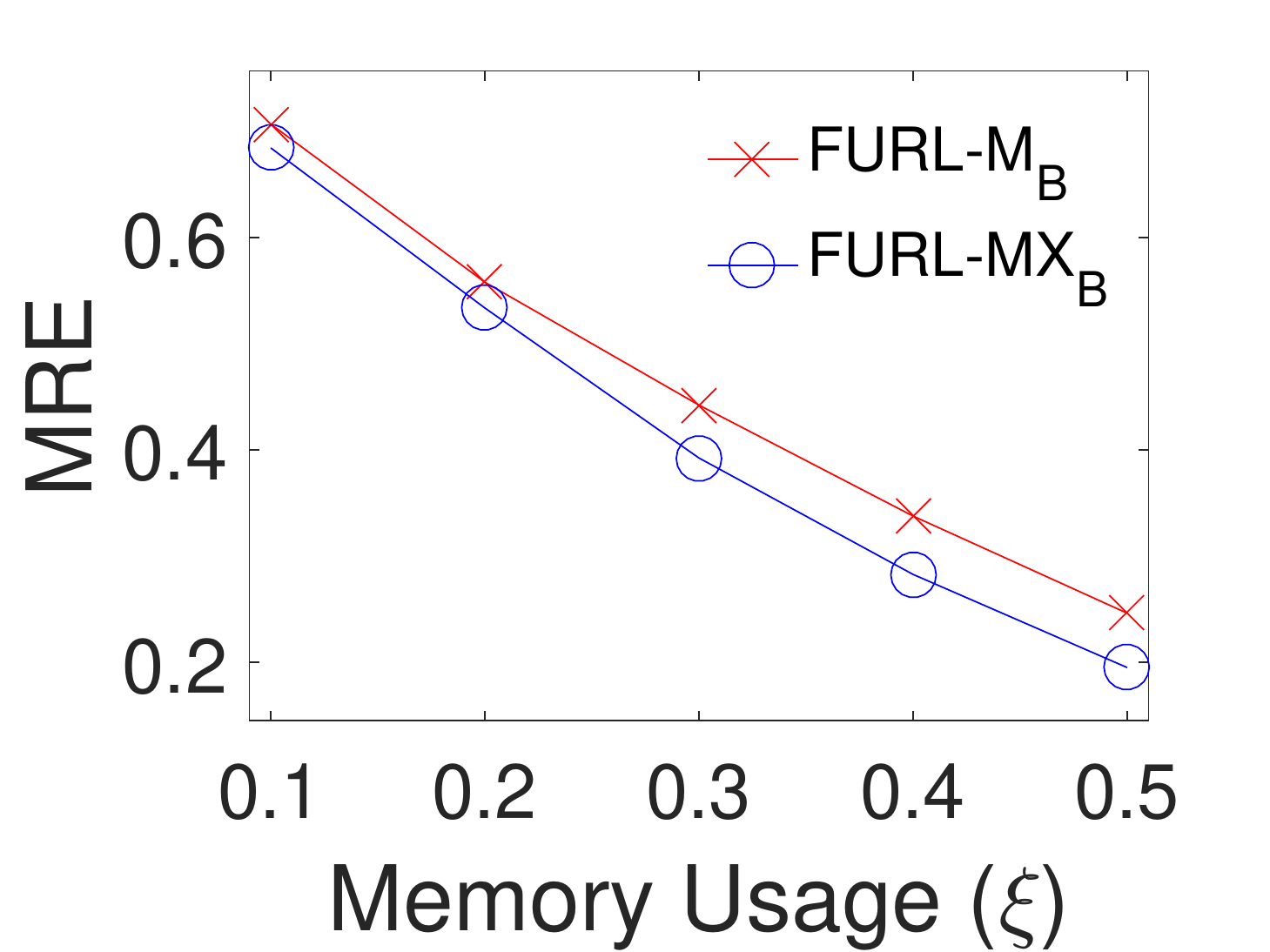}\label{fig:mul_err_facebook_b}} ~~~~~
		\subfloat[\actor]
		 {\includegraphics[width=0.235\textwidth,natwidth=610,natheight=642]{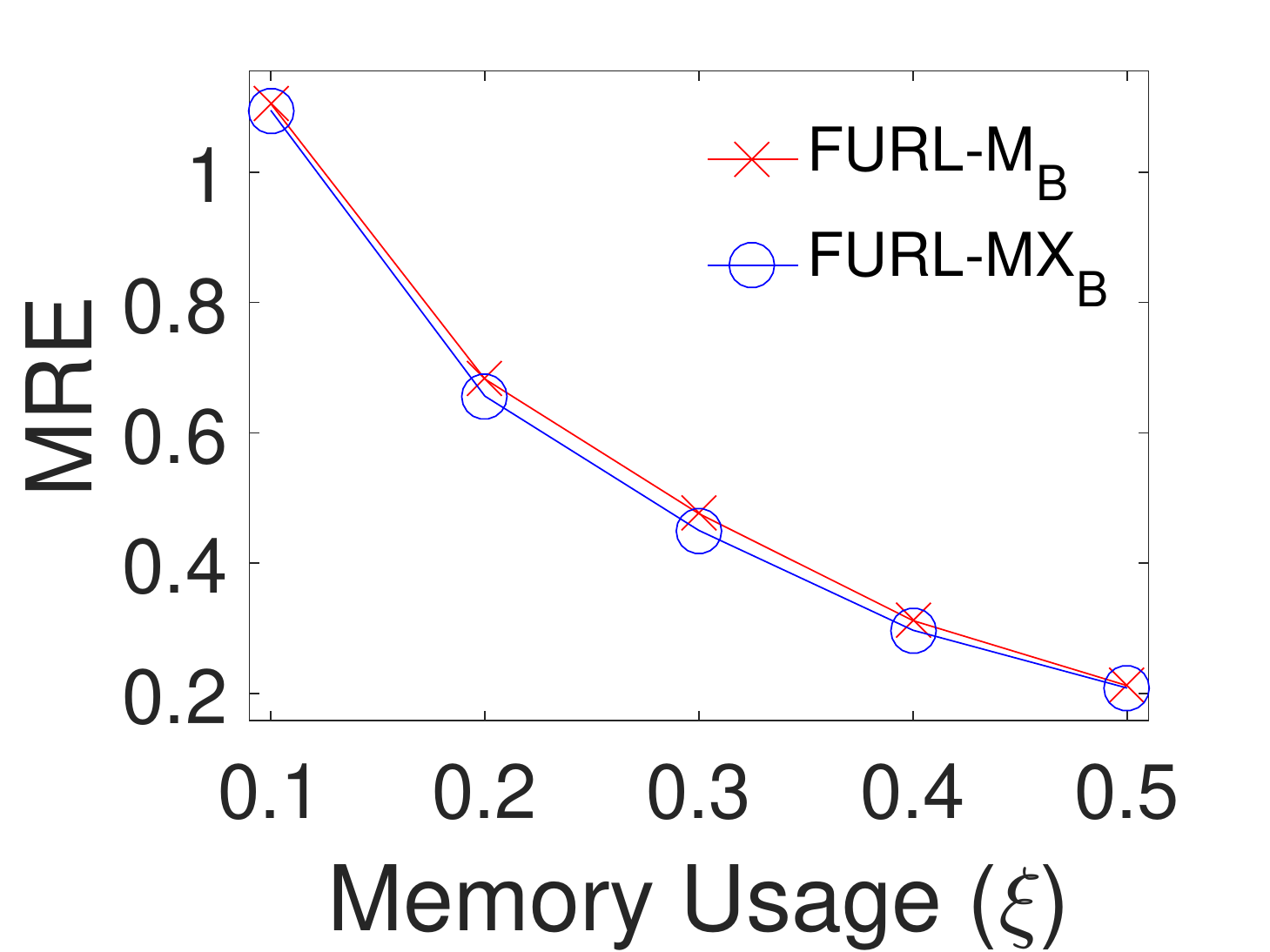}\label{fig:mul_err_actor_b}} ~~~~~
		\subfloat[\baidu]
		 {\includegraphics[width=0.235\textwidth,natwidth=610,natheight=642]{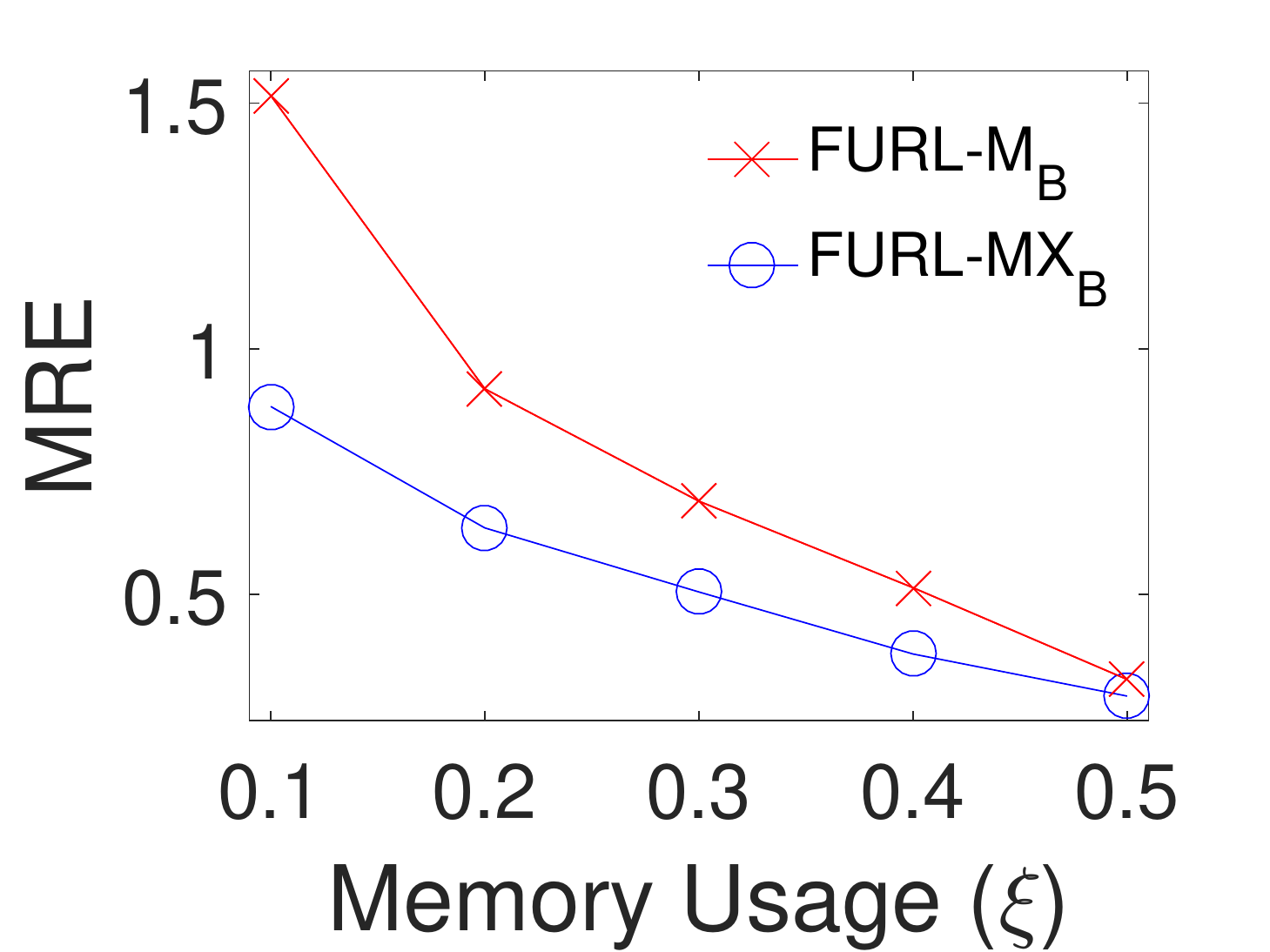}\label{fig:mul_err_baidu_b}} ~~~~~
		\subfloat[\dblp]
		 {\includegraphics[width=0.235\textwidth,natwidth=610,natheight=642]{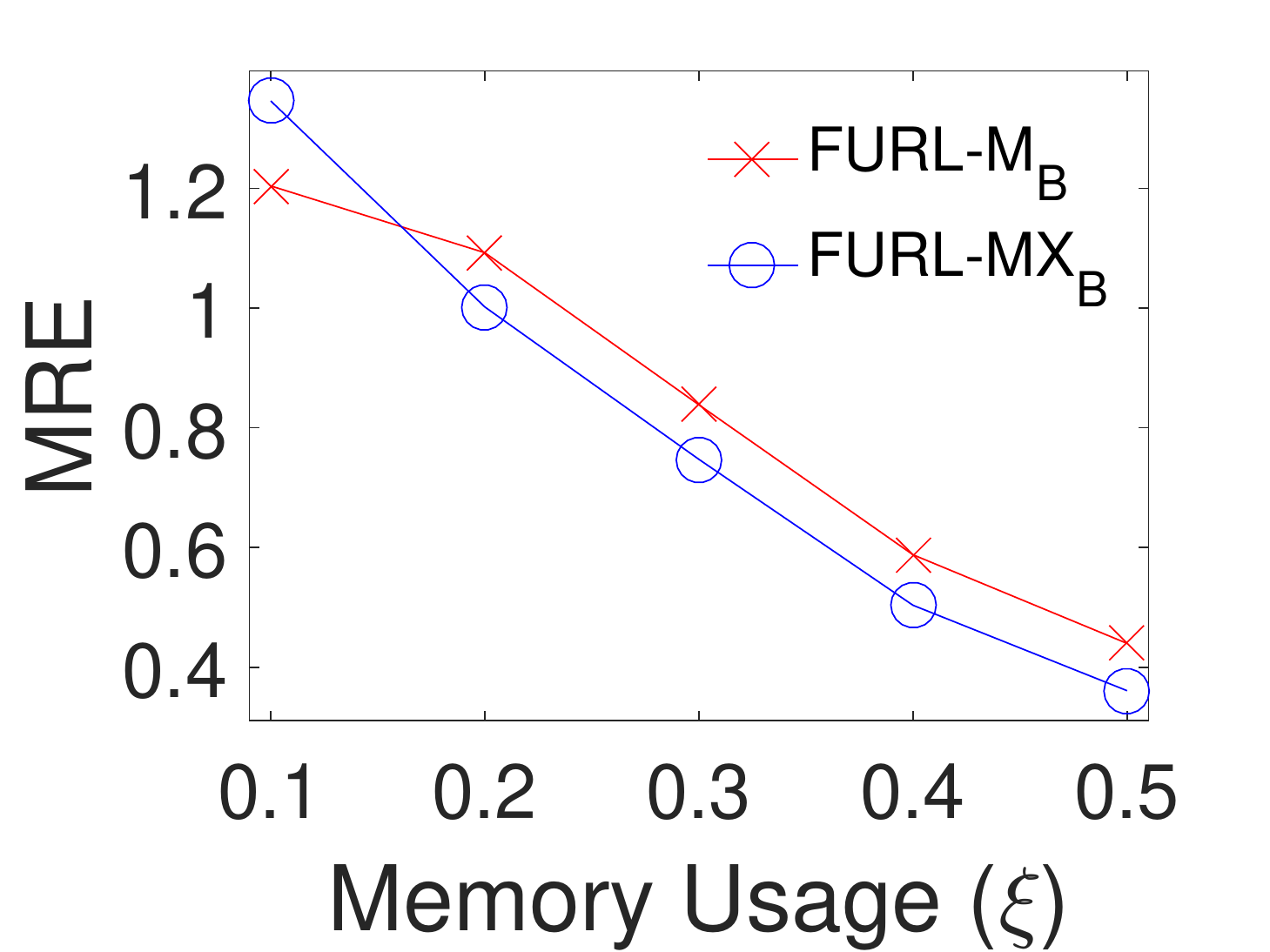}\label{fig:mul_err_dblp_b}}
		 \vspace{-2mm}
		\caption{ Mean of relative error vs.~memory usage for \methodbx and \methodb. Results with $\delta=$ 0.7, 0.4, 0.7, and 0.7 in \facebook, \actor, \baidu, and \dblp,  respectively.
\methodbx gives the minimum error for a given memory usage in almost all cases. 
		}
		\label{fig:mul_b}
	\end{center}
	\vspace{-5mm}
\end{figure*}

\begin{figure*} [t!]
	\begin{center}
		\subfloat[\facebook]
		 {\includegraphics[width=0.235\textwidth,natwidth=610,natheight=642]{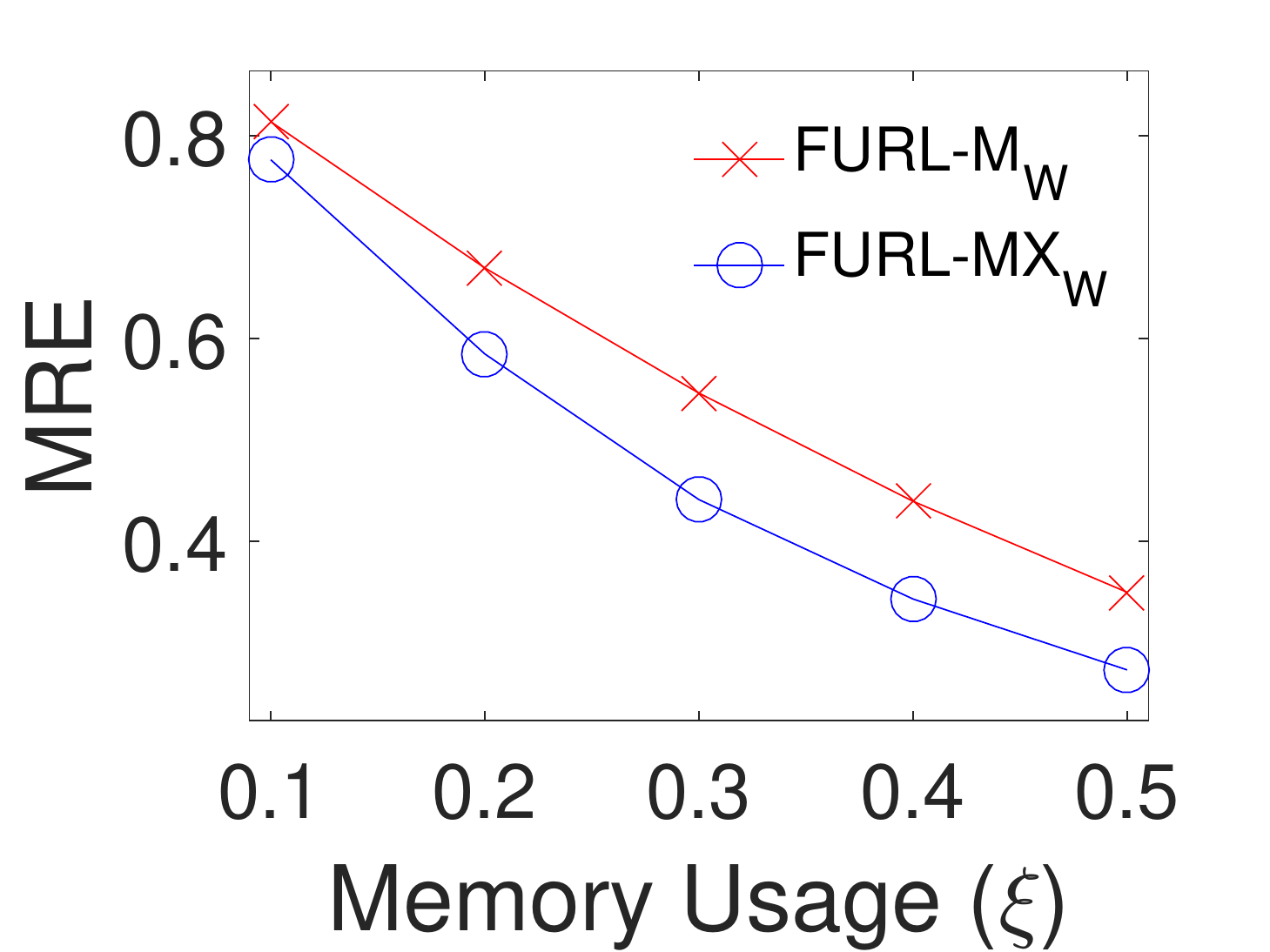}\label{fig:mul_err_facebook_w}} ~~~~~
		\subfloat[\actor]
		 {\includegraphics[width=0.235\textwidth,natwidth=610,natheight=642]{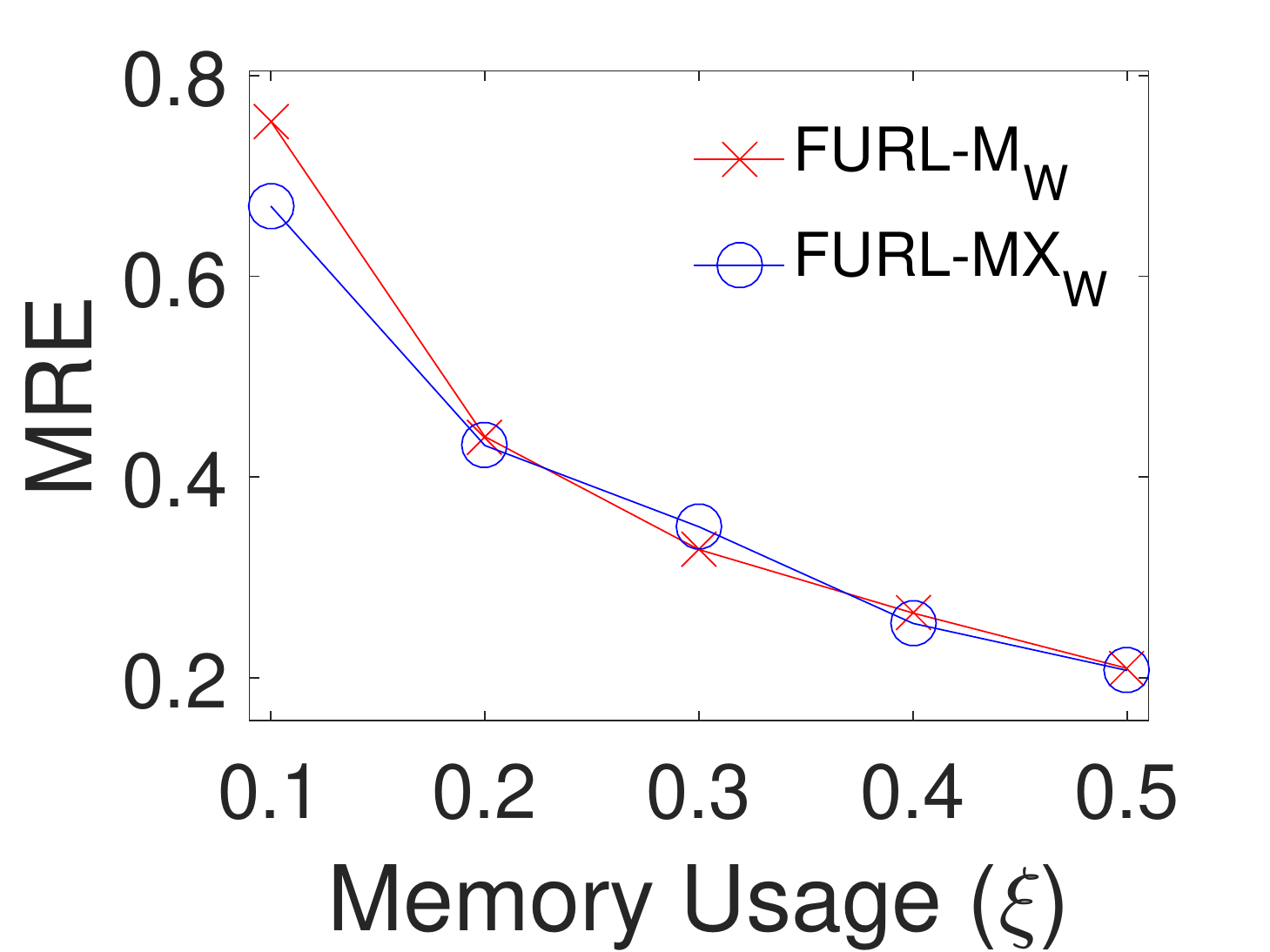}\label{fig:mul_err_actor_w}} ~~~~~
		\subfloat[\baidu]
		 {\includegraphics[width=0.235\textwidth,natwidth=610,natheight=642]{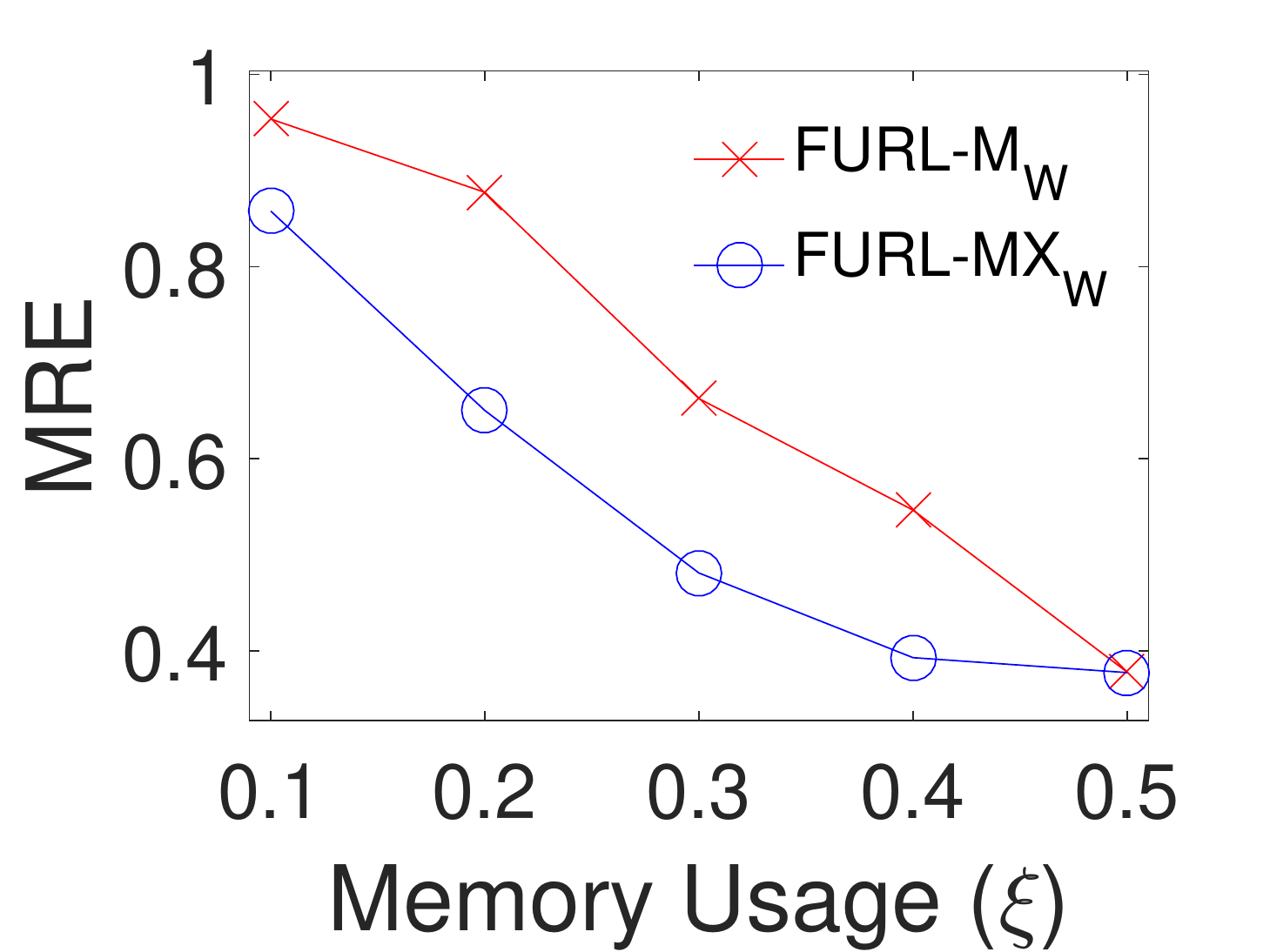}\label{fig:mul_err_baidu_w}} ~~~~~
		\subfloat[\dblp]
		 {\includegraphics[width=0.235\textwidth,natwidth=610,natheight=642]{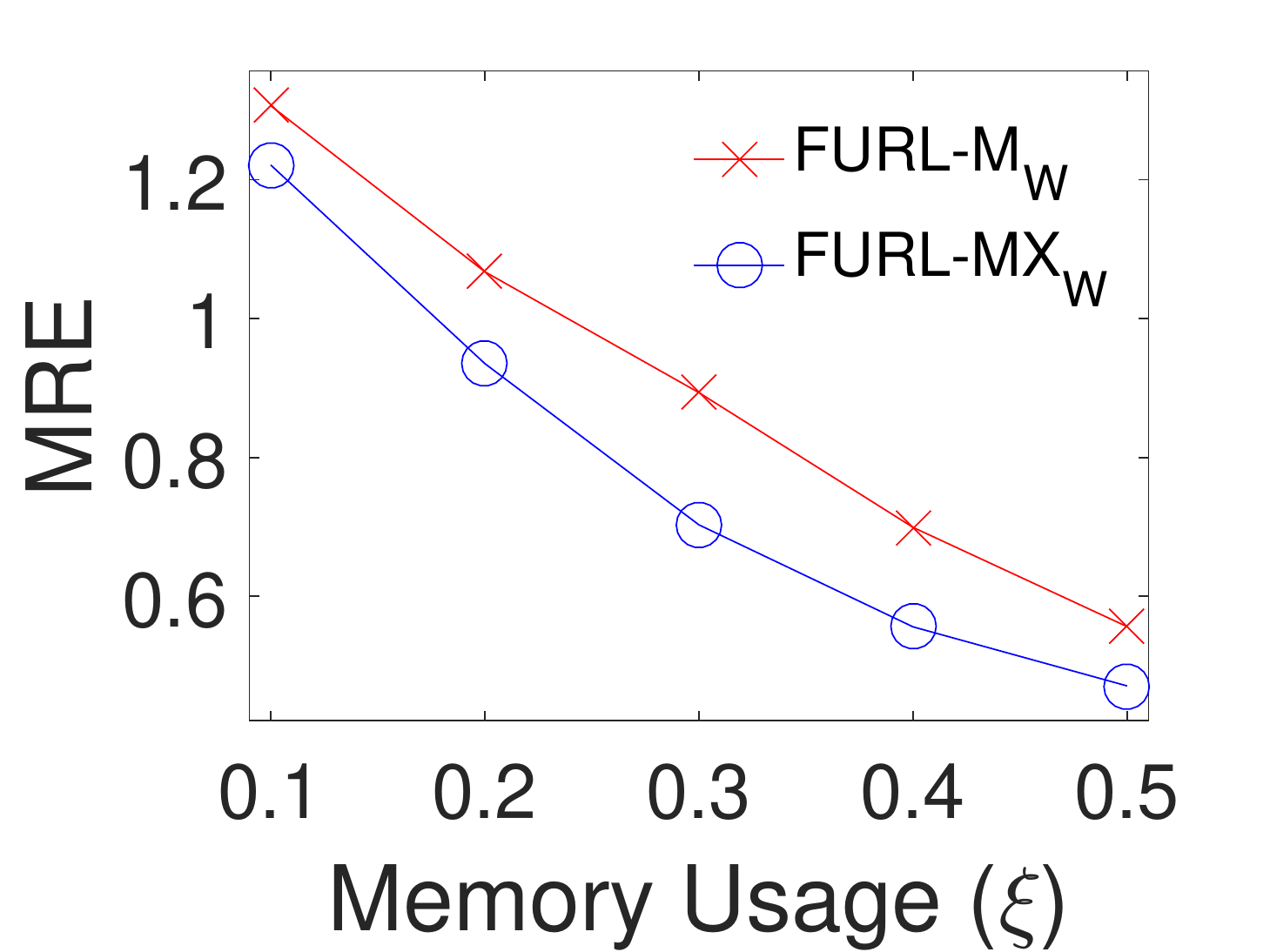}\label{fig:mul_err_dblp_w}}
		 \vspace{-2mm}
		\caption{Mean of relative error vs.~memory usage for \methodwx and \methodw. Results with $\delta=$ 0.7, 0.1, 0.7, and 0.7 in \facebook, \actor, \baidu, and \dblp, respectively.
\methodwx gives the minimum error for a given memory usage in almost all cases.
		}
		\label{fig:mul_w}
	\end{center}
	\vspace{-5mm}
\end{figure*}

\begin{table}[t!]
	\caption{Datasets used in our experiments.}
	\vspace{-5mm}
	\label{tab:datasets}
	\begin{center}\scriptsize
		\begin{tabular}{lrrp{5cm}}
			\toprule
			\textbf{Name} & \textbf{Node} & \textbf{Edge} & \textbf{Description}  \\ \midrule
			\multicolumn{4}{c}{Simple Graphs}\\
			\midrule
			\textsf{YahooMsg}\textsuperscript{2} & 100,000 & 587,963 & User communication in Yahoo! messenger\\
			\textsf{Youtube}\textsuperscript{1} & 1,138,499 & 2,990,287 & Social network of Youtube users  \\
			\textsf{Pokec}\textsuperscript{1} & 1,632,803 & 22,301,964 & Friendship network from Pokec \\
			\textsf{Skitter}\textsuperscript{1} & 1,696,415 & 11,095,298& Autonomous system on the Internet \\
			\textsf{Hudong}\textsuperscript{1} & 2,452,715 & 18,690,759 & ``related to" links in  encyclopedia Hudong\\
			\textsf{WebGraph}\textsuperscript{3} & 42,889,765 & 582,567,291 & Hyperlinks in Web\\
			\midrule
			\multicolumn{4}{c}{Multigraphs}\\
			\midrule
			\textsf{Facebook}\textsuperscript{1} & 46,952 & 855,542 & Wall posts on other user's wall on Facebook\\
			\textsf{Actor}\textsuperscript{1} & 382,219 & 33,115,812 & Actor collaboration in movies\\
			\textsf{Baidu}\textsuperscript{1} & 415,641 & 3,284,387 &  Baidu Encyclopedia "related to" links\\
			\textsf{DBLP-M}\textsuperscript{1} & 1,314,050 & 18,986,618 & Co-author network in DBLP\\
			\bottomrule
		\end{tabular}
	\end{center}\vspace{-2.5mm}
	{\tiny\textsuperscript{1}\url{http://konect.uni-koblenz.de/networks/}} \\
	 {\tiny\textsuperscript{2}\url{http://webscope.sandbox.yahoo.com/catalog.php?datatype=g}} \\
	{\tiny\textsuperscript{3}\url{http://webdatacommons.org/hyperlinkgraph/#toc2}} \\
	\vspace{-0.9cm}
\end{table}

\noindent \textbf{Evaluation Metric.}
The performances of algorithms are evaluated in the following two aspects.
\begin{itemize*}
%
%
	
	\item \textbf{Mean of Relative Error (\mre)}: This measures how an estimation is close to the ground truth.
	$$
	\varepsilon(\tau,\varDelta) = \frac{1}{\left|V\right|}\sum_{u\in V} \frac{\left|\tau_u-\varDelta_u\right|}{{\varDelta_u+1}},
	$$
	where $V$ is a set of nodes appearing in a graph stream. We add $1$ to the denominator for the case of $\varDelta_u=0$.
	\item \textbf{\boldmath Proportion $\xi$ of Sampled Edges}: This is the dominant factor of required memory spaces.
\end{itemize*}

For all the algorithms, \mre is computed by the average of results obtained by $10$ independent runnings since they are randomized algorithms.
The memory usage $\xi$ is determined as follows: for \chngwww{\methodc, \methodcx} and \triest, $\xi = M/m$ where $m$ is the number of edges in a graph;
\chngwww{for \mascot, $\xi=p$ where $p$ is a given edge sampling rate;}
for \methodm and \methodmx, $\xi = M/u$ where $u$ is the number of unique edges in a graph.


\subsection{Accuracy for Simple Graph Stream} \label{sec:exp_sim}
\chngwww{
For simple graph stream,
	we compare \methodc with two competing methods: \mascot~\cite{LimK15} and \triest~\cite{StefaniERU16}.
}
Note that \triest is the same as \methodc introduced in Section~\ref{sec:method:simple}.

\figref{fig:methodcx result} shows comparison between \methodcx, \triest and \mascot in \mre over the memory usage $\xi$.
We choose $\delta=0.7, 0.1, 0.4$, and $0.4$ in \youtube, \pokec, \skitter, and \hudong, respectively.
As shown in the figure, \methodcx gives the minimum error for a given memory usage compared to \triest and \mascot since \methodcx gives more concentrated estimation around the ground truth.



\subsection{Accuracy for Multigraph Stream} \label{sec:exp_mul}
For multigraph stream,
we compare only our proposed methods \methodmx and \methodm since there is no previous work.

\figref{fig:mul_b} shows the accuracy of \methodbx and \methodb for binary counting
in \mre over the memory usage $\xi$. We present results with $\delta= 0.7, 0.4, 0.7$, and $0.7$ for \facebook, \actor, \baidu, and \dblp, respectively.
Note that \methodbx gives the minimum error for a given memory usage in almost all cases since \methodbx gives concentrated results.

\figref{fig:mul_w} shows comparison of \methodwx and \methodw for weighted counting
in \mre over the memory usage $\xi$. We present results with $\delta=  0.7, 0.1, 0.7$, and $0.7$ for \facebook, \actor, \baidu, and \dblp, respectively. 
As in the case of binary counting,
\methodwx outperforms \methodw,
giving the minimum error for a given memory usage in almost all cases
since \methodwx gives concentrated results.

\subsection{Anomaly Detection} \label{sec:exp_anom}

In this section, we show anomalous patterns and nodes on real world graphs using \method. The number of local triangles is an important index that indicates characteristics of nodes and cohesiveness of groups which neighbors of nodes form.
We use two datasets \yahoo and \webgraph listed in \tabref{tab:datasets}.
\yahoo is a user communication network in Yahoo! messenger whose edge means a user sends a message \chng{to} another user. \webgraph is a hyperlink network of webpages. Its node and edge correspond to a web page and a hyperlink, respectively. 
Since they are simple graphs, we use \methodcx with the following parameters: $\delta=0.4$, $\xi=0.01$ and $J=M$.
%

\begin{figure}[t!]
	\begin{center}
		\subfloat[\yahoo]
		 {\includegraphics[width=0.5\columnwidth,natwidth=610,natheight=642]{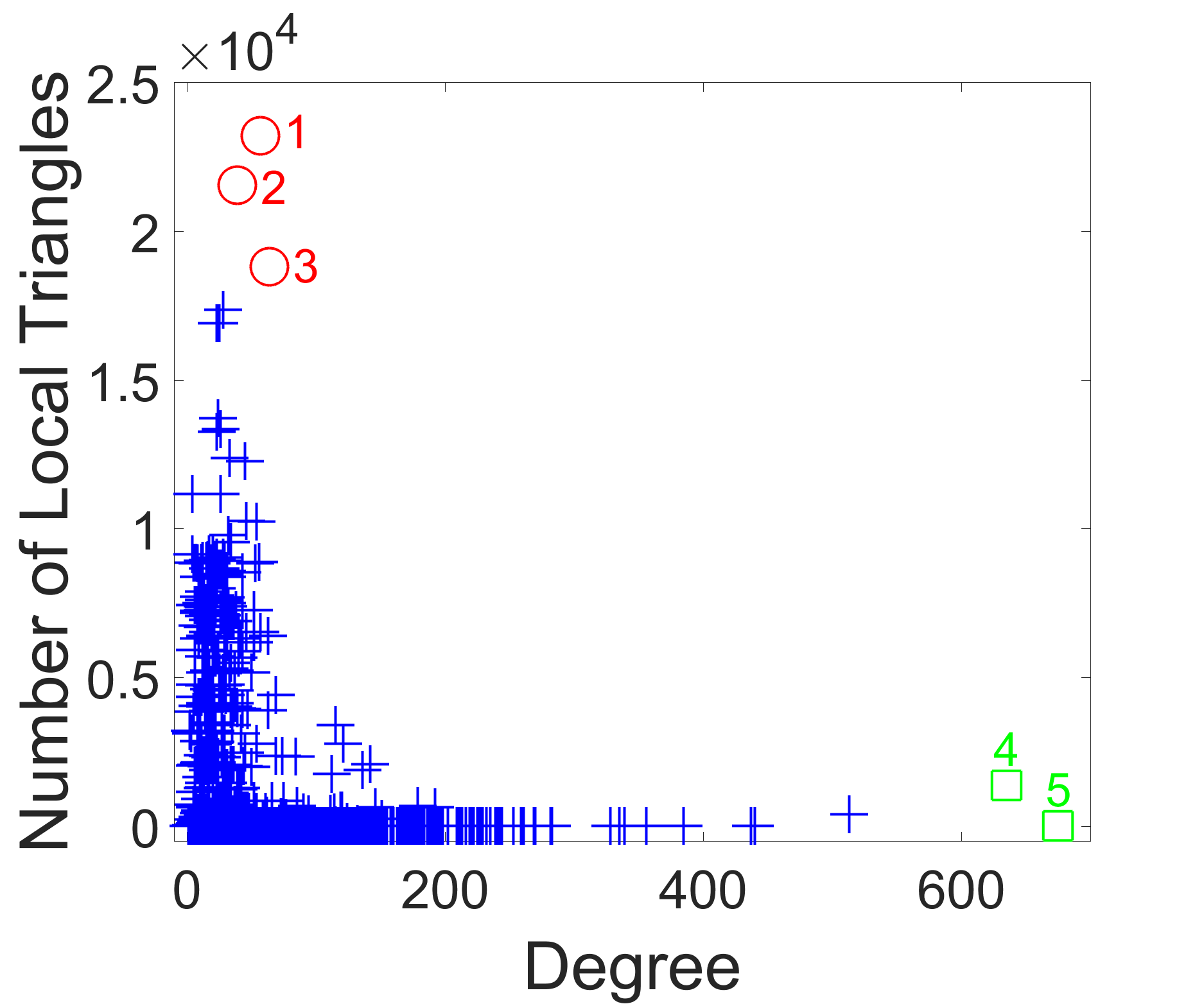}\label{fig:yahoomsg_a}} ~~~~~
		\subfloat[Spyplots for egonets]{
			 \includegraphics[width=0.43\columnwidth,natwidth=610,natheight=642]{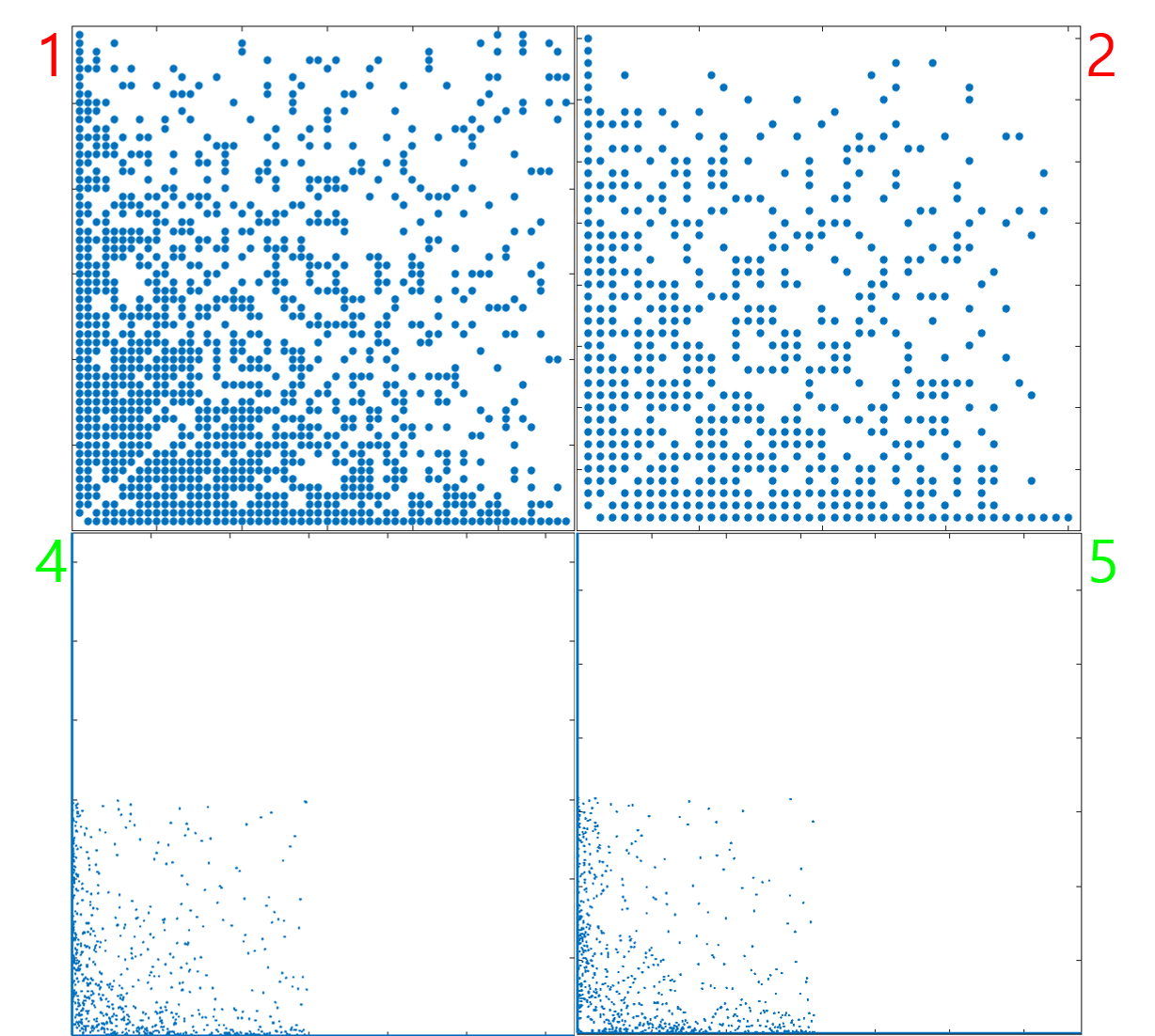}
			\label{fig:yahoomsg_b}}
		 \vspace{-2mm}
		\caption{
			(a) Degree vs.~the number of local triangles estimated by \method in \yahoo.
			The neighbors of each red node contain a very large giant connected component; those of each green node have many tiny connected components.
			(b) Spyplots for the egonetworks of two red nodes (top) and two green nodes (bottom). Each spyplot at the top shows that the egonetworks are tightly connected. Each plot at the bottom represents that the egonetworks are loosely connected.
		}
		\label{fig:yahoomsg}
	\end{center}
	\vspace{-5mm}
\end{figure}

\begin{figure}[t!]
	\begin{center}
		\subfloat[\webgraph]
		 {\includegraphics[width=0.5\columnwidth,natwidth=610,natheight=642]{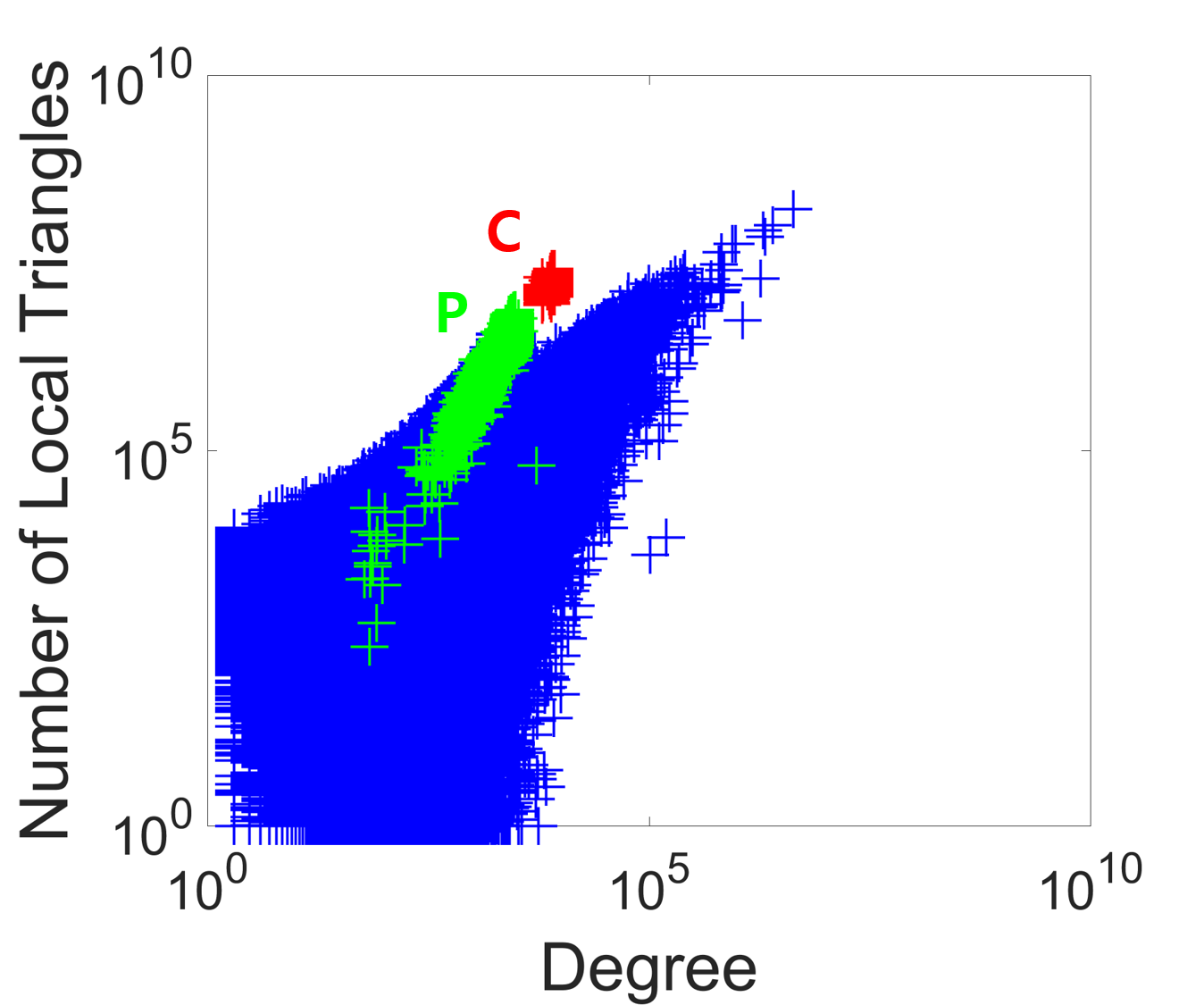}\label{fig:webgraph_a}} ~~~~~
		\subfloat[Spyplot for C, P]
		 {\includegraphics[width=0.43\columnwidth,natwidth=610,natheight=642]{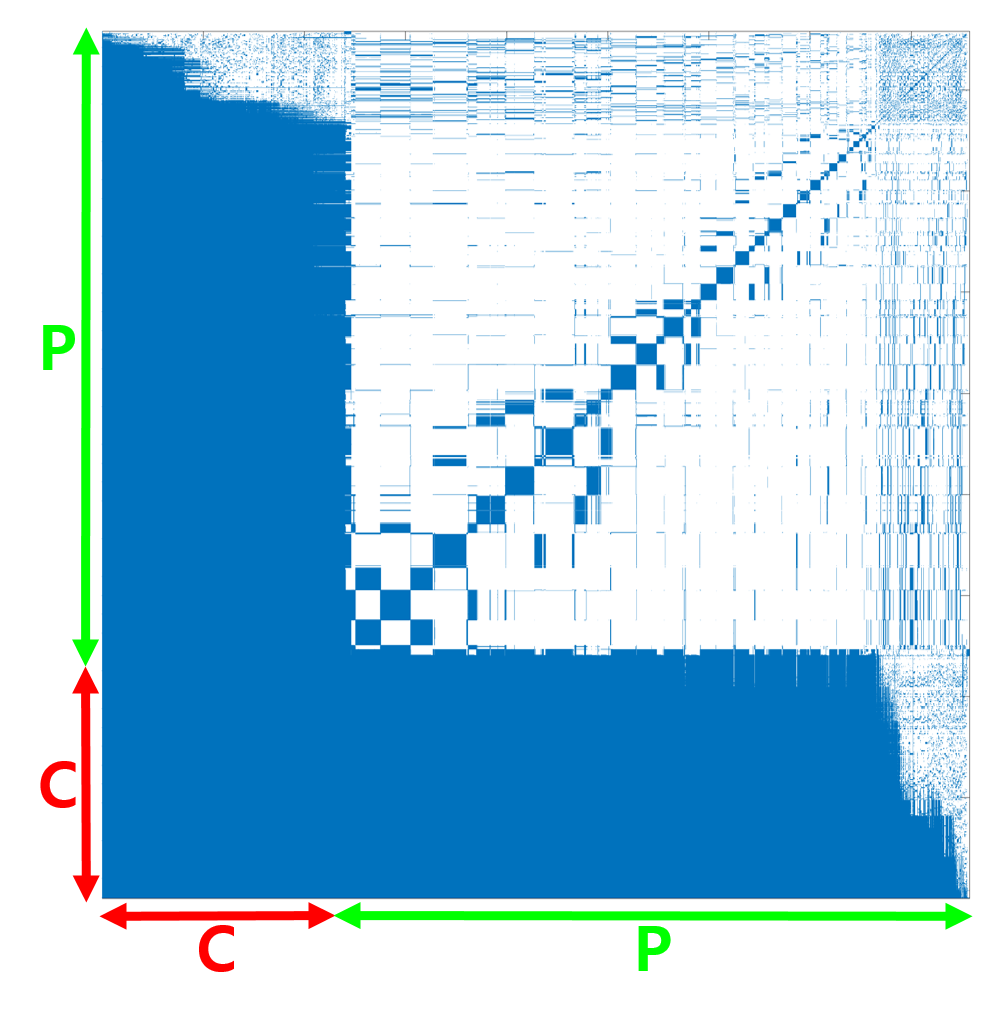}\label{fig:webgraph_b}}
		  \vspace{-2mm}
		\caption{
			(a) Degree vs.~the number of local triangles estimated by \method in \webgraph.
			The red group (marked `C') forms a near-clique.
			The green group (marked `P') corresponds to neighbors of the red nodes. Nodes in the green group have tight connections to the red group but are internally sparse.
			Both groups form a core-periphery structure.
			(b) Spyplot for the subgraph containing all red and green nodes. The core-periphery structure is clearly displayed.
		}
		\label{fig:webgraph}
	\end{center}
	\vspace{-5mm}
\end{figure}

\mchng{
\begin{observation}[Single Large Neighbor Group]
	In \yahoo, there are users whose neighbors form a large connected group. 
\end{observation}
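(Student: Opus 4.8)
The plan is to establish this observation empirically, by combining the local triangle estimates produced by \method with a direct structural inspection of the relevant egonetworks. The underlying intuition is that the local triangle count $\varDelta_u$ of a node $u$ equals the number of edges in the subgraph induced on $u$'s neighborhood $N_u$; hence a node whose neighbors are densely interconnected must exhibit a large $\varDelta_u$ relative to its degree $|N_u|$, since the maximum possible value $\binom{|N_u|}{2}$ is attained only when $N_u$ forms a clique.

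First, I would run \methodcx on the \yahoo stream with a small memory budget to obtain, for every user $u$, an estimate of $\varDelta_u$; by the accuracy guarantees of \lemref{lem:our_accuracy} and \lemref{lem:our_concent_prob}, these estimates are tightly concentrated around the true counts, so they can be trusted for the coarse separation that follows. Next, I would produce the scatter plot of degree versus estimated local triangles (as in \figref{fig:yahoomsg_a}) and isolate the candidate users lying in the upper envelope, that is, those whose triangle count is close to the clique bound $\binom{|N_u|}{2}$ for their degree. These are exactly the nodes for which the neighborhood edge density is near-maximal.

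The remaining, and genuinely harder, step is to upgrade \emph{dense neighborhood} to \emph{single large connected group}. A large $\varDelta_u$ only certifies that $N_u$ contains many edges; in principle the same edge count could be realized by several disjoint dense clusters rather than one connected component. To rule this out, I would extract the induced subgraph on $N_u$ for each candidate node and render it as a spyplot (as in \figref{fig:yahoomsg_b}), reordering the vertices so that a single connected block is visually distinguishable from scattered small blocks; complementarily, I would compute the size of the largest connected component of the neighbor-induced subgraph and confirm that it comprises almost all of $N_u$. The contrast against the green nodes, whose egonetworks decompose into many tiny components, serves as a control showing that the connectivity is a genuine property of the selected users rather than an artifact of high degree.

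The main obstacle I anticipate is precisely this gap between density and connectivity: the triangle count is a good proxy for the former but not a certificate of the latter, so the argument cannot rest on the estimates alone and must be completed by the explicit egonetwork analysis. A secondary concern is that \method returns estimates rather than exact counts, so the selection of candidate nodes could in principle include false positives; I would mitigate this by choosing nodes well inside the high-triangle region and, if needed, by recomputing the exact local counts for the handful of selected egonetworks, which is cheap once the small candidate set has been identified.
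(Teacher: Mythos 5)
Your proposal matches the paper's approach: the paper likewise uses the \methodcx degree-vs-triangle scatter plot (with $\delta=0.4$, $\xi=0.01$, $J=M$) to flag the anomalous ``red'' users, then verifies connectivity directly by computing the connected components of each flagged user's egonetwork and displaying spyplots, finding that user1 and user2's neighbors form a single component and that about $94.6\%$ of user3's neighbors lie in one component. Your explicit recognition that triangle density alone does not certify connectedness, and your remedy of inspecting the induced neighborhood subgraphs, is exactly the step the paper takes.
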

}

\mchng{
\begin{observation}[Diverse Neighbor Groups]
	In \yahoo, there are users whose neighbors form diverse groups. 
The egonetwork of such user is a near-star network.
\end{observation}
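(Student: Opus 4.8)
The plan is to establish this observation empirically rather than by a closed-form derivation, since it is a structural claim about a specific real network. First I would run \methodcx on the \yahoo stream to obtain, for every user $u$, an estimated local triangle count together with its degree $d_u$, and produce the scatter plot of degree versus local triangles shown in \figref{fig:yahoomsg_a}. The key quantity to track is the ratio of $\varDelta_u$ to its clique-saturated maximum $\binom{d_u}{2}$: a value near the maximum signals a dense neighborhood, whereas a value near zero signals that the neighbors of $u$ are almost mutually non-adjacent.

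The core of the argument rests on one elementary structural fact. If a node $u$ has large degree $d_u$ but local triangle count $\varDelta_u$ close to $0$, then almost no pair of neighbors of $u$ is joined by an edge; consequently the induced egonetwork consists of the center $u$ attached to $d_u$ leaves that are (nearly) pairwise disconnected, i.e.\ a near-star. I would therefore isolate the high-degree, low-triangle nodes---the green points in \figref{fig:yahoomsg_a}---as the candidate witnesses for the observation, since precisely these nodes have the triangle-to-degree ratio collapsing toward zero.

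To confirm that these candidates genuinely exhibit \emph{diverse} neighbor groups rather than a single uniformly sparse blob, I would extract the exact egonetwork of each green node from the data and render its neighbor-to-neighbor adjacency matrix as a spyplot, as in the bottom row of \figref{fig:yahoomsg_b}. A fragmented, star-like neighborhood manifests as a spyplot whose off-diagonal region is almost empty and whose few remaining edges cluster into many tiny disconnected blocks; this is exactly what the bottom spyplots display, in sharp contrast to the tightly filled blocks of the red near-clique nodes at the top.

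The main obstacle I anticipate is twofold. First, \methodcx returns only an \emph{estimate} of $\varDelta_u$, so I must argue that estimation noise does not manufacture the pattern; this is handled by treating the estimate purely as a ranking and screening device, and then verifying each chosen node against its \emph{true} egonetwork in the spyplots, so that the decisive evidence is exact rather than estimated. Second, I must separate the ``diverse groups'' case, in which the neighbors split into several tiny connected components, from a pure star, in which no neighbor edges exist at all; both yield a near-zero $\varDelta_u$, so distinguishing them requires reading the block structure of the spyplot rather than relying on the scalar triangle count alone.
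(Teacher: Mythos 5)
Your proposal is correct and follows essentially the same route as the paper: run \methodcx on \yahoo, use the degree versus estimated-triangle scatter plot to flag high-degree, low-triangle (green) users, and then verify the claim against the exact egonetworks via spyplots showing many tiny disconnected neighbor components. The only difference is cosmetic---the paper additionally quantifies the diversity by counting connected components of each flagged user's neighbors (357 and 378 components with average sizes $\approx 1.78$ and $\approx 1.79$), which is the same evidence you propose to read off the spyplot block structure.
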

}

\mchng{
\figref{fig:yahoomsg_a} shows two anomalous patterns in \yahoo. The first pattern, marked in red, corresponds to users whose neighbors form a large connected component. The second pattern, marked in green, corresponds to users whose neighbors form diverse groups. There are three users in the "red" patterns and two users in the "green" patterns (marked in \figref{fig:yahoomsg_a}). 
In the red patterns, neighbors of each of user1 and user2 make only one connected component, meaning that the users communicate within themselves. \chng{Two} spyplots at the top of \figref{fig:yahoomsg_b} are those for the egonetworks of user1 and user2, respectively.
The neighbors of user3 make two connected components, but about $94.6\%$ of the neighbors belong to one component. 
As a result, all users in the red pattern have tightly connected egonetworks.
}

\mchng{
In the green pattern, neighbors of the users make a number of groups. \chng{Two} spyplots at the bottom of \figref{fig:yahoomsg_b} are spyplots for the egonetworks of user4 and user5, respectively. They demonstrate that the neighbors of each of user4 and user5 are hardly connected to each other.
As a result, user4 has a near-star local structure.
The neighbors of user4 form 357 connected components and their sizes are 1 or 2, except for 6 out of 357. The average size of the connected components is about 1.78. user5 shows a similar pattern. Its neighbors make 378 connected components and the average size of them is about 1.79.
}

\mchng{
\begin{observation}[Core-periphery]
	In \webgraph, there is a core-periphery structure which consists of two groups of nodes. One is a near-clique and the other is a sparse graph. The two subgraphs are tightly connected.
\end{observation}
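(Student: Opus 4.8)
The statement is an empirical observation about the \webgraph dataset rather than a theorem, so the plan is to establish it by a case study built on the local triangle estimates that \methodcx produces, followed by a direct structural verification on a small induced subgraph. First I would run \methodcx on the full \webgraph stream with the stated parameters ($\delta = 0.4$, $\xi = 0.01$, $J = M$) to obtain an estimated local triangle count $\tau_u$ for every node $u$, and then plot $\tau_u$ against the degree $d_u$ of $u$, as in \figref{fig:webgraph_a}. The key diagnostic is that a node belonging to a clique of size $k$ has degree $k-1$ and exactly $\binom{k-1}{2}$ local triangles; hence nodes of a near-clique lie on (or just below) the upper envelope $\tau_u \approx \binom{d_u}{2}$ of this plot. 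Candidate core nodes $C$ are read off as the group that simultaneously has large degree and triangle counts close to this envelope (the red group marked `C'), while candidate periphery nodes $P$ are the neighbors of $C$ that have comparably large degree but triangle counts far below the envelope (the green group marked `P'), signalling that their neighborhoods are internally sparse.

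Because \methodcx keeps only a $\xi = 0.01$ fraction of the edges, the second step is to argue that these estimates are faithful enough to separate $C$ from $P$. Here I would invoke the concentration guarantee of \lemref{lem:our_concent_prob}, which shows that each $\tau_u$ is tightly concentrated around its true value, so that the relative vertical position of a node in the scatter plot reflects its true triangle density rather than sampling noise.

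The final and decisive step is to remove any dependence on the estimates for the structural claim itself. The candidate set $C \cup P$ is small, so I would extract its induced subgraph directly from the original data, reorder the rows and columns to group $C$ before $P$, and draw the resulting spyplot as in \figref{fig:webgraph_b}. Establishing the observation then amounts to checking three blocks of this adjacency matrix: the $C \times C$ block is almost fully filled (near-clique), the $P \times P$ block is nearly empty (internally sparse), and the $C \times P$ off-diagonal blocks are dense (tight connection between the two groups) --- precisely the signature of a core-periphery structure.

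The main obstacle is the boundary between $C$ and $P$: because the two groups overlap in degree and the estimates are noisy at the $1\%$ memory budget, there is no sharp threshold that cleanly separates them in the scatter plot. I would handle this by treating the scatter plot only as a device for nominating a candidate set, and by letting the spyplot on the \emph{exact} induced subgraph deliver the actual evidence, so that the quantitative claims ``near-clique'' and ``internally sparse'' are verified on ground-truth adjacency rather than on sampled estimates.
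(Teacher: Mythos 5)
Your proposal is correct and follows essentially the same route as the paper: nominate the core group C and its neighbor group P from the \methodcx degree-vs-triangle scatter plot, then verify the structural claims on the exact data and display them with a spyplot of the induced subgraph on $C\cup P$. The paper does exactly this, quantifying the verification step on ground truth (2463 core nodes with only 225 missing edges, hence a near-clique, and 6116 neighbors forming the loosely connected periphery), so your insistence that the decisive evidence come from the exact adjacency rather than the sampled estimates matches the paper's own practice.
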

}

\mchng{
\figref{fig:webgraph_a} shows two groups ``C" and ``P" which are colored in red and green, respectively.
Especially, the group C appears to be clearly separated from the rest of the graph in the figure.
We find that the nodes in C are connected tightly. The number of the nodes in C is 2463 and their adjacency matrix has 2913 zeros. It means that there are only 225 missing edges: note that \webgraph has no self-loop.
We also examine $6116$ neighbors of the nodes in C.
Most of them are tightly connected with the nodes in C and we denote this neighbor group by ``P".
\chng{The nodes in P are loosely connected with each other. }
To sum up, the nodes in C and P respectively form a near-clique and a sparse graph, and the nodes in P are tightly connected with the nodes in C, forming a core-periphery structure~\cite{KangMF14}.
\figref{fig:webgraph_b} is a spyplot of the nodes in C and P and shows they form a core-periphery.
}

\section{Conclusion}
\label{sec:conclusion}
In this paper, we propose \method, an accurate algorithm for local triangle estimation in simple and multigraph streams.
\method handles graph streams of any sizes since it guarantees that the memory usage is fixed, and achieves high accuracy by reducing the variance of estimation.
Our experimental results demonstrate that \methodcx, the simple graph stream version of \method, provides the best accuracy compared to the state-of-the-art algorithm.
Furthermore, we show that \methodmx, the multigraph stream version of \method, performs well for graph streams with duplicate edges. 
Using \method, we discover abnormal egonetworks like a near clique/star in a user communication network, and a core-periphery structure in the Web.

\bibliography{BIB/yongsub, BIB/ukang}
\bibliographystyle{plain}

\appendix
%

\section{Additional Analysis of \methodcxrm}
\subsection{Lemma~\ref{lem:our_expect}}
Remind the following definitions:
\begin{itemize*}
	\item $W(b_\lambda) = (1-\delta)\delta^{B-b_\lambda}$.
	\item $\phi(b_\lambda) = \delta^{B-b_\lambda+1}$.
\end{itemize*}

The equality examined here for $i\geq 1$ is as follows:
\begin{align*}
\expect{Y_\lambda} &= q_\lambda^{-1} \left(q_\lambda\sum_{b_\lambda\leq j\leq B} W(j)\right) \\
&= \sum_{b_\lambda\leq j\leq B} W(j) \\
&= (1-\delta)\delta^{B-b_\lambda} + (1-\delta)\delta^{B-b_\lambda-1} + \cdots + (1-\delta)
\end{align*}

The last expression is the sum of the following geometric sequence from $n=1$ to $n=B-b_\lambda+1$:
$$
a_n = (1-\delta)\delta^{n-1}.
$$

The sum of the sequence is given by
$$
\frac{(1-\delta)\left(1 - \delta^{n}\right)}{1-\delta}.
$$

Putting $n=B-b_\lambda+1$, by definition, we obtain
$$
1 - \delta^{B-b_\lambda+1} = 1 - \phi(b_\lambda).
$$


\subsection{Lemma~\ref{lem:our_concent_prob}}
\label{sec:appx_analysis}

Let $\mu_Y$ and $\sigma_Y$ be the expectation and the variance of \methodcx, respectively.
Also, let $\mu_X$ and $\sigma_X$ be the expectation and the variance of \methodc, respectively.
We know the followings:
\begin{align}
	\mu_Y &= \psi \\
	\sigma_Y^2 &= \psi^2\left(q-1\right) \\
	\mu_X &= 1 \\
	\sigma_X^2
	 &= \left(q-1\right),
\end{align}
where $\psi = 1-\phi(b)$, and $\phi(b) = \delta^{B-i+1}$. Here, we omit the subscript to denote a certain triangle for simplicity. Using the Chebyshev inequality, we obtain
\begin{align}
	\prob{|Y-\mu_Y| \geq \sigma_Y^2} \leq \frac{1}{\sigma_Y^2},
\end{align}
and thus,
\begin{align}
	\prob{|Y-\mu_X| \geq \mu_X-\mu_Y+\sigma_Y^2} &\leq \frac{1}{\sigma_Y^2} \\
	\prob{|Y-\mu_X| < \mu_X-\mu_Y+\sigma_Y^2} &\geq  1 - \frac{1}{\sigma_Y^2}, 
\end{align}
since $\mu_Y \leq \mu_X$. Similarly, we also derive the following inequality for $X$:
\begin{align}
	\prob{|X-\mu_X| \geq \mu_X-\mu_Y+\sigma_Y^2} &\leq \frac{\sigma_X^2}{(\mu_X-\mu_Y+\sigma_Y^2)^2} \\
	\prob{|X-\mu_X| < \mu_X-\mu_Y+\sigma_Y^2} &\geq 1 - \frac{\sigma_X^2}{(\mu_X-\mu_Y+\sigma_Y^2)^2}.
\end{align}

We want the following inequality holds:

\begin{align}
	1 - \frac{1}{\sigma_Y^2} &> 1 - \frac{\sigma_X^2}{(\mu_X-\mu_Y+\sigma_Y^2)^2} \\
	\frac{1}{\sigma_Y^2} &< \frac{\sigma_X^2}{(\mu_X-\mu_Y+\sigma_Y^2)^2} \\
	1 &< \frac{\sigma_Y^2 \sigma_X^2}{(\mu_X-\mu_Y+\sigma_Y^2)^2} \\
	1 &< \frac{\psi^2(q-1)^2}{(1-\psi + \psi^2 (q-1))^2} \\
	1 &< \frac{\psi (q-1) }{1-\psi + \psi^2 (q-1)} \\
	1 &> \frac{1-\psi + \psi^2 (q-1)}{\psi (q-1) } \\
	1 &> \frac{1}{\psi (q-1)} - \frac{\psi}{\psi (q-1)} + \frac{\psi^2 (q-1)}{\psi (q-1)}\\
	1 &> \frac{1}{\psi (q-1)} - \frac{1}{(q-1)} + \psi \\
	1 &> \frac{1}{q-1}\left(\frac{1}{\psi} - 1\right) + \psi \\
	1-\psi &> \frac{1}{q-1}\left(\frac{1}{\psi} - 1\right) \\
	\phi &> \frac{1}{q-1}\left(\frac{1-\psi}{\psi}\right) \\
	\phi &> \frac{1}{q-1}\left(\frac{\phi}{1-\phi}\right) \\
	1-\phi &> \frac{1}{q-1} \label{eq:appx1} \\
	\frac{1}{1-\phi} &< q - 1 \\
	\frac{2-\phi}{1-\phi} &< q \\
	\frac{2-\delta^{B-b_\lambda+1}}{1-\delta^{B-b_\lambda+1}} &< \frac{(T_\lambda-1)(T_\lambda-2)}{M(M-1)}
\end{align}

If we set $\delta=0.5$, 
this lemma states that every triangle appearing after $\approx 1.7M$ results in concentrated estimation.

\end{document}